\newtheoremstyle{mythm}{}{}{\itshape}{}{\bfseries}{.}{.5em}{\thmname{#1}~\thmnumber{#2}\ifthenelse{\equal{\thmnote{#3}}{}}{}{~(\thmnote{#3})}}
\newtheoremstyle{mydefn}{}{}{\upshape}{}{\bfseries}{.}{.5em}{\thmname{#1}~\thmnumber{#2}\ifthenelse{\equal{\thmnote{#3}}{}}{}{~(\thmnote{#3})}}
\newtheoremstyle{myremark}{}{}{\upshape}{}{\itshape}{.}{.5em}{\thmname{#1}~\thmnumber{#2}\ifthenelse{\equal{\thmnote{#3}}{}}{}{~(\thmnote{#3})}}
\theoremstyle{mythm}
\newtheorem{theorem}{Theorem}[section]
\newtheorem{lemma}[theorem]{Lemma}
\newtheorem{corollary}[theorem]{Corollary}
\theoremstyle{mydefn}
\newtheorem{definition}[theorem]{Definition}
\newtheorem{example}[theorem]{Example}
\theoremstyle{myremark}
\newtheorem{remark}[theorem]{Remark}
\theoremstyle{mythm}
\newcommand*{\transpose}{{^{\mathpalette\@transpose{}}}}
\newcommand*{\@transpose}[2]{\raisebox{\depth}{$\m@th#1\intercal$}}
\newcommand{\bN}{{{\mathbb N}}}
\newcommand{\bZ}{{{\mathbb Z}}}
\newcommand{\cA}{{\ensuremath{\mathcal{A}}}}
\newcommand{\cF}{{\ensuremath{\mathcal{F}}}}
\newcommand{\cT}{{\ensuremath{\mathcal{T}}}}
\newcommand{\cB}{{\ensuremath{\mathcal{B}}}}
 \DeclareMathOperator{\dist}{dist}
\newcommand{\sem}[1]{\llbracket #1 \rrbracket}
 \newcommand{\quants}[1]{\ensuremath{\mathbb{#1}}}
 \newcommand{\Ps}{\quants{P}}
 \DeclareMathOperator{\sph}{sph}
 \newcommand{\ov}[1]{\ensuremath{\bar{#1}}}
 \newcommand*{\size}[1]{\ensuremath{|\!|#1|\!|}}
 \newcommand{\True}{\ensuremath{\textit{true}}}
 \newcommand{\False}{\ensuremath{\textit{false}}}
 \DeclareMathOperator{\ar}{ar}
 \newcommand*{\free}{\textup{free}}
 \DeclareMathOperator{\countr}{d_{\#}}
\newcommand{\nc}[1]{\newcommand{#1}}
\newcommand{\rnc}[1]{\renewcommand{#1}}
\nc{\true}{\True}
\nc{\false}{\False}
\nc{\deff}{:=}
\rnc{\leq}{\leqslant}
\rnc{\geq}{\geqslant}
\rnc{\le}{\leqslant}
\rnc{\ge}{\geqslant}
\rnc{\phi}{\varphi}
\nc{\NN}{\bN}
\nc{\NNpos}{\ensuremath{\NN_{\geq 1}}}
\nc{\ZZ}{\bZ}
\nc{\Structure}[1]{\ensuremath{\mathcal{#1}}}
\nc{\A}{\cA}
\nc{\B}{\cB}
\nc{\F}{\cF}
\nc{\T}{\cT}
\nc{\I}{\Structure{I}}
\nc{\isom}{\ensuremath{\cong}}
\nc{\isomorph}{\isom}
\nc{\equivd}{\ensuremath{\equiv_d}}
\nc{\bigoh}{\mathcal{O}}
\nc{\bigOh}{\bigoh}
\nc{\littleoh}{o}
\nc{\littleOh}{\littleoh}
\nc{\set}[1]{\ensuremath{\{ #1 \}}}
\nc{\setc}[2]{\set{#1 : #2}}
\nc{\bigset}[1]{\ensuremath{\big\{ #1 \big\}}}
\nc{\bigsetc}[2]{\bigset{#1 : #2}}
\nc{\setsize}[1]{\ensuremath{|#1|}}
\nc{\bigsetsize}[1]{\ensuremath{\big|#1\big|}}
\nc{\Setsize}[1]{\bigsetsize{#1}}
\nc{\sphere}[2]{\ensuremath{\sph_{#1}(#2)}}
\nc{\und}{\ensuremath{\wedge}}
\nc{\Und}{\ensuremath{\bigwedge}}
\nc{\oder}{\ensuremath{\vee}}
\nc{\Oder}{\ensuremath{\bigvee}}
\nc{\nicht}{\ensuremath{\neg}}
\nc{\impl}{\ensuremath{\to}}
\nc{\gdw}{\ensuremath{\leftrightarrow}}
\nc{\MSO}{\ensuremath{\textup{MSO}}}
\nc{\FO}{\ensuremath{\textup{FO}}}
\nc{\FOC}{\ensuremath{\textup{FOC}}}
\nc{\FOunC}{\ensuremath{\FOC_1}}
\nc{\FOCN}{\ensuremath{\textup{FOCN}}}
\nc{\FOCP}{\ensuremath{\FOC(\Ps)}}
\nc{\FOunCP}{\ensuremath{\FOunC(\Ps)}}
\nc{\FOCNP}{\ensuremath{\FOCN(\Ps)}}
\nc{\VARS}{\ensuremath{\textsf{\upshape vars}}}
\nc{\NVARS}{\ensuremath{\textsf{\upshape nvars}}}
\nc{\Count}[2]{\ensuremath{\# {#1}.{#2}}}
\nc{\quant}[1]{\ensuremath{\textsf{\upshape #1}}}
\rnc{\P}{\quant{P}}
\nc{\query}[2]{\ensuremath{\texttt{\{} \, #1\, : \, #2 \,\texttt{\}}}}
\nc{\emptyword}{\ensuremath{\varepsilon}}
\nc{\emptytuple}{\ensuremath{()}}
\rnc{\S}{\ensuremath{\mathcal{S}}} 
\nc{\neighb}[3]{\ensuremath{N_{#1}^{#2}(#3)}}
\nc{\nbset}[2]{\ensuremath{N_{#1}^{#2}}}
\nc{\nrA}[1]{\ensuremath{\neighb{r}{\A}{#1}}}
\nc{\nrAStrich}[1]{\ensuremath{\neighb{r}{\B}{#1}}}
\nc{\nrT}[1]{\ensuremath{\neighb{r}{\T}{#1}}}
\nc{\nRT}[1]{\ensuremath{\neighb{R}{\T}{#1}}}
\nc{\Neighb}[3]{\ensuremath{\mathcal{N}_{#1}^{#2}(#3)}} 
\nc{\NrA}[1]{\ensuremath{\Neighb{r}{\A}{#1}}} 
\nc{\NRA}[1]{\ensuremath{\Neighb{R}{\A}{#1}}} 
\nc{\NrB}[1]{\ensuremath{\Neighb{r}{\B}{#1}}} 
\nc{\NRB}[1]{\ensuremath{\Neighb{R}{\B}{#1}}} 
\nc{\NrDStrich}[1]{\ensuremath{\Neighb{r}{\B}{#1}}} 
\nc{\NrT}[1]{\ensuremath{\Neighb{r}{\T}{#1}}} 
\nc{\Types}[3]{\ensuremath{\mathcal{T}_{#1}^{\sigma,#2}(#3)}}
\nc{\Typesrdk}{\Types{r}{d}{k}}
\nc{\Typesrd}[1]{\Types{r}{d}{#1}}
\nc{\Typesd}[2]{\Types{#1}{d}{#2}}
\nc{\isotypes}[2]{\ensuremath{\mathfrak{T}_{#1}^{#2}}}
\nc{\Typeslist}[3]{\ensuremath{\mathcal{L}_{#1}^{\sigma,#2}(#3)}}
\nc{\Typeslistrdk}{\Typeslist{r}{d}{k}}
\nc{\Typeslistrd}[1]{\Typeslist{r}{d}{#1}}
\nc{\Typeslistd}[2]{\Typeslist{#1}{d}{#2}}
\nc{\TypeslistRdk}{\Typeslist{R}{d}{k}}
\nc{\TypeslistRhatdk}{\Typeslist{\hat{R}}{d}{k}}
\nc{\TypeslistRd}[1]{\Typeslist{R}{d}{#1}}
\nc{\TypeslistRhatd}[1]{\Typeslist{\hat{R}}{d}{#1}}
\nc{\TypeslistRStrichdk}{\Typeslist{R'}{d}{k}}
\nc{\TypeslistRStrichd}[1]{\Typeslist{R'}{d}{#1}}
\nc{\type}{\ensuremath{\tau}}
\nc{\inducedSubStr}[2]{\ensuremath{#1[#2]}}
\nc{\mytime}{\ensuremath{\textit{time}^{\sigma,d}_{n,k,r}}}
\nc{\Class}{\ensuremath{\mathcal{C}}}
\nc{\GraphclassH}{\ensuremath{\mathcal{H}}}
\nc{\GraphclassG}{\ensuremath{\mathcal{G}}}
\nc{\Graphclass}{\ensuremath{\mathcal{G}}}
\nc{\ComplexityClassFont}[1]{\ensuremath{\textsc{#1}}}
\nc{\ACzero}{\ensuremath{\ComplexityClasFont{AC}^0}}
\nc{\PTIME}{\ensuremath{\ComplexityClassFont{Ptime}}}
\nc{\NP}{\ensuremath{\ComplexityClassFont{NP}}}
\nc{\PSPACE}{\ensuremath{\ComplexityClassFont{Pspace}}}
\nc{\FPL}{\ensuremath{\ComplexityClassFont{fpl}}}
\nc{\FPT}{\ensuremath{\ComplexityClassFont{fpt}}}
\nc{\AWstern}{\ensuremath{\ComplexityClassFont{AW}[*]}}
\nc{\Wone}{\ensuremath{\ComplexityClassFont{W}[1]}}
\nc{\sharpP}{\ensuremath{\#\ComplexityClassFont{P}}}
\nc{\sharpWone}{\ensuremath{\#\ComplexityClassFont{W}[1]}}
\nc{\lrounds}{\ensuremath{\rho}}
\newcommand{\cen}{\operatorname{cen}}
\newcommand{\cG}{{\mathcal G}}
\newcommand{\cX}{{\mathcal X}}
\newcounter{claimcounter}
\newcommand{\NewText}[1]{{#1}}
\newenvironment{NewTextBlock}{}
\newcommand{\FOplus}{\ensuremath{\FO^+}}
\newcommand{\myrho}{\delta}
\newcommand{\myp}{\ensuremath{\ell}}
\newcommand{\rHalbe}{\ensuremath{\frac{r}{2}}}
\begin{document}

\title{First-Order Query Evaluation with Cardinality Conditions}
\author[1]{Martin Grohe}
\author[2]{Nicole Schweikardt\thanks{Funded by the Deutsche
                  Forschungsgemeinschaft (DFG, German Research Foundation) -- SCHW 837/5-1.}}
\affil[1]{RWTH Aachen University}
\affil[2]{Humboldt-Universit\"at zu Berlin}

\maketitle

\begin{abstract}
We study an extension of first-order logic that allows to express
cardinality conditions in a similar way as SQL's COUNT operator. The
corresponding logic $\FOCP$ was introduced by Kuske and Schweikardt
\cite{KS_LICS17}, who showed that query evaluation for this logic is
fixed-parameter tractable on classes of databases of bounded degree.

In the present paper, we first show that the fixed-parame\-ter
tractability of $\FOCP$ cannot even be generalised to very simple
classes of databases of unbounded degree such as unranked trees or
strings with a linear order relation. 

Then we identify a fragment $\FOunCP$ of $\FOCP$ which is still
sufficiently strong to express standard applications of SQL's COUNT
operator. 
Our main result shows that query evaluation for $\FOunCP$ is
fixed-parameter tractable on nowhere dense classes of databases. 
\end{abstract}

\section{Introduction}\label{sec:intro}

Query evaluation is one of the most fundamental tasks of a database
system.
A large amount of the literature in database theory and the related field
of finite or algorithmic model theory is devoted to 
designing efficient query evaluation algorithms and to
pinpointing the exact computational complexity of the task.
The query languages that have received the most attention 
are the conjunctive queries and the more expressive relational calculus.
The latter is usually viewed as the ``logical core'' of SQL, and is
equivalent to first-order logic $\FO$.
Here, one identifies a database schema and a relational database of
that schema with a relational signature
$\sigma$ and a finite
$\sigma$-structure $\A$.

Apart from computing the entire query result, 
the query evaluation tasks usually studied are
\emph{model-check\-ing} (check if the answer $q(\A)$ of a Boolean query
$q$ on a database $\A$ is ``yes'') and
\emph{counting} (compute the number $|q(\A)|$ of tuples that belong to
the result $q(\A)$ of a non-Boolean query $q$ on a database $\A$);
the counting problem is also relevant as the basis of computing
probabilities. Such a task is regarded to be tractable for a query language
$\textup{L}$ on a class $\Class$ of databases if it can be solved in
time $f(k){\cdot}n^{c}$ for an arbitrary function $f$ and a constant
$c$, where $k$ is the size of the input query $q\in\textup{L}$ and $n$ the size of
the input database $\A\in\Class$.
The task then is called \emph{fixed-parameter tractable}
(fpt, or ``in $\FPT$''), and \emph{fixed-parameter linear} (fpl, or ``in
$\FPL$'') in case that $c=1$.

It is known that on unrestricted data\-bases model-checking 
is $\Wone$-hard
for conjunctive queries \cite{DBLP:journals/jcss/PapadimitriouY99},
and the counting problem is
$\sharpWone$-hard
already for
acyclic conjunctive queries \cite{DBLP:conf/icdt/DurandM13}.
This means that under reasonable complexity theoretic assumptions,
both problems are unlikely to be in $\FPT$.

A long line of research has focused on identifying restricted classes of
databases on which query evaluation is fixed-parameter
tractable for conjunctive queries, $\FO$, or
extensions of $\FO$.
For example, model-checking and counting for $\FO$ (even, for monadic
second-order logic) is 
in $\FPL$ on classes of bounded tree-width
\cite{DBLP:journals/iandc/Courcelle90, DBLP:journals/jal/ArnborgLS91}.
Model-checking and counting for $\FO$ are 
in $\FPL$ on classes of bounded degree \cite{See96, fri04}, 
in $\FPL$ on planar graphs and in $\FPT$ on classes of bounded local tree-width
\cite{FriG01,fri04}, and
in $\FPL$ on classes of bounded expansion
\cite{DBLP:conf/focs/DvorakKT10,kazseg13}.

Grohe, Kreutzer, and Siebertz
\cite{grokresie14} recently provided an $\FPT$ model-checking
algorithm for $\FO$ on classes of databases that are effectively
  nowhere dense.
This gives a fairly complete characterisation of the tractability
frontier for $\FO$ model-checking, as it is known that under
reasonable complexity theoretic assumptions, any subgraph-closed class
that admits an $\FPT$-algorithm for $\FO$ model-checking has to be
nowhere dense \cite{Kreutzer11-AMT-FAMT,DBLP:conf/focs/DvorakKT10}.
The notion of nowhere dense classes was introduced by {Ne\v{s}et{\v r}il} and
Ossona de Mendez
\cite{DBLP:journals/jsyml/NesetrilM10} as a formalisation of classes
of ``sparse'' graphs. The precise definition of this notion will be
relevant in this paper only in Section~\ref{sec:nd}; for now it should
suffice to note that the notion is fairly general, subsumes
all classes of databases mentioned above, and there
exist nowhere dense classes that do not belong to any of those classes.

The counting problem on nowhere dense classes is known to be
in $\FPT$ for purely existential $\FO$ \cite{nesoss12}, but 
no extension to full $\FO$ is known \cite{DBLP:conf/icdt/SegoufinV17}.
Here, we obtain this extension as an immediate consequence of our
technical main result. 
We study an extension of $\FO$ that allows to express
cardinality conditions in a similar way as SQL's COUNT operator. The
corresponding logic $\FOCP$ was introduced by Kuske and Schweikardt
\cite{KS_LICS17}, who showed that 
model-checking and counting for this logic is
fixed-parameter linear on classes of databases of bounded degree.
The starting point for the work presented in this paper was the question
whether this result can be extended to other ``well-behaved'' classes
of databases, such as the classes mentioned above.

Our first result is that the fixed-parameter
tractability of $\FOCP$ cannot even be generalised to very simple
classes of databases of unbounded degree such as unranked trees or
strings with a linear order relation. 
Then, we identify a fragment $\FOunCP$ of $\FOCP$ which still extends
$\FO$ and is
sufficiently strong to express standard applications of SQL's COUNT
operator. 
Our main result shows that model-checking and counting for $\FOunCP$ is
in $\FPT$ on nowhere dense classes of databases. 
More precisely, for any effectively nowhere dense class $\Class$ of
databases
we present an algorithm that
solves the model-checking problem and the counting problem in time 
$f(k,\epsilon){\cdot}n^{1+\epsilon}$ for a computable function $f$ and
any $\epsilon>0$, where $k$ is the size of the input query
$q\in\FOunC(\Ps)$ and $n$ is the size of the input database $\A\in\Class$.
Algorithms with such performance bounds are often called
\emph{fixed-parameter almost linear}.
This generalises the result of \cite{grokresie14} from $\FO$ to
$\FOunCP$ and solves not only the model-checking but also the counting
problem.

Our proof proceeds as follows. First, we reduce the query
evaluation problem for $\FOunCP$ to 
the counting problem for rather restricted $\FO$-formulas
(Section~\ref{sec:normalform}).
Combining this with the results on $\FO$-counting mentioned above,
we immediately obtain an $\FPT$-algorithm for
$\FOunCP$ on planar graphs and 
classes of bounded local tree-width \cite{fri04},  
of bounded expansion \cite{kazseg13}, 
and of locally bounded expansion \cite{DBLP:conf/icdt/SegoufinV17}.
For nowhere dense classes, though, it is not so
easy to generalise the $\FO$ model-checking 
algorithm of \cite{grokresie14} to solve the counting problem.
For this, we
\NewText{use a recently obtained ``rank-preserving Gaifman normal form
theorem'' of \cite{GS2026}
and transfer it to counting terms (Section~\ref{sec:Locality}),
}which then enables
us to lift the model-checking algorithm of \cite{grokresie14} to an
algorithm for the counting problem (Section~\ref{sec:nd}).

The rest of the paper is structured as follows.
Section~\ref{section:BasicNotation} provides basic notations, 
Section~\ref{section:FOC-Definition} recalls the definition of $\FOCP$
of \cite{KS_LICS17},
Section~\ref{section:FOC-QueryEval} provides the hardness results for
$\FOCP$ on unranked trees and strings with a linear order,
Section~\ref{sec:FOunC} introduces $\FOunCP$ and gives a precise
formulation of our main result, and
Section~\ref{sec:conclusion} points out directions for future work.
\enlargethispage{\baselineskip}

\NewText{\paragraph{Acknowledgements}
We thank Charlotte Lenz for bringing to our
attention that Section~7 of this paper's previous versions
\cite{GSarxiv2017,GS2018} contained a serious flaw in the
proof of \cite[Theorem~7.1]{GSarxiv2017,GS2018}.

\paragraph{Changelog}
This paper is the new, corrected version of
\cite{GSarxiv2017}. Compared to the previous version, we have
significantly changed the content of Section~\ref{subsec:rploc}: instead of the
previously used notion of ``formulas having $q$-rank at most $\ell$'' of
\cite{grokresie14,GSarxiv2017,GS2018}, we now use a new notion of
``formulas in $\FOplus[\ell,q]$'' (we could say that such formulas
``have modified $q$-rank at most $\ell$''). And we use a variant of 
a ``rank-preserving Gaifman normal form theorem'' of \cite{GS2026},
which enables us to decompose formulas in $\FOplus[\ell,q]$ into a
Boolean combination of 
sentences and of $r$-local formulas in $\FOplus[\ell,q]$, where $r$ is
a number that only depends on $\ell$ and $q$.
This allows us to formulate a corrected
version of \cite[Theorem~7.1]{GSarxiv2017,GS2018}, which is now named
Corollary~\ref{cor:locality2}. We formulated this result in such a
way that only minor changes were necessary in the remaining parts of
Section~\ref{sec:Locality} and in Section~\ref{sec:nd} in order to
obtain correct statements and proofs.\footnote{\NewText{We are aware of the fact that the proof can further be simplified considerably by utilising
  that the formulas $\psi^i_{G,I}$ provided by
  Corollary~\ref{cor:locality2} are \emph{$r$-local}: this, in prinicple, allows
  to completely drop the notion of \emph{cover terms}. We plan to
  implement this further simplification in the upcoming journal
  version of the paper.
}}}

\section{Basic notation}\label{section:BasicNotation}

We write 
$\bZ$, $\bN$, and $\bN_{\ge1}$ for the sets of integers, non-negative integers, and
positive integers, resp.
For all $m,n\in\bN$, we write $[m,n]$ for the set
$\setc{k\in\bN}{m\le k\le n}$, and we let $[m]=[1,m]$.
For a $k$-tuple $\ov{x}=(x_1,\ldots,x_k)$ we write $|\ov{x}|$ to
denote its \emph{arity} $k$.
By $\emptytuple$ we denote the empty tuple, i.e., the tuple of arity 0.

A \emph{signature} $\sigma$ is a finite set of relation symbols.
Associated with every relation symbol $R\in\sigma$ is a
non-negative integer $\ar(R)$ called the \emph{arity} of $R$.  The
\emph{size} $\size{\sigma}$ of a signature $\sigma$ is 
the sum of the arities of its relation symbols.  
A \emph{$\sigma$-structure} $\cA$ consists of a finite
non-empty
set $A$ called the \emph{universe} of $\cA$, and a relation
$R^{\cA} \subseteq A^{\ar(R)}$ for each relation symbol $R \in \sigma$.
Note that according to these definitions, all signatures and all
structures considered in this paper are \emph{finite}, signatures are
\emph{relational} (i.e., they do not contain constants or function
symbols), and signatures
may contain relation symbols of arity 0. Note that there are only two 0-ary
relations over a set $A$, namely $\emptyset$ and
$\set{\emptytuple}$.

We write $\cA\cong\cB$
to indicate
that two $\sigma$-structures $\cA$ and $\cB$ are isomorphic.
A $\sigma$-structure $\B$ is the \emph{disjoint union} of two
$\sigma$-structures $\A_1$ and $\A_2$ if $B=A_1\cup A_2$, $A_1\cap
A_2=\emptyset$, and $R^{\B}=R^{\A_1}\cup R^{\A_2}$ for all
$R\in\sigma$.

Let $\sigma'$ be a signature with $\sigma'\supseteq\sigma$.
A \emph{$\sigma'$-expansion} of a $\sigma$-structure $\A$ is a
$\sigma'$-structure $\B$ such that $B=A$ and $R^{\B}=R^{\A}$ for every
$R\in\sigma$.
Conversely, if $\B$ is a $\sigma'$-expansion of 
$\A$, then $\A$ is called the
\emph{$\sigma$-reduct} of $\B$.

A \emph{substructure} of a $\sigma$-structure $\A$ is a
$\sigma$-structure $\B$ with universe $B\subseteq A$ and
$R^{\B}\subseteq R^\A$ for all $R\in\sigma$.
For a $\sigma$-structure $\A$ and a non-empty set
$B\subseteq A$, we write $\inducedSubStr{\A}{B}$ to denote the
\emph{induced substructure} of $\A$ on $B$,
i.e., the $\sigma$-structure with universe $B$, where
$R^{\inducedSubStr{\A}{B}} = R^\A \cap B^{\ar(R)}$ for all 
$R\in\sigma$.

Throughout this paper, when speaking of \emph{graphs} we mean
undirected graphs.
The \emph{Gaifman graph} $G_\A$ of a
$\sigma$-structure $\A$ is the graph with vertex set $A$
and an edge between two distinct vertices $a,b\in A$ iff there exists
$R\in\sigma$ and a tuple $(a_1,\ldots,a_{\ar(R)})\in R^{\A}$ such
that $a,b\in\set{a_1,\ldots,a_{\ar(R)}}$.  The structure $\A$ is
called \emph{connected} if its Gaifman graph $G_{\A}$ is connected;
the \emph{connected components} of $\A$ are the connected components
of $G_{\A}$.  

The \emph{distance} $\dist^\A(a,b)$
between two elements $a,b\in A$ is the minimal number of edges of a path from
$a$ to $b$ in $G_\cA$; if no such path
exists, we let $\dist^\cA(a,b)\deff\infty$. 
For a tuple $\ov{a}=(a_1,\ldots,a_k)\in A^k$ and an element $b\in A$ we let
$\dist^{\A}(\ov{a},b)\deff \min_{i\in[k]}\dist(a_i,b)$.
For every $r \ge 0$,
the \emph{$r$-ball of $\ov{a}$ in $\A$} is the set
\ $ N_r^\cA(\ov{a}) \ = \ \setc{b\in A\,}{\,\dist^\cA(\ov{a},b)\le r}$.
The \emph{$r$-neighbourhood of $\ov{a}$ in $\cA$} is defined as 
\ $\Neighb{r}{\cA}{\ov{a}}  \deff  \inducedSubStr{\cA}{\neighb{r}{\cA}{\ov{a}}}$\,.

Let $\VARS$ be a fixed countably infinite set
of \emph{variables}.
A \emph{$\sigma$-interpretation} $\I=(\A,\beta)$ consists of a
$\sigma$-structure $\A$ and an \emph{assignment $\beta$ in $\A$}, i.e.,
$\beta\colon\VARS\to A$.
For $k\in\NN$, for $a_1,\ldots,a_k\in A$,
and for pairwise distinct
$y_1,\ldots,y_k\in\VARS$,
we write
$\beta\frac{a_1,\ldots,a_k}{y_1,\ldots,y_k}$
for the assignment $\beta'$ in $\A$ with $\beta'(y_j)=a_j$ for all
$j\in [k]$, and $\beta'(z)=\beta(z)$ for all
$z\in\VARS\setminus\set{y_1,\ldots,y_k}$.
For $\I=(\A,\beta)$ we let
$\I\frac{a_1,\ldots,a_k}{y_1,\ldots,y_k}
\ = \
\big(\A,\beta\frac{a_1,\ldots,a_k}{y_1,\ldots,y_k}\big)$.

The \emph{order} of a $\sigma$-structure $\cA$ is $|A|$, and the
\emph{size} of $\cA$ is
\ \(
\|\cA\| := |A|+\sum_{R\in\sigma}|R^{\cA}|.
\) \
For a graph $G$ we write $V(G)$ and $E(G)$ to denote its vertex set
and edge set, respectively. Sometimes, we will shortly write $ij$ (or
$ji$) to denote an edge $\set{i,j}$ between the vertices $i$ and $j$.
The \emph{size} of $G$ is $\|G\|:=|V(G)|+|E(G)|$. Note that
up to a constant factor depending on the signature, a structure has
the same size as its Gaifman graph.

\section{Syntax and semantics of $\FOC(\Ps)$}\label{section:FOC-Definition}

In \cite{KS_LICS17}, Kuske and Schweikardt introduced the following
logic $\FOC(\Ps)$ and provided an according notion of Hanf normal form,
which was utilised to obtain efficient algorithms for evaluating
$\FOC(\Ps)$-queries on classes of structures of bounded degree.
The syntax and semantics of $\FOC(\Ps)$ is defined as follows (the
text is taken almost verbatim from \cite{KS_LICS17}).

A \emph{numerical predicate collection} is a triple
$(\Ps,\ar,\sem{.})$ where $\Ps$ is a countable set of
\emph{predicate names}, $\ar\colon\Ps\to\bN_{\ge1}$ assigns the
\emph{arity} to every predicate name, and
$\sem{\P}\subseteq\ZZ^{\ar(\P)}$ is the \emph{semantics} of the
predicate name $\P\in\Ps$.  
Basic examples of numerical predicates are
$\P_{\geq 1}$, 
$\P_=$, $\P_{\leq}$, $\quant{Prime}$
with $\sem{\P_{\geq 1}}\deff\NNpos$,
$\sem{\P_=}\deff\setc{(m,m)}{ m\in\bZ}$, 
$\sem{\P_\leq}\deff\setc{(m,n)\in\bZ^2}{ m\leq n}$, 
$\sem{\quant{Prime}} \deff \setc{n\in\NN}{n\text{ is a prime number}}$.
For the remainder of this paper
let us fix an arbitrary numerical predicate collection
$(\Ps,\ar,\sem{.})$ that contains the predicate $\P_{\geq 1}$.

\begin{definition}[\mbox{$\FOC(\Ps)[\sigma]$}]\label{def:FOC}\label{def:FOCP}
Let $\sigma$ be a signature.
The set of \emph{formulas} and \emph{counting terms} for
\emph{$\FOC(\Ps)[\sigma]$} is built according to the following
rules:
\begin{enumerate}[(1)]
  \item\label{item:atomic} $x_1{=}x_2$ and $R(x_1,\ldots,x_{\ar(R)})$
    are \emph{formulas}, where $R\in\sigma$ and
    $x_1,x_2,\ldots,x_{\ar(R)}$ are variables\footnote{in particular,
      if $\ar(R)=0$, then $R()$ is a formula}
  \item\label{item:bool} if $\varphi$ and $\psi$ are formulas,
    then so are $\lnot\varphi$ and $(\varphi\lor\psi)$
  \item\label{item:exists} if $\varphi$ is a formula and
    $y\in\VARS$, then $\exists y\,\varphi$ is a \emph{formula}
  \item\label{item:Q} if $\P\in\Ps$, $m= \ar(\P)$, and
    $t_1,\ldots,t_m$ are counting terms, then
    $\P(t_1,\ldots,t_m)$ is a \emph{formula}
  \item\label{item:countterm} if $\varphi$ is a formula,
    $k\in\NN$, and $\ov{y}=(y_1,\ldots,y_k)$ is a tuple of $k$ pairwise
    distinct variables, then $\Count{\ov{y}}{\varphi}$ is a
    \emph{counting term}
  \item\label{item:constterm} every integer $i\in\ZZ$ is a
    \emph{counting term}
  \item\label{item:plustimesterm} if $t_1$ and $t_2$ are
    counting terms, then so are $(t_1+t_2)$ and
    $(t_1\cdot t_2)$
\end{enumerate}

Note that first-order logic $\FO[\sigma]$ is the
fragment of $\FOC(\Ps)[\sigma]$ built by using only
the rules \eqref{item:atomic}--\eqref{item:exists}.
Let $\I=(\cA,\beta)$ be a $\sigma$-interpretation. 
For every
formula or counting term $\xi$ of $\FOC(\Ps)[\sigma]$, the semantics
$\sem{\xi}^\I$ is defined as follows.

\begin{enumerate}[(1)]

\item 
 $\sem{x_1{=}x_2}^{\I}=1$ if $a_1{=}a_2$ and $\sem{x_1{=}x_2}^{\I}=0$
 otherwise; \ 
    $\sem{R(x_1,\ldots,x_{\ar(R)})}^{\I}=1$ if
    $(a_1,\ldots,a_{\ar(R)})\in R^\A$, and
    $\sem{R(x_1,\ldots,x_{\ar(R)})}^{\I}=0$ otherwise;
    \\
    where $a_j:=\beta(x_j)$ for $j\in\set{1,\ldots,\max\set{2,\ar(R)}}$

\item $\sem{\nicht\varphi}^{\I}=1-\sem{\phi}^{\I}$ and
    $\sem{(\phi\oder\psi)}^{\I}=
    \max\set{\sem{\phi}^{\I},\sem{\psi}^{\I}}$

\item
    $\sem{\exists
      y\,\varphi}^{\I}=\max\setc{\sem{\phi}^{\I\frac{a}{y}}}{a\in A}$

\item[(4)]
    $\sem{\P(t_1,\ldots,t_m)}^{\I}=1$ if
    $\big(\sem{t_1}^{\I},\ldots,\sem{t_m}^{\I}\big)\in\sem{\P}$, and
    $\sem{\P(t_1,\ldots,t_m)}^{\I}=0$ otherwise

\item[(5)]
    $\sem{\Count{\ov{y}}{\varphi}}^{\I}= \big| \bigsetc{
      (a_1,\ldots,a_k)\in A^k }{
      \sem{\phi}^{\I\frac{a_1,\ldots,a_k}{y_1,\ldots,y_k}}=1 } \big|$,
    \ where $\ov{y}=(y_1,\ldots,y_k)$ 

\item[(6)] $\sem{i}^{\I}=i$

\item[(7)] $\sem{(t_1 + t_2)}^{\I}= \sem{t_1}^{\I} + \sem{t_2}^{\I}$, \
    $\sem{(t_1 \cdot t_2)}^{\I}= \sem{t_1}^{\I} \cdot \sem{t_2}^{\I}$

\end{enumerate}

\end{definition}

By $\FOC(\Ps)$ we denote the union of all $\FOC(\Ps)[\sigma]$ for
arbitrary signatures~$\sigma$.
An \emph{expression} is a formula or a counting term.
 As usual, for a formula $\phi$ and a $\sigma$-interpretation
 $\I$ we will often write 
 $\I\models\phi$ to indicate that $\sem{\phi}^\I =1$. Accordingly,
 $\I\not\models\phi$ indicates that $\sem{\phi}^{\I}=0$. 
If $s$ and $t$ are counting terms, then we write $s-t$ for the
counting term $(s+((-1)\cdot t))$.

\begin{example}\label{example:FOC-formulas}
  The following $\FOC(\Ps)$-formula expresses that
  the sum of the numbers of nodes and edges of a directed graph is a prime:
  \[
     \quant{Prime}\,\big(\,(\,\Count{(x)}{x{=}x}\ + \ \Count{(x,y)}{E(x,y)}\,)\,\big)\,.
  \] 

  The counting term $t\deff\Count{(z)}{E(y,z)}$ denotes the out-degree of
  $y$.

  The $\FOC(\Ps)$-formula $\P_{\geq 1}(t)$ expresses that the
  out-degree of $y$ is $\geq 1$. For better readability of such
  formulas we will often write ``$t\geq 1$'' instead of ``$\P_{\geq
    1}(t)$''.

  The $\FOC(\Ps)$-formula
  \[
    \exists x\ \quant{Prime}\,
    \big(\,\Count{(y)}{\P_=\bigl(\,\Count{(z)}{E(x,z)},\;\Count{(z)}{E(y,z)}\,\bigr)}\,\big)
  \]
  expresses that there is some number $d$ (represented by a
  node $x$ of out-degree $d$) such that the number
  of nodes of out-degree $d$ is a prime.
\end{example}

The construct $\exists z$ binds the variable $z\in\VARS$, and
the construct $\#\ov{y}$ in a counting term binds the 
variables from the tuple~$\ov{y}$; all other occurrences of variables
are free. Formally, the set $\free(\xi)$ of \emph{free variables} of
an $\FOC(\Ps)$-expression
$\xi$ is defined inductively as follows:
\begin{enumerate}[(1)]
 \item
  $\free(x_1{=}x_2)=\set{x_1,x_2}$
  \ and \ 
  $\free(R(x_1,\ldots,x_{\ar(R)}))=\set{x_1,\ldots,x_{\ar(R)}}$
 \item
  $\free(\nicht\phi)=\free(\phi)$ \ and \ $\free((\phi\oder\psi))=\free(\phi)\cup\free(\psi)$
 \item
  $\free(\exists y\,\phi)=\free(\phi)\setminus\set{y}$
 \item
  $\free(\P(t_1,\ldots,t_m))=\free(t_1)\cup\cdots\cup\free(t_m)$
 \item
  $\free(\Count{(y_1,\dots,y_k)}{\phi}) = \free(\phi)\setminus\set{y_1,\ldots,y_k}$
 \item
  $\free(i)=\emptyset$ for $i\in\bZ$
 \item
  $\free((t_1+t_2)) \ = \ \free((t_1\cdot t_2)) \ = \ \free(t_1)\cup\free(t_2)$
\end{enumerate}
We will often write $\xi(\ov{z})$, for
$\ov{z} = (z_1,\ldots,z_n)$ with $n\ge 0$, to indicate that at most
the variables from $\set{z_1,\ldots,z_n}$ are free in the expression~$\xi$.
A \emph{sentence} is a formula without free variables, a
\emph{ground term} is a counting term without free variables.

Consider an $\FOC(\Ps)[\sigma]$-counting term
$t(\ov{x})$, for $\ov{x}=(x_1,\ldots,x_m)$.
For a
$\sigma$-structure $\A$ and a tuple $\ov{a} = (a_1,\ldots,a_m)\in A^m$,
we write
$t^{(\A,\ov{a})}$ or $t^\A[\ov{a}]$ for the integer
$\sem{t}^{(\A,\beta)}$, where $\beta$ is an assignment in $\A$ with
$\beta(x_j)=a_j$ for all $j\in [m]$.
For an $\FOC(\Ps)[\sigma]$-formula
$\phi(\ov{x})$ we write
$(\cA,\ov{a}) \models \varphi$ or
$\cA \models \varphi[\ov{a}]$ to indicate that
$\sem{\phi}^{(\A,\beta)}=1$.
In case that
$m=0$ (i.e., $\phi$ is a sentence and $t$ is a ground term), we
simply write $t^{\A}$ instead of $t^{\A}[\ov{a}]$, and we write
$\A\models \phi$ instead of $\A\models\phi[\ov{a}]$.

Two formulas or two counting terms $\xi$ and $\xi'$ are \emph{equivalent} (for short,
$\xi\equiv\xi'$), if $\sem{\xi}^{\I}=\sem{\xi'}^{\I}$ for every
$\sigma$-interpretation~$\I$.
The size $\size{\xi}$ of an expression is its
length when viewed as a word over the alphabet
$\sigma \cup \VARS \cup \Ps \cup \set{,} \cup
\set{=,\allowbreak \exists,\neg,\lor, (, )}\cup\set{\#,.}$.

\section{The hardness of evaluating $\FOC(\Ps)$-queries}
\label{section:FOC-QueryEval}

In \cite{KS_LICS17} it was shown that on classes of structures of
bounded degree, $\FOC(\Ps)$-query evaluation is fixed-para\-meter
linear (when using oracles for evaluating the numerical predicates in
$\Ps$). In this section, we shall prove that there is no hope of
extending this result to even very simple classes of structures of unbounded
degree such as trees and words: on these classes, the $\FOC(\Ps)$
evaluation problem
is as hard as the evaluation problem for first-order logic on
arbitrary graphs. The latter is known to be PSPACE-complete
\cite{var82} and, in the world of parameterised complexity theory, complete
for the class $\AWstern$ \cite{dowfelreg98} (also see
\cite{flugro06}). The hardness results hold for all $\Ps$ that
contain the ``equality predicate'' $\P_{=}$ or the ``positivity predicate'' $\P_{\geq 1}$.
The $\AWstern$-hardness is the more relevant result for us
here.\footnote{PSPACE-completeness already holds over a
  fixed structure with two elements.} It
shows that the evaluation problem is unlikely to have an algorithm
running in time $f(k)n^c$ for an arbitrary function $f$ and constant
$c$, where $k$ is the size of the input formula and $n$ the size of
the input structure.

To state our result formally, we focus on the model-checking problem,
that is, the query evaluation problem for sentences. The 
\emph{model-checking problem} for a logic $\textup L$ on a class $\Class$ of
structures is the problem of deciding whether a given structure
$\cA\in\Class$ satisfies a given L-sentence $\phi$. A \emph{polynomial
  fpt-reduction} between two such 
problems is a
polynomial time many-one reduction that, given an instance $\cA,\phi$
of the first model-checking problem, computes an instance $\cA',\phi'$
of the second model-checking problem such that $\|\cA'\|$ is
polynomially bounded in $\|\cA\|$ and $\|\phi'\|$ is polynomially
\allowbreak bounded
in $\|\phi\|$. 

\begin{theorem}\label{theo:hardness-trees}
  There is a polynomial fpt-reduction from the model-checking problem
  for $\FO$
  on the class of all graphs to the model-checking problem for
  $\FOC(\{\P_{=}\})$ on the class of all trees.
\end{theorem}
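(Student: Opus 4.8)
The plan is to encode an arbitrary graph $G$ as a tree $T_G$ so that the adjacency relation of $G$ becomes definable by an $\FOC(\{\P_=\})$-formula, and then translate any $\FO$-sentence $\phi$ over graphs into an $\FOC(\{\P_=\})$-sentence $\phi'$ over trees by relativising quantifiers to the ``vertex nodes'' of $T_G$ and replacing each atom $E(x,y)$ with that adjacency formula. First I would fix a concrete tree encoding: take a root $\rho$, attach to $\rho$ one child $v_a$ for every vertex $a\in V(G)$, and below each $v_a$ attach a gadget that records, in a counting-readable way, exactly which vertices $a$ is adjacent to. Since the only numerical predicate available is $\P_=$ (equality of two counting terms), the encoding must be arranged so that ``$a$ is adjacent to $b$'' reduces to an equality between two counts. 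A clean way to do this is to give every vertex $a$ a distinct numerical ``address'' realised as a number of auxiliary leaves, and to hang below each vertex node $v_a$, for each neighbour $b$ of $a$, a subtree whose leaf-count equals the address of $b$; then $E(x,y)$ holds iff there is an address-subtree under $x$ whose size equals the address-count under $y$, which is an $\FOC(\{\P_=\})$-formula using $\P_=$ on two $\#$-terms.

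The key steps, in order, are: (1) define $T_G$ precisely, making sure $\|T_G\|$ is polynomially bounded in $\|G\|$ — with naive unary addresses the gadget for vertex $a$ has size $O(|V(G)|\cdot\deg(a))$, so $\|T_G\|=O(|V(G)|\cdot\|G\|)$, which is polynomial, as required by the definition of polynomial fpt-reduction; (2) write down $\FOC(\{\P_=\})[\sigma_{\mathrm{tree}}]$-formulas $\mathrm{vertex}(x)$, $\mathrm{root}(x)$, etc., identifying the parts of the encoding, using only the tree signature (a binary child/edge relation, plus possibly a few unary markers which are themselves just more tree structure if we insist on the bare signature of trees); (3) define $\mathrm{adj}(x,y)$ as sketched, as a Boolean combination of $\exists$ and one application of $\P_=$ to two counting terms; (4) define the syntactic translation $\phi\mapsto\phi'$ by induction — atoms $x=y$ stay, atoms $E(x,y)$ become $\mathrm{adj}(x,y)$, Booleans commute, and $\exists x\,\psi$ becomes $\exists x\,(\mathrm{vertex}(x)\wedge\psi')$ — and check $\size{\phi'}$ is linear (hence polynomial) in $\size{\phi}$; (5) prove the correctness equivalence $G\models\phi \iff T_G\models\phi'$ by a routine induction on $\phi$, the base case being exactly the definition of $\mathrm{adj}$. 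Finally, note the whole construction of $T_G$ and $\phi'$ is computable in polynomial time, so this is a polynomial fpt-reduction.

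The main obstacle, and the place where care is genuinely needed, is step (1)–(3) together: getting an encoding that is simultaneously (a) a \emph{tree} in the literal sense used in the paper, (b) polynomial in size, and (c) rich enough that adjacency is expressible with only the single predicate $\P_=$ and no auxiliary unary relations beyond what a tree structure provides. Using equality-of-counts to simulate ``pointers'' between vertices is the crux — one must ensure that distinct vertices get distinct address-counts and that no spurious equality between an address-subtree and an unrelated count can occur (e.g. by reserving a distinguishable layer in the tree for addresses, perhaps offsetting all addresses by a fixed constant and using depth to separate address-gadgets from vertex-gadgets). If one is allowed a couple of unary predicates on the tree this is easy; if one insists on the pure tree signature, a little extra bookkeeping with a bounded number of ``type-marker'' leaf patterns is needed, but this does not affect the polynomial bound. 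Everything downstream — the translation of formulas and the correctness induction — is then routine.
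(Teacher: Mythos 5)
Your proposal is essentially the paper's construction: a root with one child per vertex of $G$, a unary "address" gadget hanging below each vertex node, and for each neighbour a separate gadget whose leaf-count repeats that neighbour's address, so that adjacency reduces to an equality of two $\#$-terms tested by $\P_=$; the translation then relativises quantifiers and replaces $E$-atoms, and the size bound is quadratic. The one detail you flag but leave somewhat open — distinguishing address-gadgets from neighbour-gadgets in the bare tree signature — is resolved in the paper by making the address-gadget out of length-$2$ pendant paths (so its leaves have a degree-$2$ neighbour) while the neighbour-gadgets are stars (leaves with a degree-$\ge 2$ parent of a different kind), and all node types ($a$, $b$, $c$, $d$, $e$, root) are then picked out by short $\FO$-formulas using degree and adjacency alone, which is exactly the "bounded bookkeeping" you anticipated.
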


\begin{proof}
  Let $G$ be a graph, and let $\phi$ be an $\FO$-sentence in the
  signature of graphs (consisting of a single binary relation symbol
  $E$). 
  W.l.o.g.\ we assume that $V(G)=[n]$ for some
  $n\ge 1$.
We shall define a tree $T_G$ and an $\FOC(\{\P_{=}\})$-sentence
$\hat\phi$ such that $G$ satisfies $\phi$ if and only if $T_G$
satisfies $\hat\phi$.
We construct the tree $T_G$ as follows. The vertex set $V(T_G)$ consists of
  \begin{itemize}
  \item a ``root'' vertex $r$
  \item a vertex $a(i)$ for every $i\in[n]$
  \item vertices $b_j(i)$ and $c_j(i)$ for every $i\in[n]$ and
    $j\in[i{+}1]$
  \item a vertex $d(i,j)$ for every $i\in[n]$ and every neighbour $j$
    of $i$ in $G$
  \item vertices $e_k(i,j)$ for every $i\in[n]$, every neighbour $j$
    of $i$ in $G$, and every $k\in[j{+}1]$.
  \end{itemize}
  The edge set of $T_G$ consists of
  \begin{itemize}
  \item edges $ra(i)$ for all $i\in[n]$
  \item edges $a(i)b_j(i)$ and $b_j(i)c_j(i)$ for all $i\in[n]$ and
    $j\in[i{+}1]$
  \item edges $a(i)d(i,j)$ and $d(i,j)e_k(i,j)$ or all $i\in[n]$, all
    neighbours $j$ of $i$ in $G$, and
    all $k\in[j{+}1]$.
  \end{itemize}
  Then, $T_G$ is a tree (of height $3$) that can be computed from $G$
  in quadratic time.

  To define $\hat\phi$, we need auxiliary formulas
  $\phi_a(x)$, $\phi_b(x)$, \ldots, $\phi_e(x)$ defining the sets of
  $a,b,\ldots,e$-vertices, respectively. We start from the observations that
  the $c$-vertices $c_j(i)$ are the precisely those vertices of degree $1$
  whose unique neighbour has degree $2$. The $b$ vertices are the
  neighbours of the $c$-vertices, and the $a$-vertices are the neighbours
  of the $b$-vertices that are not $c$-vertices. The root vertex is
  the only vertex adjacent to all $a$ vertices. The $e$-vertices are
  the vertices of degree $1$ that are not $c$-vertices, and the
  $d$-vertices are the neighbours of the $e$-vertices.

  Note that the
  vertices of $G$ 
  are in one-to-one correspondence to the $a$-vertices of $T_G$: vertex $i$
  corresponds to the unique $a$-vertex with exactly $(i{+}1)$
  $b$-neighbours. To express that there is an edge between
  $a$-vertices $x,x'$, we say that $x$ has a $d$-neighbour $y$ such
  that the number of $e$-neighbours of $y$ equals the number of
  $b$-neighbours of $x'$. This is precisely what the following
  $\FOC(\{\P_{=}\})$-formulas says:
  \[
     \psi_E(x,x') \ \ := \ \ \exists y\ \Big(E(x,y)\ \wedge
     \P_{=}\Big(\#z.\big(E(y,z)\wedge\psi_e(z)\big),\#z.\big(E(x',z)\wedge\psi_b(z)\big)\Big)\Big).
  \]
  Now we let $\hat\phi$ be the formula obtained from $\phi$ by
  replacing each atom $E(x,x')$ by $\psi_E(x,x')$ and by relativizing all
  quantifiers to $a$-vertices, that is, replacing subformulas $\exists
  x\psi$ by $\exists x(\psi_a(x)\wedge\psi)$. Clearly, $\hat\phi$
  can be computed from $\phi$ in polynomial time. Moreover, it should
  be clear from the construction that $G$ satisfies $\phi$ if and
  only if $T_G$ satisfies $\hat\phi$. 
\end{proof}

\begin{corollary}\label{cor:hardness-trees}
  The parameterised model-checking problem for $\FOC(\{\P_{=}\})$ on the class of
  all trees is $\textup{AW}[*]$-complete.
\end{corollary}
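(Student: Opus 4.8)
The plan is to derive Corollary~\ref{cor:hardness-trees} from Theorem~\ref{theo:hardness-trees} by combining it with two known facts: that parameterised $\FO$ model-checking on the class of all graphs is $\AWstern$-complete (and in particular $\AWstern$-hard), and that $\FOC(\{\P_{=}\})$ model-checking on trees lies in $\AWstern$. Membership in $\AWstern$ together with $\AWstern$-hardness gives $\AWstern$-completeness.

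For the hardness direction, I would argue as follows. By \cite{dowfelreg98} (see also \cite{flugro06}), the model-checking problem for $\FO$ on the class of all graphs, parameterised by the size of the input sentence, is $\AWstern$-complete, hence $\AWstern$-hard. Theorem~\ref{theo:hardness-trees} supplies a polynomial fpt-reduction from this problem to the model-checking problem for $\FOC(\{\P_{=}\})$ on the class of all trees; note that the map $G,\phi\mapsto T_G,\hat\phi$ constructed there runs in polynomial time, with $\|T_G\|$ polynomially bounded in $\|G\|$ and $\|\hat\phi\|$ polynomially bounded in $\|\phi\|$, so it is in particular a parameterised reduction (the new parameter $\|\hat\phi\|$ depends only on the old parameter $\|\phi\|$). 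Since $\AWstern$ is closed under fpt-reductions, it follows that $\FOC(\{\P_{=}\})$ model-checking on trees is $\AWstern$-hard.

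For the membership direction, the key observation is that $\FOC(\{\P_{=}\})$-sentences can be evaluated in polynomial space: each counting term $\#\ov{y}.\varphi$ is evaluated by iterating over all tuples $\ov a\in A^{|\ov y|}$ and incrementing a counter, recursively evaluating $\varphi$; the arithmetic operations on the resulting integers (addition, multiplication, and the fixed predicate $\P_{=}$, which is just an equality test) are computable in polynomial time in the bit-length of the operands, and those bit-lengths stay polynomially bounded in $\|\A\|^{\|\varphi\|}$, i.e. polynomial in $\|\A\|$ for a fixed sentence. Hence $\FOC(\{\P_{=}\})$ model-checking is in $\PSPACE$, and a standard alternating-Turing-machine analysis (alternating over the quantifier and counting blocks, with the tuples guessed existentially/universally) places the parameterised version in $\AWstern$ — this is exactly the generalisation to $\FOC(\{\P_{=}\})$ of the classical fact that $\FO$ model-checking is in $\AWstern$. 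The main technical care is to confirm that the counting-term evaluations can be folded into the alternating structure without blowing up the alternation depth beyond a function of the parameter; since the nesting depth of counting terms and quantifiers is bounded by $\|\varphi\|$, this is fine.

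The main obstacle, such as it is, is the membership argument: one must be slightly careful that the integers produced by nested counting terms and multiplications do not have super-polynomial bit-length for a fixed sentence (a bound like $\|\A\|^{2^{O(\|\varphi\|)}}$ suffices, giving bit-length polynomial in $\|\A\|$), and that evaluating $\P_{=}$ on such integers is cheap — both hold because $\P_{=}$ is decidable in linear time in the operand size. Once these routine points are dispatched, $\AWstern$-membership of $\FOC(\{\P_{=}\})$ model-checking on all structures, restricted to trees, is immediate, and combining it with the hardness from Theorem~\ref{theo:hardness-trees} yields $\AWstern$-completeness.
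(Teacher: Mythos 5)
Your overall plan is the right one and matches what the paper intends: Corollary~\ref{cor:hardness-trees} is stated without proof because it is meant to follow immediately from Theorem~\ref{theo:hardness-trees} together with $\AWstern$-hardness of $p\text{-MC}(\FO)$ on all graphs (\cite{dowfelreg98,flugro06}) for the hardness direction, plus $\AWstern$-membership of $\FOC(\{\P_{=}\})$ model-checking for the membership direction. Your hardness argument is exactly this and is fine.

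The membership argument is where your sketch is off. You write that ``a standard alternating-Turing-machine analysis (alternating over the quantifier and counting blocks, with the tuples guessed existentially/universally)'' places the problem in $\AWstern$. This conflates two very different things: existential quantifiers do correspond to alternations in which a single element is guessed, but a counting term $\Count{\ov y}{\psi}$ is \emph{not} evaluated by guessing $\ov y$ existentially or universally --- one has to aggregate over \emph{all} tuples, which is a deterministic computation, not a nondeterministic branch. Moreover, $\AWstern$ is defined (Flum--Grohe) as the fpt-closure of $p\text{-MC}(\FO)$, and its machine characterisation requires the nondeterministic/alternating steps to come at the tail of the run; you cannot freely interleave alternating quantifier-guesses with deterministic counting-term evaluations. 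The ``main obstacle'' you identify --- controlling the bit-length of the intermediate integers --- is genuine but is the \emph{easy} part; the real work is to exhibit an fpt-reduction to $p\text{-MC}(\FO)$. The clean way to do this: working from innermost to outermost, for each subformula $\P_{=}(t_1,t_2)$ whose counting terms are already $\FO$ (i.e.\ of $\#$-depth~$1$), compute in polynomial time the relation $R_{\P_{=}(t_1,t_2)}\subseteq A^{|\free(\P_{=}(t_1,t_2))|}$ of tuples satisfying it (the arity is $\le\size{\phi}$, the values are $\le |A|^{\size\phi}$, so this is $\poly(\size\A)$ for a fixed sentence), add $R_{\P_{=}(t_1,t_2)}$ to the structure as a fresh relation, and replace the subformula by the corresponding atom; iterating this drives the $\#$-depth to $0$ and yields an $\FO$-sentence over an fpt-computable expansion of $\A$. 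This is a polynomial fpt-reduction and immediately gives $\AWstern$-membership. Your PSPACE/ATM sketch, once made precise, would have to carry out essentially this same precomputation, so it is not wrong in spirit, but as written it does not actually establish the claim.
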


We encode strings over the alphabet $\Sigma$ as structures $\cA$ of
signature $\sigma:=\{\le\}\cup\{P_a :  a\in\Sigma\}$, where the binary
relation $\le^\cA$ is a linear order of $A$, and the unary relation
$P_a^{\cA}$ consists of the positions of all $a$s in the string, for
each symbol $a\in\Sigma$. 

\begin{theorem}\label{theo:hardness-strings}
  There is a polynomial fpt-reduction from the model-checking problem
  for $\FO$
  on the class of all graphs to the model-checking for
  $\FOC(\{\P_{=}\})$ on the class of all strings of alphabet $\Sigma\deff\set{a,b,c}$. 
\end{theorem}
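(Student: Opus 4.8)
The plan is to mimic the tree construction of Theorem~\ref{theo:hardness-trees}, but encode everything into a single string. The key insight is that the tree $T_G$ has height $3$ and is built from ``multiplicity gadgets'': a vertex of $G$ is identified by the number of its $b$-children, and edges are detected by comparing $e$-counts with $b$-counts. To move to strings, I would linearise each gadget into a contiguous block whose \emph{length} carries the relevant number, and use the three letters $a,b,c$ to mark block boundaries and block types so that first-order formulas can navigate the structure.

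Concretely, first I would fix $V(G)=[n]$ and lay out the string as a concatenation of $n$ blocks $B_1 B_2 \cdots B_n$, separated by a distinguished marker letter (say a single $a$), where block $B_i$ encodes vertex $i$ together with its adjacency list. Inside $B_i$ I would first place a ``name gadget'': a run of exactly $i{+}1$ copies of $b$, terminated by a $c$; this run encodes the identity of vertex $i$ exactly as the $b$-children did in the tree. Then, for each neighbour $j$ of $i$ in $G$, I would append an ``edge gadget'': a run of exactly $j{+}1$ copies of $b$, terminated by a $c$ — the same run that encoded the $e$-children of $d(i,j)$ in the tree — and I would use a differing delimiter (e.g. a double $a$, or a $c$ immediately followed by $a$) to distinguish the start of the neighbour-list portion from the name gadget. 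The string is computable from $G$ in quadratic time, and since the order is total, it is a legitimate string structure over $\{a,b,c\}$.

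Next I would write the auxiliary $\FO$-formulas that locate the pieces of this layout: using the linear order and the unary predicates $P_a,P_b,P_c$, one can define ``$x$ is a block-start marker'', ``$x$ and $x'$ are consecutive block-start markers'', ``$x$ is the first position of the name gadget of its block'', ``$x$ is the first position of an edge gadget'', and so on — these are all $\FO$-definable because the boundaries are marked by fixed short patterns of $a$'s and $c$'s. With these in hand, the number of $b$'s in a gadget starting at position $x$ is expressed by the counting term $\#z.\,\chi(x,z)$, where $\chi(x,z)$ says ``$z$ lies in the maximal $b$-run beginning at (just after) $x$''. Then, just as in the tree proof, I would define
\[
  \psi_E(x,x') \ := \ \exists y\ \Big(\text{``$y$ starts an edge gadget in block $x$''}\ \wedge\ \P_{=}\big(\,\#z.\chi_e(y,z),\ \#z.\chi_b(x',z)\,\big)\Big),
\]
where $\chi_e$ counts the $b$'s of the edge gadget at $y$ and $\chi_b$ counts the $b$'s of the name gadget of block $x'$. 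Finally $\hat\phi$ is obtained from $\phi$ by relativising all quantifiers to block-start positions (which are in bijection with $V(G)$, position $x$ corresponding to the vertex whose name gadget has $i{+}1$ many $b$'s) and replacing each atom $E(x,x')$ by $\psi_E(x,x')$. The reduction is polynomial, $\hat\phi$ has size polynomial in $\|\phi\|$, and $G\models\phi \iff \cA\models\hat\phi$.

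The main obstacle — and the only place requiring care — is the bookkeeping needed to keep the gadget delimiters $\FO$-distinguishable while still using only three letters: one must choose the marker patterns (single $a$ vs.\ $aa$, placement of the terminating $c$'s) so that ``start of name gadget'' and ``start of an edge gadget'' are each captured by a fixed quantifier-free pattern in the successor relation derived from $\le$, and so that the maximal $b$-run referenced by each counting term is unambiguously pinned down by its starting delimiter. Everything else is a routine adaptation of the argument for Theorem~\ref{theo:hardness-trees}; in particular the use of $\P_{=}$ to compare the two $b$-counts, and the relativisation of quantifiers, is identical.
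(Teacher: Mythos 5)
Your proposal is correct and follows essentially the same construction as the paper: encode each vertex $i$ as a contiguous block consisting of a length-coded ``name'' gadget followed by length-coded ``neighbour'' gadgets, and simulate $E(x,x')$ by a $\P_=$ comparison of two counts, exactly as in the tree reduction. The only difference is cosmetic: the paper avoids your ``double-$a$'' delimiter worry by simply using $a$ to mark block starts, $b$ to mark the start of each neighbour entry, and runs of $c$ as the counted letter, i.e.\ $s_i = a\,c^i\,b\,c^{j_1}\,b\,c^{j_2}\cdots b\,c^{j_m}$, so all three roles are carried by three distinct letters.
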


\begin{proof}
  We use a similar idea as in the proof of
  Theorem~\ref{theo:hardness-trees}. Given a graph $G$ with vertex set
  $[n]$ and an $\FO$-sentence $\phi$, we construct a string $S_G$ and
  an $\FOC(\{\P_{=}\})$-sentence $\hat\phi$ such that $G$ satisfies
  $\phi$ if and only if $S_G$ satisfies $\hat\phi$.

  This time, we use substrings (instead of subtrees) to represent the
  vertices of $G$. For a vertex $i$ with neighbours
  $\{j_1,\ldots,j_m\}$ in $G$, we let $s_i$ be the string
  \[
  ac^ibc^{j_1}bc^{j_2}\ldots bc^{j_m}.
  \]
  Then we let $S_G$ be the concatenation of the $s_i$ for all
  $i\in[n]$. It is easy to complete the proof along the lines of the
  proof of Theorem~\ref{theo:hardness-trees}.
\end{proof}

\begin{corollary}\label{cor:hardness-strings}
  The parameterised model-checking problem for $\FOC(\{\P_{=}\})$ on the class of
  all strings (of alphabet $\Sigma=\set{a,b,c}$) is $\textup{AW}[*]$-complete.
\end{corollary}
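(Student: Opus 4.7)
The plan is to establish $\AWstern$-completeness by separately arguing $\AWstern$-hardness and $\AWstern$-membership.

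$\AWstern$-hardness is immediate from Theorem~\ref{theo:hardness-strings}: the theorem provides a polynomial fpt-reduction from the model-checking problem for $\FO$ on arbitrary graphs---an $\AWstern$-complete problem by \cite{dowfelreg98,flugro06}---to the model-checking problem for $\FOC(\{\P_{=}\})$ on strings over $\Sigma = \{a,b,c\}$. Since polynomial fpt-reductions preserve $\AWstern$-hardness, the lower bound follows at once.

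For $\AWstern$-membership, I would show that $\FOC(\{\P_{=}\})$ model-checking on arbitrary structures (and hence, in particular, on strings) reduces by an fpt-translation of formulas to $\FO$ model-checking on arbitrary structures, which is itself in $\AWstern$ by \cite{dowfelreg98,flugro06}. The key observation is that on a $\sigma$-structure $\A$ of size $n$, every counting term $\Count{\ov y}{\phi}$ with $|\ov y|=k$ takes its value in $[0,n^k]$ and is fully determined by the witness set $S_\phi = \{\ov a\in A^k : \A\models\phi[\ov a]\}$. Given an auxiliary linear order on $A$ (which expands the structure polynomially), each candidate $k$-tuple can be addressed by its lexicographic index, and equality of two counts, needed to simulate $\P_{=}$, becomes $\FO$-expressible by asserting that the $i$-th witness tuple of $S_\phi$ exists iff the $i$-th witness tuple of $S_\psi$ exists, for all $i$. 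An inductive translation over the nesting structure of counting terms then yields an equivalent $\FO$-sentence of polynomial size over this expansion.

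The principal obstacle lies in executing the translation of $\P_{=}$ without exceeding $\AWstern$: one cannot directly quantify over bijections between witness sets in $\FO$, so the reduction must proceed via the more delicate rank-based comparison using the induced lexicographic order on $k$-tuples. The restriction to the equality predicate $\P_{=}$ is what makes this feasible, since equality of cardinalities admits the rank-comparison reformulation; handling arbitrary numerical predicates would require different techniques and might exit $\AWstern$ altogether.
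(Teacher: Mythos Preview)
Your hardness argument is correct and matches the paper: the corollary is stated without proof as an immediate consequence of Theorem~\ref{theo:hardness-strings}, together with the $\AWstern$-completeness of $\FO$ model-checking on graphs.

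Your plan for membership, however, does not work. You propose to express $\P_{=}(t_1,t_2)$ in $\FO$ over the structure expanded by a linear order, via rank comparison: $|S_\phi|=|S_\psi|$ holds iff for every index $i$ the $i$-th witness of $S_\phi$ exists exactly when the $i$-th witness of $S_\psi$ does. But ``the $i$-th witness of $S_\phi$ exists'' is itself a counting condition---it asserts that some $\bar a\in S_\phi$ has exactly $i{-}1$ lexicographic predecessors inside $S_\phi$---so the rewriting does not eliminate counting; it merely relocates it. More decisively, equicardinality of two definable relations is \emph{not} $\FO$-expressible over ordered structures, even with arbitrary built-in numerical predicates: $\FO$ in that setting captures uniform $\textup{AC}^0$, whereas equicardinality (like majority and parity) is provably outside $\textup{AC}^0$. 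Hence no formula translation of the kind you outline can exist, and your claim that ``equality of cardinalities admits the rank-comparison reformulation'' in $\FO$ is false. A genuine fpt-reduction would have to enrich the structure as well, but precomputing the values of all counting subterms for all assignments to their free variables costs $n^{O(|\phi|)}$ time, which is not fpt.

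The paper itself supplies no membership argument; it simply asserts completeness and explicitly notes that ``the $\AWstern$-hardness is the more relevant result for us here.'' You should either locate an explicit reference placing $\FOC(\{\P_{=}\})$ model-checking inside $\AWstern$, or flag the upper bound as requiring separate justification rather than sketching an argument that cannot be completed.
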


\begin{remark}
Note that we can express 
``$\P_=(t_1,t_2)$'' via ``$\nicht\,\P_{\geq 1}(t_1{-}t_2)\;\und\;\nicht \,\P_{\geq 1}(t_2{-}t_1)$''.
Therefore, the results of Theorem~\ref{theo:hardness-trees}, Corollary~\ref{cor:hardness-trees},
Theorem~\ref{theo:hardness-strings}, and Corollary~\ref{cor:hardness-strings} also hold for the 
logics $\FOC(\set{\P_{\geq 1}})$.
\end{remark}

\section{The fragment $\FOunC(\Ps)$ of $\FOC(\Ps)$}\label{section:FOunC}\label{sec:FOunC}

In this section, we define a fragment of $\FOC(\Ps)$
called $\FOunC(\Ps)$.
  This logic is an extension of
$\FO$ that allows to formulate cardinality
conditions concerning terms that have at most one free variable (hence
the subscript 1 in ``$\FOunC$'').
The logic $\FOunC(\Ps)$ is designed in such a way that it, although being
relatively expressive, still allows for efficient
query evaluation algorithms on well-behaved
classes of structures. This paper's main result shows that
$\FOunC(\Ps)$-query evaluation is fixed-parameter tractable
on nowhere dense classes of structures.

\begin{definition}[\mbox{$\FOunC(\Ps)[\sigma]$}]\label{def:FOunC}
Let $\sigma$ be a signature.

The set of \emph{formulas} and \emph{counting terms} of
\emph{$\FOunC(\Ps)[\sigma]$} is built according to the rules
\eqref{item:atomic}--\eqref{item:exists} and \eqref{item:countterm}--\eqref{item:plustimesterm} and
the following restricted version of rule
\eqref{item:Q} of
Definition~\ref{def:FOC}:
\begin{enumerate}[(1')]
\item[(\ref{item:Q}')]
  if $\P\in\Ps$, $m= \ar(\P)$, and 
  $t_1,\ldots,t_m$ are counting terms 
  \emph{such that $\setsize{\free(t_1)\cup\cdots\cup\free(t_m)}\leq 1$},
  then $\P(t_1,\ldots,t_m)$ is a \emph{formula}
\end{enumerate}
\end{definition}

The first two formulas of Example~\ref{example:FOC-formulas} are in $\FOunC(\Ps)$;
the last formula of Example~\ref{example:FOC-formulas} and the formula
$\psi_E(x,x')$ from the proof of Theorem~\ref{theo:hardness-trees} are not.
Based on the logic $\FOunC(\Ps)$, we define the following query language.

\begin{definition}[\mbox{$\FOunC(\Ps)$-queries}]\label{def:FOunC-QL}
Let $\sigma$ be a signature.
An \emph{$\FOunC(\Ps)[\sigma]$-query} is of the form
\begin{equation}\label{eq:query-def}
  \query{(x_1,\ldots,x_k,t_1,\ldots,t_\ell)}{\phi}
  \tag{$*$}
\end{equation}
where $k,\ell\geq 0$, $x_1,\ldots,x_k$ are pairwise distinct
variables, $t_1,\ldots,t_\ell$ are $\FOunC(\Ps)[\sigma]$-counting terms 
with $\free(t_i)\subseteq\set{x_1,\ldots,x_k}$ for each $i\in [\ell]$,
and $\phi$ is an $\FOunC(\Ps)[\sigma]$-formula with
$\free(\phi)=\set{x_1,\ldots,x_k}$.
\\
When evaluated in a $\sigma$-structure $\A$, a query $q$ of the form
\eqref{eq:query-def} returns the result
\ $q(\A) \deff \sem{q}^{\A}\deff$
\[
  \Big\{\ 
    \big(\;a_1,\ldots,a_k,n_1,\ldots,n_\ell \;\big)
  \ : \ 
    \A\models\phi[a_1,\ldots,a_k] \ \text{and} \
    n_j=t_j^{\A}[a_1,\ldots,a_k] \text{ for each } j\in[\ell]
  \ \Big\}\,.
\]
\end{definition}

\smallskip

Let us demonstrate that the usual examples for
uses of the COUNT operation in SQL can be expressed in this
query language.

\begin{example}\label{example:SQL-Count}
In this example we consider $\FOunC(\Ps)$-queries where $\Ps$ is
empty, and deal with the database schema consisting of relations
\textsf{Customer(Id, FirstName, LastName, City, Country, Phone)} and
\textsf{Order(Id, OrderDate, OrderNumber, CustomerId,
  TotalAmount)}.\footnote{taken from
  \url{http://www.dofactory.com/sql/group-by}} 
\medskip

To list the number of customers in each country, one can use the
SQL-statement
\begin{verbatim}
 SELECT Country, COUNT(Id)
 FROM Customer
 GROUP BY Country
\end{verbatim}
or the $\FOunC(\Ps)$-query \
$\query{(x_{co},\,t(x_{co}))}{\phi(x_{co})}$ \
with 
\ $
  \phi(x_{co})
  \ \deff \
  x_{co}{=}x_{co}
$ \ and
\ $t(x_{co})\deff \Count{(x_{id})}{\psi}$ \ with $\psi\deff$
\[
    \exists x_{fi} \,\exists x_{la}\, \exists x_{ci}\,
    \exists x_{ph}\ 
    \texttt{Customer}(x_{id},x_{fi},x_{la},x_{ci},x_{co},x_{ph}).
\]

To return the total number of customers and the total number of orders
stored in the database, we can use the SQL-statement\footnote{This
  statement shall work for MySQL, PostgreSQL, and Microsoft SQL
  server; to make it work for Oracle, the statement has to be appended
  by the line \texttt{FROM dual}.} 
\begin{verbatim}
 SELECT(
         SELECT COUNT(*)
         FROM Customer
       ) AS No_Of_Customers,
       (
         SELECT COUNT(*)
         FROM Order
       ) AS No_Of_Orders
\end{verbatim}
or, equivalently, the $\FOunC(\Ps)$-query
\ $\query{(t_c,\,t_o)}{\phi}$ \
for 
\begin{eqnarray*}
   t_c
 & \deff 
 & \Count{(\ov{x})}{\texttt{Customer}(\ov{x})}
\\
   t_o
 & \deff
 & \Count{(\ov{y})}{\texttt{Order}(\ov{y})}
\end{eqnarray*}
with \,$\ov{x}=(x_{id},x_{fi},x_{la},x_{ci},x_{co},x_{ph})$\, and
\,$\ov{y}=(y_{oid},y_{od},\allowbreak y_{on},y_{cid},y_{ta})$\,
and where $\phi$ is a sentence that is satisfied by \emph{every} database, e.g.,
\ $
  \phi
  \ \deff\
  \nicht\exists z \ \nicht\, z {=} z
$\,.
\medskip

To list the total number of orders for each customer in Berlin, 
we can use the SQL-statement
\begin{verbatim}
 SELECT C.FirstName, C.LastName, COUNT(O.Id)
 FROM Customer C, Order O
 WHERE C.City = Berlin AND O.CustomerId = C.Id 
 GROUP BY C.FirstName, C.LastName
\end{verbatim}
or, equivalently, the
$\FOunC(\Ps)$-query
\[
 \query{\big(x_{fi},\,
   x_{la},\,t(x_{fi},x_{la})\big)}{\phi(x_{fi},x_{la})}
\]
with 
\ $t(x_{fi},x_{la}) \ \deff$
\[
  \# \, (y_{oid}) . \; 
  \exists y_{od} \exists y_{on} \exists y_{ta} \exists x_{id} \exists
  x_{ci} \exists x_{co} \exists x_{ph} \ \Big(
  \texttt{Order}(\ov{y})
  \ \und \ 
  \texttt{Customer}(\ov{x})
  \ \Big)
\]
for $\ov{y}=(y_{oid},y_{od},y_{on},x_{id},y_{ta})$ and
$\ov{x}=(x_{id},x_{fi},x_{la},x_{ci},\allowbreak x_{co},x_{ph})$
and
\
$ \phi(x_{fi},x_{la})\deff$
\[
  \exists x_{id}\exists x_{ci}\exists x_{co} \exists x_{ph} \ \big(\
    \texttt{Customer}(\ov{x}) \;\und\; R_{\textit{Berlin}}(x_{ci})
  \ \big)\,.
\]
Here, we use an atomic statement $R_{\textit{Berlin}}(x_{ci})$ to
express that ``$x_{ci}=\textup{Berlin}$''.
Of course, to avoid such constructions, we could extend the definition
of $\FOunC(\Ps)$ in the usual way by allowing constants
taken from a fixed domain $\textbf{dom}$ of potential database entries
(cf.\ \cite{AHV-Book}).
\end{example}

Our query language is also capable of expressing more complicated queries:

\begin{example}
Consider a numerical predicate collection that contains the equality predicate
$\P_=$ with $\sem{\P_=}=\setc{(m,m)}{m\in\ZZ}$.
For better readability of $\FOunC(\Ps)$ formulas we will write $t=t'$
instead of $\P_=(t,t')$.

Consider the signature $\sigma\deff\set{E,R,B,G}$ where $E$ is a
binary relation symbol and $R$, $B$, $G$ are unary relation
symbols. We view a $\sigma$-structure $\A$ as a directed graph where each
node $a\in A$ may be coloured with 0, 1, 2, or 3 of the colours $R$ (red), $B$
(blue), and $G$ (green).

The ground term
\ $
 t_R
 \deff
 \Count{(x)}{R(x)}
$ \
specifies the total number of red nodes.
The term 
\[
t_\Delta(x)
\quad\deff\quad
\Count{(y,z)}{\big(\,E(x,y)\und E(y,z)\und E(z,x)\,\big)}
\]
specifies the number of directed triangles in which $x$ participates.
The formula
\ $
 \phi_{\Delta,R}(x)
  \;\deff\;
 t_\Delta(x) \,{=}\, t_R
$ \
is satisfied by all nodes $x$ such that the number of triangles in
which $x$ participates is the same as the total number of red nodes.
The ground term
\ $
 t_{\Delta,R}
 \deff
 \Count{(x)}{\phi_{\Delta,R}(x)}
$ \
specifies the total number of 
such nodes.
The term
\[
  t_B(x)
  \quad\deff\quad
  \Count{(y)}{\big(\,E(x,y)\und B(y)\,\big)}
\]
specifies the number of blue neighbours of node $x$.

For the formula
\ $
  \phi_{B,\Delta,R}(x) 
  \; \deff \;
  t_B(x) \,{=}\, t_\Delta(x) \,{+}\, t_{\Delta,R}
$\, \
the 
$\FOunC(\Ps)[\sigma]$-query
\[
 \query{\;(\, x,\ y,\ t_B(x)\cdot
   t_\Delta(y)\,)\ \;}{\ \;\big(\phi_{B,\Delta,R}(x)\,\und\, G(y)\big)\;}
\]
outputs all tuples in $A^2\times \ZZ$ of the form $(x,y,n)$ such that 
$n$ is the product of the number of blue neighbours of $x$ and the number of triangles in which
$y$ participates,
$y$ is green, and 
$x$ is a node whose number of blue neighbours
is equal to the sum of the number of triangles in which $x$
participates and the total number of nodes that participate in exactly
as many triangles as there are red nodes.
\end{example}

When speaking of an \emph{algorithm with $\Ps$-oracle} we mean an algorithm
that has available an oracle to decide, at unit cost, 
whether $(i_1,\ldots,i_m)\in\sem{\P}$ 
when given a $\P\in\Ps$ and a tuple of integers $(i_1,\ldots,i_{m})$ of arity
$m=\ar(\Ps)$.

The paper's main result reads as follows 
(see Section~\ref{sec:nd} for a precise definition of
nowhere dense classes).

\begin{theorem}[Main Theorem]\label{thm:main_FOunC-query-eval}
Let 
$\Class$ be an effectively nowhere dense class of structures.
There is an algorithm with $\Ps$-oracle which receives as input an
$\epsilon>0$, an
$\FOunC(\Ps)$-query $q$ of the form \eqref{eq:query-def}
for some signature $\sigma$, a
$\sigma$-structure $\A$ from $\Class$, and a tuple $\ov{a}\in
A^k$, and decides whether $\A\models\varphi[\ov{a}]$, and if so,
computes the numbers $n_j\deff t_j^{\A}[\ov{a}]$ for all $j\in [\ell]$.
The algorithm's running time is
$f(\size{q},\epsilon)\cdot\size{\A}^{1+\epsilon}$, for a computable function $f$.
\end{theorem}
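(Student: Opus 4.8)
The plan is to reduce $\FOunC(\Ps)$-query evaluation to the counting problem for plain first-order formulas, and then lean on the known $\FPT$ $\FO$-counting machinery on nowhere dense classes.

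First I would observe that the only non-$\FO$ ingredient in an $\FOunC(\Ps)$-expression is a counting term $\#\ov{y}.\psi$, and — because of the restriction on rule (\ref{item:Q}') — whenever such a term appears inside a predicate application $\P(t_1,\dots,t_m)$, all the $t_i$ together have at most one free variable. So I would process the query $q = \query{(x_1,\dots,x_k,t_1,\dots,t_\ell)}{\phi}$ bottom-up. At each innermost occurrence of a subformula $\P(t_1,\dots,t_m)$ with $\free(t_1)\cup\cdots\cup\free(t_m)\subseteq\{z\}$, the value $(t_1^{\A}[a],\dots,t_m^{\A}[a])\in\ZZ^m$ depends only on the single element $a$ (or on nothing, if $z$ does not actually occur). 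Each counting term $t_i$ is a polynomial (via rules (\ref{item:countterm})--(\ref{item:plustimesterm})) in basic counting terms $\#\ov{y}.\psi(\ov{y},z)$; the value $\sem{\#\ov{y}.\psi}^{(\A,a)}$ is exactly $|\{\ov{b} : \A\models\psi[\ov{b},a]\}|$, i.e. the answer to an $\FO$-counting query with one free variable $z$, provided $\psi$ is itself first-order. Since we proceed bottom-up, by induction $\psi$ can be assumed to already be first-order after lower-level subformulas of the form $\P(\dots)$ have been replaced. Concretely: compute, for every $a\in A$, the relevant tuple of basic counting values using the $\FO$-counting algorithm; this partitions $A$ into at most $n$ colour classes according to the value of the tuple $(t_1^{\A}[a],\dots,t_m^{\A}[a])$; query the $\Ps$-oracle once per class to decide membership in $\sem{\P}$; introduce a fresh unary relation symbol $U_{\P}$ interpreted as the set of $a$ satisfying $\P(t_1,\dots,t_m)$, and replace the subformula $\P(t_1,\dots,t_m)$ by the atom $U_{\P}(z)$ (or by a constant $\true$/$\false$ if it has no free variable). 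This strictly decreases the "predicate-application depth" of the query while only adding finitely many unary relations and keeping $\A$ within a nowhere dense class (adding unary relations preserves nowhere density, as it does not change the Gaifman graph). Iterating, we reach a genuine $\FO$-query $q' = \query{(x_1,\dots,x_k,t_1',\dots,t_\ell')}{\phi'}$ over an expanded signature, with $\phi'$ a plain $\FO$-formula and each $t_i'$ an $\FO$-counting term with free variables among $x_1,\dots,x_k$.

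It then remains to (i) decide $\A'\models\phi'[\ov{a}]$ and (ii) evaluate each $t_i'^{\A'}[\ov{a}]$. Part (i) is $\FO$ model-checking on a nowhere dense class, which is in $\FPT$ (even fixed-parameter almost linear) by \cite{grokresie14}. For part (ii), $t_i'$ is a polynomial in basic counting terms $\#(\ov{y}).\chi(\ov{y},\ov{x})$, so we need the value $|\{\ov{b} : \A'\models\chi[\ov{b},\ov{a}]\}|$ for a fixed tuple $\ov{a}$; this is a $0$-dimensional instance of the $\FO$-counting problem, obtained by substituting constants. Here is the one genuinely delicate point: $\FO$-counting on nowhere dense classes in full generality is not known (it is explicitly stated in the excerpt that no extension of the known $\FPT$ $\FO$-counting results to full $\FO$ is known prior to this work). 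So I cannot simply cite an off-the-shelf theorem for step (ii) — this is where the real work of the paper lies, and the excerpt already announces the route: generalise the "rank-preserving locality" of \cite{grokresie14} from sentences to formulas with free variables and to counting terms (Section~\ref{sec:Locality}), and then lift the $\FO$ model-checking algorithm of \cite{grokresie14} to an algorithm that, instead of merely deciding, counts the satisfying tuples within each local piece and combines them (Section~\ref{sec:nd}). Concretely one splits a counting term using Gaifman-style locality into a bounded combination of counts of tuples all living in small-radius neighbourhoods, computes these local counts by the sparse-neighbourhood-cover/game-based techniques of \cite{grokresie14} adapted to preserve quantifier rank, and sums with inclusion–exclusion to avoid over-counting tuples spanning several pieces.

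Finally I would assemble the running-time bound. The reduction in the first step runs the $\FO$-counting subroutine $O(n)$ times (once per element, or rather once per counting term, yielding all values by a single pass) and makes $O(n)$ oracle calls, all with parameter the (constant) query size; each invocation costs $f(\size{q},\epsilon)\cdot n^{1+\epsilon}$ by the counting result of the previous paragraph, and there are only $\size{q}$-many layers of predicate applications, so the whole reduction stays within $f(\size{q},\epsilon)\cdot n^{1+\epsilon}$ for a suitably larger computable $f$. Then steps (i) and (ii) each cost $f(\size{q},\epsilon)\cdot n^{1+\epsilon}$ as well. Re-scaling $\epsilon$ and $f$ absorbs the constantly-many factors of $n^{\epsilon}$, giving the stated bound $f(\size{q},\epsilon)\cdot\size{\A}^{1+\epsilon}$. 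The main obstacle, as noted, is the second paragraph's part (ii): making $\FO$-counting with free variables work on nowhere dense classes while controlling quantifier rank so the neighbourhood covers of \cite{grokresie14} still apply.
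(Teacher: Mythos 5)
Your proposal follows essentially the same route as the paper: reduce free variables via fresh unary predicates, process the expression bottom-up replacing each innermost $\P(t_1,\dots,t_m)$ by a new unary (or $0$-ary) relation symbol whose interpretation is computed from the values of the basic counting terms at every element, and reduce everything to $\FO$ model-checking plus evaluation of basic terms with at most one free variable. You also correctly identify the crux — $\FO$-counting on nowhere dense classes is new — and the correct resolution: a Gaifman-style locality decomposition into counts of tuples confined to small neighbourhoods (the paper's ``cl-terms'', obtained via Feferman--Vaught and an inclusion--exclusion over distance graphs), followed by a generalisation of the rank-preserving locality theorem of \cite{grokresie14} to formulas with free variables and to counting terms, a Removal Lemma to drive the Splitter-game recursion, and sparse neighbourhood covers to keep the work almost-linear. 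The one thing worth stressing that your sketch slides past is the \emph{order} of operations: the paper performs the locality decomposition \emph{before} invoking the nowhere-dense machinery, so the only counting subroutine ever needed is for \emph{connected local} terms (a tuple anchored to one element inside a bounded-radius ball), never for arbitrary $\FO$-counting — that restriction is precisely what lets the neighbourhood-cover/Removal-Lemma recursion go through without a general $\FO$-counting oracle.
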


Since the counting problem for an $\FOunCP$-formula $\phi(\ov{x})$ for $\ov{x}=(x_1,\ldots,x_k)$
coincides with the task of evaluating the ground term
$\Count{\ov{x}}{\phi(\ov{x})}$ of $\FOunCP$, we immediately obtain:

\begin{corollary}\label{cor:CountFO}
On effectively nowhere dense classes $\Class$, the counting problem for
$\FOunCP$ is fixed-parameter almost linear. That is, there is 
an algorithm with $\Ps$-oracle which receives as input an
$\epsilon>0$, an
$\FOunCP$-formula $\phi(\ov{x})$
of some signature $\sigma$, and
$\sigma$-structure $\A$ from $\Class$, and computes the number
$|\phi(\A)|$ of tuples $\ov{a}\in A^{|\ov{x}|}$ with
$\A\models\phi[\ov{a}]$ in time  
$f(\size{\phi},\epsilon)\cdot\size{\A}^{1+\epsilon}$, for a computable function $f$.
\end{corollary}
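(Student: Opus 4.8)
The plan is to obtain the corollary as an immediate consequence of the Main Theorem (Theorem~\ref{thm:main_FOunC-query-eval}), using only the observation that the rules~\eqref{item:countterm}--\eqref{item:plustimesterm} for forming counting terms, together with the rules~\eqref{item:atomic}--\eqref{item:exists}, all belong to the fragment $\FOunCP$ (which restricts only rule~\eqref{item:Q}). Given an $\FOunCP$-formula $\phi(\ov{x})$ of signature $\sigma$ with $\ov{x}=(x_1,\ldots,x_k)$ and $\free(\phi)\subseteq\set{x_1,\ldots,x_k}$, the first step is to form the counting term $t\deff\Count{\ov{x}}{\phi}$. This is a legal $\FOunCP$-counting term by rule~\eqref{item:countterm}, and since $\free(t)=\free(\phi)\setminus\set{x_1,\ldots,x_k}=\emptyset$ it is in fact a ground term; by clause~(5) of the semantics, its value $t^{\A}$ in any $\sigma$-structure $\A$ equals exactly the quantity $\setsize{\phi(\A)}$ that we wish to compute.

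Next I would package $t$ as an $\FOunCP$-query of the form~\eqref{eq:query-def}: take $k=0$ free variables, a single output term $t_1\deff t$ (admissible since $\free(t)=\emptyset$), and as the formula a sentence $\psi$ that holds in every structure, for instance $\psi\deff\nicht\exists z\,\nicht\,z{=}z$ as already used in Example~\ref{example:SQL-Count}. The resulting query $q\deff\query{(t)}{\psi}$ is then fed, together with the given $\epsilon$, the structure $\A\in\Class$, and the empty tuple $\emptytuple\in A^{0}$, to the algorithm of Theorem~\ref{thm:main_FOunC-query-eval}: that algorithm first verifies $\A\models\psi$ (which always holds) and then outputs $t^{\A}=\setsize{\phi(\A)}$.

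Finally I would track the size of $q$ to ensure that the parameter dependence remains a computable function of $\size{\phi}$: since every bound variable of the $\#$-operator in $t$ is a free variable of $\phi$ and hence occurs in $\phi$, and since $\psi$ has constant size, we get $\size{q}\le c\cdot\size{\phi}$ for an absolute constant $c$; composing the computable function $f_0$ supplied by Theorem~\ref{thm:main_FOunC-query-eval} with this linear bound yields the computable function $f$ defined by $f(\size{\phi},\epsilon)\deff f_0(c\cdot\size{\phi},\epsilon)$, which gives the claimed running time $f(\size{\phi},\epsilon)\cdot\size{\A}^{1+\epsilon}$. I do not expect any genuine obstacle: the whole argument is a syntactic reduction to the Main Theorem, and the only points deserving a line of justification are that $t$ and $\psi$ indeed lie inside $\FOunCP$ (clear, since the $\FOunCP$-restriction affects only applications of numerical predicates, i.e.\ rule~\eqref{item:Q}) and the elementary bookkeeping of the $\size{\phi}\to\size{q}$ blow-up.
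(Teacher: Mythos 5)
Your proposal is correct and is essentially the paper's own argument: the paper derives the corollary from Theorem~\ref{thm:main_FOunC-query-eval} by the single observation that $|\phi(\A)|$ equals the value of the ground $\FOunCP$-term $\Count{\ov{x}}{\phi(\ov{x})}$. Your version simply spells out the packaging into a query of the form~\eqref{eq:query-def} and the bookkeeping of $\size{q}$ against $\size{\phi}$, which the paper leaves implicit.
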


The first step towards proving Theorem~\ref{thm:main_FOunC-query-eval}
is to use a standard construction for getting rid of the free
variables. Given a query $q$ of the form \eqref{eq:query-def}, we
extend the signature $\sigma$ by fresh unary relation symbols
$X_1,\ldots,X_k$ and let
$\tilde{\sigma}\deff\sigma\cup\set{X_1,\ldots,X_k}$. 
Given a $\sigma$-structure $\A$ and a tuple $\ov{a}\in A^k$, we
consider the $\tilde{\sigma}$-expansion $\tilde{\A}$ of $\A$ where 
$X_i^{\tilde{\A}}\deff\set{a_i}$ for all $i\in[k]$.

It is straightforward to translate $\phi(\ov{x})$ into a
$\tilde{\sigma}$-sentence $\tilde{\phi}$ such that 
$\tilde{\A}\models\tilde{\phi}$ iff
$\A\models\phi[\ov{a}]$; and similarly, for each $j\in[\ell]$ we can
translate the term $t_j(\ov{x})$ into a ground term $\tilde{t}_j$ of
signature $\tilde{\sigma}$ such that
$\tilde{t}_j^{\tilde{\A}}=t_j^\A[\ov{a}]$: \ 
W.l.o.g.\ assume that all occurrences of the variables $x_1,\ldots,x_k$ in $\phi$ and
$t_1,\ldots,t_\ell$ are free.
We can choose
\ $
  \tilde{\phi} \ \deff \
  \exists x_1\,\cdots\,\exists x_k\ \big(\,
    \Und_{i=1}^k X_i(x_i) \ \und \ \phi(\ov{x})
  \,\big)
$\,. For each $j\in[\ell]$, the term $t_j$ is built using $+$ and
$\cdot$ from integers and from terms of the form
$\Count{\ov{y}}{\theta(\ov{x},\ov{y})}$.
By replacing each $\theta(\ov{x},\ov{y})$ by
\ $
  \tilde{\theta}(\ov{y}) \ \deff \
  \exists x_1\,\cdots\,\exists x_k\ \big(\,
    \Und_{i=1}^k X_i(x_i) \ \und \ \theta(\ov{x},\ov{y})
  \,\big)
$\,, \ we obtain a ground term $\tilde{t}_j$ with the desired property.

To prove Theorem~\ref{thm:main_FOunC-query-eval} it therefore
suffices to prove the following.

\begin{lemma}\label{lem:main_FOunC-query-eval}
Let $\Class$ be an effectively nowhere dense class of structures.
There is an algorithm with $\Ps$-oracle which receives as input an
$\epsilon>0$, a 
$\sigma$-structure $\A$ from $\Class$ (for some signature $\sigma$) 
and either an $\FOunC(\Ps)[\sigma]$-sentence $\phi$ or an
$\FOunC(\Ps)[\sigma]$-ground term $t$.
The algorithm decides whether $\A\models\phi$ 
and computes $t^\A$, resp.
Letting $\xi$ be the input expression $\phi$ or $t$, the algorithm's running time is 
$f(\size{\xi},\epsilon)\cdot\size{\A}^{1+\epsilon}$, for a computable function $f$.
\end{lemma}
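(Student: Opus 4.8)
The plan is to reduce the evaluation of an $\FOunC(\Ps)$-sentence or ground term to finitely many instances of the $\FO$-counting problem on structures from $\Class$ (or on structures obtained from $\Class$ by adding a bounded number of unary colour predicates, which keeps the class effectively nowhere dense), and then invoke the known $\FPT$-almost-linear counting algorithm for $\FO$ on nowhere dense classes. The key structural observation licensing this reduction is the restriction imposed by rule~(\ref{item:Q}') of Definition~\ref{def:FOunC}: whenever a predicate $\P(t_1,\ldots,t_m)$ is applied, the union of the free variables of $t_1,\ldots,t_m$ has size at most one. Consequently each $\P$-subformula $\P(t_1,\ldots,t_m)$ inside $\xi$ either is a sentence (all $t_i$ are ground terms) or has exactly one free variable $x$, and in the latter case, for a fixed structure $\A$, the value $t_i^\A[a]$ ranges over at most $|A|$ many integers as $a$ ranges over $A$.

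I would proceed by induction on the nesting structure of $\xi$, processing $\P$-subformulas from the innermost outward. For an innermost $\P$-subformula $\chi(x)=\P(t_1(x),\ldots,t_m(x))$ (or a sentence $\chi=\P(t_1,\ldots,t_m)$ with all $t_i$ ground), each $t_i$ is built from integers by $+$ and $\cdot$ from atomic counting terms $\Count{\ov y}{\theta(x,\ov y)}$ where $\theta$ is an $\FO$-formula on the already-processed part — so $\theta$ is genuinely first-order by the induction hypothesis, after we have replaced earlier $\P$-subformulas by equivalent atomic formulas over an expanded signature. For each element $a\in A$, I compute $t_i^\A[a]$ by running the $\FO$-counting algorithm on $\A$ expanded with a unary predicate marking $\set a$ — but doing this $|A|$ times naively would cost $|A|^{2+\epsilon}$, so instead I use the fact that the $\FO$-counting algorithm of~\cite{grokresie14,nesoss12,DBLP:conf/icdt/SegoufinV17} can, within the same time bound $f(k,\epsilon)\cdot n^{1+\epsilon}$, compute for a formula $\theta(x,\ov y)$ the list of pairs $(a, |\set{\ov b : \A\models\theta[a,\ov b]}|)$ over all $a\in A$ (equivalently, count $\Count{\ov y}{\theta}$ with $x$ as an output variable); this is exactly the counting-with-free-variables statement that Theorem~\ref{thm:main_FOunC-query-eval} itself delivers for $\FO$, so the argument should either cite it directly for the $\FO$ case or, more honestly, this is where the bulk of Sections~\ref{sec:Locality}--\ref{sec:nd} is really needed and the present lemma is not merely a black-box application. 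Having the table of values $t_i^\A[a]$ for all $a$, I use the $\Ps$-oracle once per $a$ (cost $O(n)$ total) to determine for which $a$ we have $\big(t_1^\A[a],\ldots,t_m^\A[a]\big)\in\sem\P$, obtaining a set $S_\chi\subseteq A$; I then expand the signature by a fresh unary symbol $X_\chi$ interpreted as $S_\chi$ and replace the subformula $\chi(x)$ by the atom $X_\chi(x)$ (for a sentence $\chi$, I replace it by $\true$ or $\false$ according to whether the single tuple of ground-term values lies in $\sem\P$). The resulting expression $\xi'$ has one fewer level of $\P$-nesting, is evaluated over a structure $\A'$ that is $\A$ expanded by boundedly many (at most $\size\xi$) unary predicates — hence still in an effectively nowhere dense class — and we recurse; after finitely many rounds $\xi$ becomes a pure $\FO[\sigma']$-sentence or ground term $\Count{\ov x}{\psi}$, which we evaluate by the $\FO$-algorithm.

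The total running time is $f(\size\xi,\epsilon)$ rounds, each costing $f'(\size\xi,\epsilon)\cdot n^{1+\epsilon}$ for the $\FO$-counting calls plus $O(n)$ oracle calls, so it stays $f(\size\xi,\epsilon)\cdot n^{1+\epsilon}$ as required, using that the expansions increase the order of the structure only additively (the universe is unchanged) and increase $\|\A\|$ only by $O(\size\xi\cdot n)$. The main obstacle is the one flagged above: the reduction is clean only if one already has an $\FPT$-almost-linear algorithm that counts $\FO$-definable relations \emph{with output variables} on nowhere dense classes, i.e.\ produces the value table $a\mapsto |\set{\ov b:\A\models\theta[a,\ov b]}|$; the existing literature gives $\FO$ \emph{model-checking} on nowhere dense classes~\cite{grokresie14} and $\FO$ counting only for existential $\FO$~\cite{nesoss12} or on locally-bounded-expansion classes~\cite{DBLP:conf/icdt/SegoufinV17}. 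So the real work — carried out in Sections~\ref{sec:normalform}, \ref{sec:Locality}, and~\ref{sec:nd} — is establishing that $\FO$-counting (with free variables) is $\FPT$-almost-linear on nowhere dense classes, by generalising the rank-preserving locality machinery of~\cite{grokresie14} from sentences to formulas and counting terms; once that $\FO$-level result is in hand, the inductive peeling-off of $\P$-subformulas described here completes the proof of Lemma~\ref{lem:main_FOunC-query-eval} and hence of Theorem~\ref{thm:main_FOunC-query-eval}.
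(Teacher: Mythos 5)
Your reduction strategy --- inductively replacing $\P$-subformulas (innermost outward) by fresh unary relation symbols over an expanded signature, and observing that the resulting structures stay in an effectively nowhere dense class --- is essentially the paper's Decomposition Theorem (Theorem~\ref{thm:normalformFOunC}). You correctly note the bookkeeping: rule~(\ref{item:Q}') guarantees each $\P$-atom has at most one free variable, so only symbols of arity $\leq 1$ are introduced, the order of $\A$ is unchanged, and $\size{\A}$ grows only additively. You also correctly and honestly identify the crux: the whole construction bottoms out in the need for an fpt-almost-linear algorithm computing, for an $\FO$-formula $\theta(x,\ov y)$ on a nowhere dense class, the table $a \mapsto \bigl|\setc{\ov b}{\A \models \theta[a,\ov b]}\bigr|$.

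What you have not done is close that gap, and the gap is the paper. Two issues deserve to be made explicit. First, reducing to ``$\FO$-counting with one output variable on nowhere dense classes'' is not a citation of a known result: as you yourself note, the literature gives only $\FO$ model-checking on nowhere dense classes \cite{grokresie14}, existential $\FO$ counting \cite{nesoss12}, and counting on the strictly smaller locally-bounded-expansion classes \cite{DBLP:conf/icdt/SegoufinV17}. Invoking Corollary~\ref{cor:CountFO} here would be circular, since that corollary is itself a consequence of the theorem being proved. Second, and this is the genuinely missing idea rather than a missing citation, the paper does \emph{not} reduce to arbitrary $\FO$-counting terms $\Count{\ov y}{\theta}$. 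It first applies Gaifman's theorem and a Feferman--Vaught-style decomposition by connected components (Lemmas~\ref{lem:normalform:terms} and~\ref{lem:normalform:terms2}, Theorem~\ref{cor:normalformFO}) to produce \emph{connected local terms}: counting terms $\Count{\ov y}{\psi(\ov y) \und \delta^\sigma_{G,2r+1}(\ov y)}$ with $\psi$ $r$-local and $G$ a connected distance pattern. Connectivity and locality are precisely what make the term evaluable by inspecting a single cluster of a neighbourhood cover (Remark~\ref{rem:radius}, Section~\ref{sec:Locality}); without that normal form, the Localisation Lemma~\ref{lem:LocalityAndTermsNormalform} and the Removal Lemma~\ref{lem:removal} have nothing to grip, and the splitter-game recursion of Section~\ref{sec:nd} does not go through. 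So the proposal is a correct high-level reduction plus an accurate diagnosis of where the difficulty lies, but it stops short of a proof; the cl-term normal form and the rank-preserving locality machinery that evaluates it are not black boxes you can pull off the shelf --- they are the content that turns this sketch into Lemma~\ref{lem:main_FOunC-query-eval}.
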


The remainder of the paper is dedicated to the proof of Lemma~\ref{lem:main_FOunC-query-eval}.
In fact, we prove a slightly stronger result: We cannot only evaluate
sentences and ground terms, but also formulas with one free variable
and unary terms simultaneously at all elements of the input structure,
within the same time bounds.

\section{A decomposition of $\FOunC(\Ps)$}\label{sec:normalform}

The first step towards proving Lemma~\ref{lem:main_FOunC-query-eval}
is to provide a decomposition of $\FOunC(\Ps)$-expressions
into simpler
expressions that can be
evaluated in a structure $\A$ by exploring for each element $a$ in
$\A$'s universe only a local neighbourhood around $a$.
This section's main result is the 
Decomposition Theorem~\ref{thm:normalformFOunC}.

Let us fix a signature $\sigma$.

\subsection{Connected local terms}

The following lemma summarises easy facts concerning neighbourhoods;
the proof is straightforward.

\begin{lemma}\label{lem:basic_facts}
  Let $\A$ be a  $\sigma$-structure,
  $r\geq 0$, $k\geq 1$, and $\ov{a}=(a_1,\ldots,a_k)\in A^k$.
\\
    $\NrA{a_1,a_2}$ is connected $\iff$
    $\dist^\A(a_1,a_2)\leq 2r{+}1$.
\\
    If $\NrA{\ov{a}}$ is connected, then
    $\nrA{\ov{a}}\subseteq \neighb{r+(k-1)(2r+1)}{\A}{a_i}$,
    for each $i\in[k]$.
\end{lemma}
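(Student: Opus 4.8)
The statement collects two elementary facts about $r$-neighbourhoods, and the proof is a matter of tracing distances in the Gaifman graph $G_\A$. I would handle the two claims in order.

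For the first claim, I would argue both directions directly. Suppose $\dist^\A(a_1,a_2)\le 2r{+}1$. Any $b\in\nrA{a_1,a_2}$ has $\dist^\A(a_i,b)\le r$ for $i=1$ or $i=2$, hence lies on (or near) a path of length $\le r$ to $a_1$ or $a_2$; concatenating with a shortest $a_1$--$a_2$ path of length $\le 2r{+}1$ shows every vertex of $\nrA{a_1,a_2}$ is connected within $\NrA{a_1,a_2}$ to $a_1$, so the induced substructure is connected. (One has to note that all intermediate vertices of these paths are themselves within distance $r$ of $a_1$ or $a_2$ and hence stay inside the ball — this is the one small point that needs a line of justification.) Conversely, if $\NrA{a_1,a_2}$ is connected, take a path from $a_1$ to $a_2$ inside it; since $a_1,a_2\in\nrA{a_1,a_2}$, such a path exists, and I claim it can be taken to have length $\le 2r{+}1$. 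Indeed, walk along the path from $a_1$; as long as we are within distance $r$ of $a_1$ we are "on the $a_1$ side", and the first vertex on the path at distance $r{+}1$ from $a_1$ must, being in $\nrA{a_1,a_2}$, be within distance $r$ of $a_2$; so from that vertex to $a_2$ costs $\le r$ more edges, giving total $\le (r{+}1)+r = 2r{+}1$. (If no vertex on the path ever reaches distance $r{+}1$ from $a_1$, then $a_2$ itself is within distance $r$ of $a_1$, which is even better.)

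For the second claim, assume $\NrA{\ov{a}}$ is connected and fix $i\in[k]$. Take any $b\in\nrA{\ov{a}}$; then $\dist^\A(a_j,b)\le r$ for some $j\in[k]$. Since $\NrA{\ov{a}}$ is connected and contains all of $a_1,\dots,a_k$, the restriction of the first claim's argument (or a direct induction on $k$) gives $\dist^\A(a_i,a_j)\le (k-1)(2r{+}1)$: consecutive "anchors" $a_i$ and $a_j$ can be linked by concatenating at most $k-1$ segments each of length $\le 2r{+}1$, using that the neighbourhood of any pair $a_p,a_q$ appearing along a connecting path is connected with the relevant bound. Then by the triangle inequality $\dist^\A(a_i,b)\le \dist^\A(a_i,a_j)+\dist^\A(a_j,b)\le (k-1)(2r{+}1)+r$, i.e.\ $b\in\neighb{r+(k-1)(2r+1)}{\A}{a_i}$, as required.

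The only mildly delicate point is establishing the pairwise distance bound $\dist^\A(a_i,a_j)\le (k-1)(2r{+}1)$ from connectedness of the whole neighbourhood $\NrA{\ov{a}}$ — one wants to say that a shortest path through $\NrA{\ov{a}}$ passes through regions each "owned" by some $a_p$, and the hand-off between consecutive owners costs $\le 2r{+}1$ by (essentially) the converse direction of the first claim. I would formalize this by an induction on $k$: delete $a_k$ and argue that if $\NrA{a_1,\dots,a_k}$ is connected then so is $\NrA{a_1,\dots,a_{k-1}}$ after possibly re-routing, or more cleanly, just bound the diameter of $\NrA{\ov a}$ directly by $k(2r{+}1)$ via a walk that visits the portions around $a_1,\dots,a_k$ in turn. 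Everything else is routine distance-chasing, matching the paper's remark that the proof is straightforward.
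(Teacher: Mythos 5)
The paper omits the proof, declaring it straightforward, so there is no ``paper's own proof'' to match against; your argument is the natural one and is essentially correct. The first claim is fine (with the small point about the first vertex on the path at distance $r{+}1$ from $a_1$ being existent because the distance to $a_1$ changes by at most one per step along the path). For the second claim, the key sublemma $\dist^\A(a_i,a_j)\le (k-1)(2r+1)$ is the right thing to establish, and your triangle-inequality step from there is correct.

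One concrete caveat on the formalization you sketch: the ``delete $a_k$ and re-route'' induction does \emph{not} work as stated, because removing an anchor can disconnect the neighbourhood. For instance, on the path $1-2-3-4-5$ with $r=1$, $a_1=1$, $a_2=5$, $a_3=3$, the set $\NrA{a_1,a_2,a_3}$ is connected (it is the whole path), but $\NrA{a_1,a_2}=\{1,2\}\cup\{4,5\}$ is not. Your second suggestion is the right one and can be made clean as follows: define an auxiliary graph $H$ on $[k]$ with an edge $pq$ iff $\dist^\A(a_p,a_q)\le 2r{+}1$. If $H$ were disconnected, say by a partition $[k]=S\cup T$ with no $H$-edge across, then $\bigcup_{p\in S}N_r^\A(a_p)$ and $\bigcup_{q\in T}N_r^\A(a_q)$ would be disjoint with no $G_\A$-edge between them (any such edge would put $a_p,a_q$ within distance $2r{+}1$), contradicting connectedness of $\NrA{\ov a}$. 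Hence $H$ is connected, any $i,j$ are joined by an $H$-path of length at most $k-1$, and each $H$-edge contributes at most $2r{+}1$ to the $G_\A$-distance, giving $\dist^\A(a_i,a_j)\le (k-1)(2r{+}1)$ as you wanted.
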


The notion of local formulas is defined as usual \cite{Lib04}:
Let $r\in\NN$.
An $\FOC(\Ps)[\sigma]$-form\-ula $\phi(\ov{x})$ with free variables
$\ov{x}=(x_1,\ldots,x_k)$ is \emph{$r$-local around
  $\ov{x}$}
if for every $\sigma$-structure $\A$ and all
$\ov{a}\in A^k$ we have
\ $\A\models\phi[\ov{a}]
  \iff
 \NrA{\ov{a}}\models\phi[\ov{a}]$\,.
A formula is \emph{local} if it is $r$-local for some $r\in\NN$.

For an $r\in\NN$ it is
straightforward to construct an $\FO[\sigma]$-formula
$\dist^{\sigma}_{\leq r}(x,y)$ such that for every $\sigma$-structure $\A$
and all $a,b\in A$ we have
\[
  \A\models\dist^{\sigma}_{\leq r}[a,b]
  \quad \iff\quad \dist^{\A}(a,b)\leq r\,.
\]
To improve readability, we  write
$\dist^\sigma(x,y)\,{\leq}\, r$ for 
$\dist^{\sigma}_{\leq r}(x,y)$, and 
$\dist^\sigma(x,y)\,{>}\,r$ for $\nicht\dist^{\sigma}_{\leq r}(x,y)$.

For every $k\in\NNpos$ we let $\Graphclass_k$ be the set of all
undirected graphs $G$ with vertex set $[k]$.
For a graph $G \in\Graphclass_k$, a number $r\in\NN$, a 
tuple $\ov{y}=(y_1,\ldots,y_k)$ of $k$ pairwise distinct variables,
we consider the formula
\begin{equation*}\label{eq:delta-formula}
\displaystyle
  \delta^\sigma_{G,r}(\ov{y}) 
  \quad \deff \quad
  \ \
  \Und_{\set{i,j}\in E(G)}\!\!\!\!\! \dist^{\sigma}(y_i,y_j)\,{\leq}\, r
  \ \ \und \!\!\!\!
  \Und_{\set{i,j}\not\in E(G)}\!\!\!\!\dist^{\sigma}(y_i,y_j)\,{>}\,r\,.
\end{equation*}
connected components of the
$r$-neighbourhood $\NrA{\ov{a}}$ correspond to the connected
components of $G$.
Clearly, the formula
$\delta^{\sigma}_{G,2r+1}(\ov{y})$ is $r$-local around its free variables $\ov{y}$.

The main ingredient of our decomposition of $\FOunC(\Ps)$-expressions
are the connected local terms (\emph{cl-terms}, for short),
defined as follows.

\begin{definition}[cl-Terms]\label{def:cl-term}
Let $r\in\NN$ and $k\in\NNpos$.
A \emph{basic cl-term (of radius $r$ and width $k$)} is a ground term
$g$ of the form
 \[
   \Count{(y_1,\ldots,y_k)}{
     \big(\,\psi(y_1,\ldots,y_k)\;\und\;\delta^\sigma_{G,2r+1}(y_1,\ldots,y_k)\,\big) }
 \]
or a unary term $u(y_1)$ of the form
 \[
   \Count{(y_2,\ldots,y_k)}{\big(\,\psi(y_1,\ldots,y_k)\;\und\;\delta^\sigma_{G,2r+1}(y_1,\ldots,y_k)\,\big)}
 \]
 where
 $\ov{y}=(y_1,\ldots,y_k)$ is a tuple of $k$ pairwise distinct variables,
 $\psi(y_1,\ldots,y_k)$ is an $\FO[\sigma]$-formula that is $r$-local around 
 $\ov{y}$, and $G\in\Graphclass_k$ is \emph{connected}.

 A \emph{cl-term (of radius $\leq r$ and width $\leq k$)} is built from basic cl-terms (of
 radius $\leq r$ and width $\leq k$) and integers by
 using rule 
 (\ref{item:plustimesterm}) of Definition~\ref{def:FOC}.
I.e., a cl-term is a polynomial with integer coefficients, built from
basic cl-terms $t_1,\ldots,t_\ell$ (for $\ell\geq 0$).
\end{definition}

\begin{remark}\label{rem:radius}\upshape
Note that cl-terms are ``easy'' with
respect to query evaluation in the following sense.
Consider a basic cl-term $u(y_1)$ of the form
\[
  \Count{(y_2,\ldots,y_k)}{\big(\,\psi(y_1,\ldots,y_k)\und
  \delta^\sigma_{G,2r+1}(y_1,\ldots,y_k)\,\big)}\,.
\] 
Recall from Definition~\ref{def:cl-term} that $G$ is a \emph{connected} graph.
Therefore, given a $\sigma$-structure $\A$ and an element $a_1\in A$,
the number $u^\A[a_1]$ 
can be computed by only considering the $R$-neighbourhood of $a_1$, for 
$R\deff r+(k{-}1)(2r{+}1)$ (cf.\ Lemma~\ref{lem:basic_facts}).
After having computed the numbers $u^{\A}[a_1]$ for all $a_1\in A$,
the ground cl-term $g\deff$
\[
 \Count{(y_1,\ldots,y_k)}{\big(\,\psi(y_1,\ldots,y_k)\und
  \delta^\sigma_{G,2r+1}(y_1,\ldots,y_k)\,\big)}
\] 
can be evaluated easily, since 
$g^\A = \sum_{a_1\in A}u^{\A}[a_1]$.
\end{remark}

Our decomposition of $\FOunC(\Ps)$-expressions
proceeds by induction on the construction of the input expression. 
The main technical tool for the construction is
provided by the following lemma.

\begin{lemma}\label{lem:normalform:terms}
Let $r\geq 0$,
$k\geq 1$, and let $\ov{y}=(y_1,\ldots,y_k)$ be a tuple of
$k$ pairwise distinct variables.
Let $\psi(\ov{y})$ be an  $\FO[\sigma]$-formula 
that is $r$-local around its free variables $\ov{y}$, 
and consider the terms $g$ and $u(y_1)$ with
  \begin{eqnarray*}
     g &\deff &  \Count{(y_1,\ldots,y_k)}{\psi(y_1,\ldots,y_k)}
\\
     u(y_1) & \deff & \Count{(y_2,\ldots,y_k)}{\psi(y_1,\ldots,y_k)}\,.
  \end{eqnarray*}
There exists a ground cl-term $\hat{g}$ and a
unary cl-term $\hat{u}(y_1)$, both of radius $\leq r$ and width $\leq k$,
such that $ \hat{g}^{\A} = g^{\A}$
and $\hat{u}^{\A}[a] = u^{\A}[a]$ 
for every $\sigma$-structure $\A$ and every $a\in A$.

Furthermore, there is an algorithm which upon input of $r$ and
$\psi(\ov{y})$ constructs $\hat{g}$ and $\hat{u}(y_1)$.
\end{lemma}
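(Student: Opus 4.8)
The plan is to decompose the $k$-tuple summation in $g$ (resp.\ $u(y_1)$) according to the ``distance pattern'' of the tuple, i.e.\ according to which pairs $(y_i,y_j)$ are at distance $\le 2r{+}1$ and which are not. Formally, I would use the fact that the formulas $\delta^\sigma_{G,2r+1}(\ov y)$ for $G\in\Graphclass_k$ partition $A^k$: for every $\sigma$-structure $\A$ and every $\ov a\in A^k$ there is exactly one $G\in\Graphclass_k$ with $\A\models\delta^\sigma_{G,2r+1}[\ov a]$. Hence
\[
  g \;\equiv\; \sum_{G\in\Graphclass_k}
     \Count{(y_1,\ldots,y_k)}{\big(\,\psi(\ov y)\;\und\;\delta^\sigma_{G,2r+1}(\ov y)\,\big)}\,,
\]
and likewise for $u(y_1)$. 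Each summand on the right is ``almost'' a basic cl-term; the only thing missing is that the underlying graph $G$ need not be connected. So the real work is to handle a single, fixed distance pattern $G$ whose connected components are $C_1,\ldots,C_p$ (as vertex sets $C_s\subseteq[k]$), and to factor the count over these components.

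The key step is the following observation. If $\A\models\delta^\sigma_{G,2r+1}[\ov a]$, then the $r$-neighbourhoods of the sub-tuples $\ov a{\restriction}C_1,\ldots,\ov a{\restriction}C_p$ are pairwise disjoint and, moreover, the induced substructure $\NrA{\ov a}$ is the disjoint union of the $\NrA{\ov a{\restriction}C_s}$; since $\psi$ is $r$-local around $\ov y$, whether $\psi$ holds of $\ov a$ depends only on $\NrA{\ov a}$ with the marked elements. The subtle point is that in general an $r$-local formula over a disjoint union need not decompose as a conjunction over the components — but a standard Feferman–Vaught / Hanf-style argument over disjoint unions does apply: there is a finite set of ``component formulas'' $\chi_{s,1},\ldots,\chi_{s,m_s}$ (each $r$-local around the variables indexed by $C_s$, and each forcing the distance pattern inside $C_s$ to be the connected graph $G{\restriction}C_s$) and a Boolean combination describing exactly when $\psi$ holds; because $\psi$ is over $\FO$ and the components are genuinely disjoint, one can push this to a finite disjunction of ``product'' patterns, so that
\[
  \Count{(\ov y)}{\big(\psi(\ov y)\und\delta^\sigma_{G,2r+1}(\ov y)\big)}
  \;\equiv\;
  \sum_{\text{valid tuples }(\chi_{1,i_1},\ldots,\chi_{p,i_p})}\;
     \prod_{s=1}^{p}\Count{(\ov y{\restriction}C_s)}{\big(\chi_{s,i_s}(\ov y{\restriction}C_s)\und\delta^\sigma_{G{\restriction}C_s,2r+1}(\ov y{\restriction}C_s)\big)}\,.
\]
Each factor is now a basic cl-term of radius $\le r$ and width $|C_s|\le k$ (its graph $G{\restriction}C_s$ is connected), so the whole right-hand side is a cl-term as required; for $\hat u(y_1)$ one treats the component $C_s$ containing $1$ as a unary basic cl-term with free variable $y_1$ and the remaining components as ground basic cl-terms. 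The decomposition of $\psi$ into the $\chi_{s,i}$ is effective (one can enumerate all $r$-local $\FO$-formulas in the relevant variables up to a computable bound, or argue via Hanf locality as in \cite{KS_LICS17}), which yields the claimed algorithm.

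The main obstacle I expect is making the ``product decomposition'' of the $r$-local formula $\psi$ over the connected components precise: one has to be careful that $r$-locality around $\ov y$ really does let $\NrA{\ov a}$ split as the disjoint union of the component-neighbourhoods (this needs $\dist^\A(a_i,a_j)>2r{+}1$ for $i,j$ in different components, which is exactly what $\delta^\sigma_{G,2r+1}$ guarantees via Lemma~\ref{lem:basic_facts}), and that the relativisation of $\psi$ to each component is still expressible by an $\FO$-formula that is $r$-local around the component's variables and forces the intra-component distances. Everything else — the distance-pattern partition, the summation identities, the bookkeeping that radii stay $\le r$ and widths stay $\le k$, and the effectivity of the construction — is routine.
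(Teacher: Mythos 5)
Your first step --- partitioning $A^k$ by the distance pattern $G\in\Graphclass_k$ via the formulas $\delta^\sigma_{G,2r+1}$, and then using a Feferman--Vaught argument on the component structure --- is indeed the paper's starting point. But the central display of your proposal, the identity
\[
  \Count{(\ov y)}{\big(\psi(\ov y)\und\delta^\sigma_{G,2r+1}(\ov y)\big)}
  \;\equiv\;
  \sum\;\prod_{s=1}^{p}\Count{(\ov y{\restriction}C_s)}{\big(\chi_{s,i_s}(\ov y{\restriction}C_s)\und\delta^\sigma_{G{\restriction}C_s,2r+1}(\ov y{\restriction}C_s)\big)}\,,
\]
is false. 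The left-hand side counts only tuples where elements from \emph{different} components are at distance $>2r{+}1$, while the right-hand side is a sum of Cartesian-product counts and therefore imposes no cross-component separation at all. Concretely, take $k=2$, $G$ the edgeless graph, and $\psi\deff y_1{=}y_1$ (trivially $r$-local). The left-hand side counts pairs $(a_1,a_2)$ with $\dist^{\A}(a_1,a_2)>2r{+}1$; your right-hand side evaluates to $|A|^2$. These disagree on essentially every structure.

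What you are missing is an inclusion--exclusion correction, and this is exactly where the paper's induction on the number of connected components lives. The paper splits off only the component $V'$ containing $1$ from $V''\deff V\setminus V'$, obtains from Feferman--Vaught a decomposition into disjuncts $\vartheta'(\ov y')\und\vartheta''(\ov y'')$, and then writes the satisfying set as $\big(S_{\vartheta'}\times S_{\vartheta''}\big)\setminus T$, where $T$ collects the ``leaking'' tuples that satisfy $\vartheta'$ and $\vartheta''$ but violate some cross-component distance constraint. The set $T$ is then itself partitioned by distance pattern $H\supsetneq G$ (with $H{\restriction}V'=G'$, $H{\restriction}V''=G''$); each such $H$ has at most $c$ components, so the induction hypothesis applies. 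This yields a cl-term of the form $t'\cdot t'' - \sum_H t_H$ rather than a bare product. Your plan acknowledges the disjoint-union and locality facts that justify the Feferman--Vaught step, but leaves out this subtraction, which is the actual technical content of the lemma; without it the ``product decomposition'' simply does not exist.
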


\begin{proof}
For a $\sigma$-structure $\A$ and a formula $\vartheta(\ov{y})$ we
consider the set
\[
 S_\vartheta^\A \quad \deff\quad
 \setc{\ \ov{a}=(a_1,\ldots,a_k)\in A^k \ }{\ \A\models\vartheta[\ov{a}]\ }\,.
\]
Note that for every graph $G\in\Graphclass_k$ the formula
\[
  \psi_G(\ov{y}) \quad \deff \quad
  \psi(\ov{y})\;\und\; \delta^{\sigma}_{G,2r+1}(\ov{y})
\]
is $r$-local around $\ov{y}$. Furthermore, for every
$\sigma$-structure $\A$, the set $S_\psi^{\A}$ is the disjoint union of
the sets $S_{\psi_G}^{\A}$ for all $G\in \Graphclass_k$.
Therefore, 
\begin{eqnarray*}
  g
& \ \ \equiv \ \
& \sum_{G\in\Graphclass_k}
\Count{(y_1,\ldots,y_k)}{\psi_G(y_1,\ldots,y_k)}\qquad\text{and}
\\
  u(y_1)
& \ \ \equiv \ \
& \sum_{G\in\Graphclass_k}
\Count{(y_2,\ldots,y_k)}{\psi_G(y_1,y_2,\ldots,y_k)}\,.
\\
\end{eqnarray*}
To complete the proof of Lemma~\ref{lem:normalform:terms}, it
therefore suffices to show that for every $G\in\Graphclass_k$ the
terms
\begin{eqnarray*}
  g^\psi_G 
& \ \ \deff \ \ 
& \Count{(y_1,\ldots,y_k)}{\psi_G(y_1,\ldots,y_k)}
  \qquad\text{and}
\\
  u^\psi_G(y_1)
& \ \ \deff \ \ 
& \Count{(y_2,\ldots,y_k)}{\psi_G(y_1,y_2,\ldots,y_k)}
\end{eqnarray*}
are equivalent to cl-terms of radius $r$.
We prove this by an
induction on the number of connected components of $G$. Precisely, we
show that the following statement $(*)_c$ is true for every $c\in\NNpos$.
\begin{enumerate}[$(*)_c$:]
\item[$(*)_c$:]
  For every $k\geq c$, for every tuple $\ov{y}=(y_1,\ldots,y_k)$ of
  $k$ pairwise distinct variables, for every $r\geq 0$, for every
  $\FO[\sigma]$-formula $\psi(\ov{y})$ that is $r$-local around
  $\ov{y}$, and for every graph $G\in\Graphclass_k$ that has at most
  $c$ connected components, the terms $g^\psi_G$ and $u^\psi_G(y_1)$
  are equivalent to cl-terms of radius $r$.
\end{enumerate}
The induction base for $c=1$ is trivial: \ it involves
only \emph{connected} graphs $G$, for which by Definition~\ref{def:cl-term}
the terms $g^\psi_G$ and $u^\psi_G(y_1)$ are basic cl-terms (of radius $r$).

For the induction step from $c$ to $c{+}1$, consider some $k\geq
c{+}1$ and a graph $G=(V,E)\in\Graphclass_k$ that has $c{+}1$ connected
components. 
Let $V'$ be the set of all nodes of $V$ that are connected to the
node $1$, and let $V''\deff V\setminus V'$. 

Let $G'\deff\inducedSubStr{G}{V'}$ and $G''\deff\inducedSubStr{G}{V''}$
be the induced subgraphs of $G$ on $V'$ and $V''$, respectively.
Clearly, $G$ is the disjoint union of $G'$ and $G''$, $G'$ is connected, and $G''$ has $c$ connected components.

To keep notation simple, we assume (without loss of generality)
that $V'=\set{1,\ldots,\ell}$ and $V''=\set{\ell{+}1,\ldots,k}$ for
some $\ell$ with $1\leq \ell<k$.
For any tuple $\ov{z}=(z_1,\ldots,z_k)$ we 
let $\ov{z}{}'\deff (z_1,\ldots,z_\ell)$ and $\ov{z}{}''\deff
(z_{\ell+1},\ldots,z_{k})$.

Now consider a number $r\geq 0$ and the formula
$\delta^{\sigma}_{G,2r+1}(\ov{y})$ for $\ov{y}=(y_1,\ldots,y_k)$.
For every $\sigma$-structure $\A$
and every tuple $\ov{a}=(a_1,\ldots,a_k)\in A^k$ with
$\A\models\delta^{\sigma}_{G,2r+1}[\ov{a}]$,
the $r$-neighbourhood $\NrA{\ov{a}}$ is the disjoint union of the
$r$-neighbourhoods $\NrA{\ov{a}{}'}$ and $\NrA{\ov{a}{}''}$.

Let $\psi(\ov{y})$ be an $\FO[\sigma]$-formula that is $r$-local
around its free variables. By using the Feferman-Vaught Theorem (cf.,
\cite{FefermanVaughtPaper,DBLP:journals/apal/Makowsky04}),
we can compute a decomposition of $\psi(\ov{y})$ into a formula 
$\hat{\psi}(\ov{y})$ (that depends on $G$)
of the form
\[
  \Oder_{i\in I}\ \ \Big(\ 
   \psi_i{}'(\ov{y}{}') \ \und \ 
   \psi_i{}''(\ov{y}{}'')  
  \ \Big)\,,
\]
where $I$ is a finite non-empty set, 
each $\psi_i{}'(\ov{y}{}')$ is an $\FO[\sigma]$-formula that is
$r$-local around $\ov{y}{}'$, 
each $\psi_i{}''(\ov{y}{}'')$ is an $\FO[\sigma]$-formula that is
$r$-local around $\ov{y}{}''$, and 
for every $\sigma$-structure
$\A$ and every $\ov{a}\in A^k$ with
$\A\models\delta^{\sigma}_{G,2r+1}[\ov{a}]$ the following is true:
\begin{enumerate}[(1)]
\item
 there exists at most one $i\in I$ such that \
 $(\A,\ov{a})\ \models \
  \big(\,\psi_i{}'(\ov{y}{}') \und \psi_i{}''(\ov{y}{}'')\,\big)$, \quad and
\item
 $\A\models\psi[\ov{a}] \ \iff \ 
  \A\models\hat{\psi}[\ov{a}]$\,.
\end{enumerate}
This implies that the set $S^{\A}_{\psi_G}$ is the
disjoint union
of the sets
$S^{\A}_{(\psi_i{}'\und\psi_i{}''\und\delta^{\sigma}_{G,r})}$ for all
$i\in I$.
Consequently,
\begin{eqnarray*}
  g_G^{\psi}
&  \quad \equiv \quad
&  \sum_{i\in I}\
\Count{(y_1,\ldots,y_k)}{\big(\,\psi_i{}'(\ov{y}')\,\und\,\psi_i{}''(\ov{y}'')\,\und\,\delta^{\sigma}_{G,2r+1}(\ov{y})\,\big)}
\qquad\text{and} 
\\[1ex]
  u_G^{\psi}(y_1)
&  \quad \equiv \quad
&  \sum_{i\in I}\ 
\Count{(y_2,\ldots,y_k)}{\big(\,\psi_i{}'(\ov{y}')\,\und\,\psi_i{}''(\ov{y}'')\,\und\,\delta^{\sigma}_{G,2r+1}(\ov{y})\,\big)}\,.
\end{eqnarray*}
To complete the proof, it suffices to show that each of the terms
\begin{eqnarray*}
  g_G^{\psi,i}
& \quad\deff\quad
&  
\Count{(y_1,\ldots,y_k)}{\big(\,\psi_i{}'(\ov{y}')\,\und\,\psi_i{}''(\ov{y}'')\,\und\,\delta^{\sigma}_{G,2r+1}(\ov{y})\,\big)}
\qquad\text{and} 
\\[1ex]
  u_G^{\psi,i}(y_1)
&  \quad \deff \quad
& 
\Count{(y_2,\ldots,y_k)}{\big(\,\psi_i{}'(\ov{y}')\,\und\,\psi_i{}''(\ov{y}'')\,\und\,\delta^{\sigma}_{G,2r+1}(\ov{y})\,\big)}
\end{eqnarray*}
is equivalent to a cl-term of radius $r$.

By the definition of the formula $\delta^{\sigma}_{G,2r+1}(\ov{y})$ we
obtain that the formula
\
$\psi_i{}'(\ov{y}')\,\und\,\psi_i{}''(\ov{y}'')\,\und\,\delta^{\sigma}_{G,2r+1}(\ov{y})$
\
is equivalent to the formula
\begin{equation}\label{eq:theta-formulas}
\underbrace{ \Big(\ \psi_i{}'(\ov{y}')\,\und\,\delta^{\sigma}_{G',2r+1}(\ov{y}')\
\Big)}_{\textstyle =:\ \vartheta'(\ov{y}')}
 \ \und \ 
\underbrace{ \Big(\ \psi_i{}''(\ov{y}'')\,\und\,\delta^{\sigma}_{G'',2r+1}(\ov{y}'')\
\Big)}_{\textstyle =:\ \vartheta''(\ov{y}'')}
 \ \und \ 
 \Und_{j'\in V'\atop j''\in V''} \dist^{\sigma}(y_{j'},y_{j''})>2r{+}1 \,.
\end{equation}
Therefore, for every $\sigma$-structure $\A$ we have
\[
 S^{\A}_{\psi'_i\und\psi''_i\und\delta^{\sigma}_{G,r}}
 \quad = \quad
 \big( S^{\A}_{\vartheta'}\times S^{\A}_{\vartheta''} \big) \
 \setminus\ T^{\A}\,,
 \qquad\quad \text{for}
\]
\[
  T^{\A} \quad \deff \quad
  \bigsetc{\ 
   \ov{a}\in A^k \ }{\ 
   \A\models\vartheta'[\ov{a}'], \
   \A\models\vartheta''[\ov{a}''],\ 
   (\A,\ov{a})\not\models  \Und_{j'\in V'\atop j''\in V''}
   \dist^{\sigma}(y_{j'},y_{j''})>2r{+}1
  \ }\,.
\]
Let $\GraphclassH$ be the set of all graphs
$H\in\Graphclass_k$ with $H\neq G$, but
$\inducedSubStr{H}{V'}=G'$ and $\inducedSubStr{H}{V''}=G''$.
Clearly, every $H\in\GraphclassH$
has at most $c$ connected components. Furthermore, 
it is straightforward to see that 
for every $\sigma$-structure $\A$, the set $T^{\A}$ is the disjoint
union of the sets
\[
  T^{\A}_{H}
  \quad\deff\quad
  \bigsetc{\ 
   \ov{a}\in A^k \ \;}{\ \;
   \A\models\vartheta'[\ov{a}'], \ \
   \A\models\vartheta''[\ov{a}''],\ \
   \A\models\delta^{\sigma}_{H,2r+1}[\ov{a}]
  \ }
\]
for all $H\in\GraphclassH$.
Since $T^\A\subseteq (S^\A_{\vartheta'}\times S^\A_{\vartheta''})$, we
obtain that
\[
 \big(g_G^{\psi,i}\big)^{\A}
 \quad = \quad
  |S^\A_{\psi'_i\und\psi''_i\und\delta^{\sigma}_{G,r}}|
  \quad = \quad
  |S^\A_{\vartheta'}|\cdot |S^\A_{\vartheta''}| \ - \ \sum_{H\in
    \GraphclassH} |T^\A_H|\,;
\]
and this holds for every $\sigma$-structure $\A$. Therefore,
\[
  g_G^{\psi,i}
  \quad \equiv \quad
  \underbrace{\Big( \Count{\ov{y}'}{\vartheta'(\ov{y}')}
    \Big)}_{\textstyle =: \ t'}
  \cdot 
  \underbrace{\Big( \Count{\ov{y}''}{\vartheta''(\ov{y}'')}
    \Big)}_{\textstyle =:\ t''}
  \ - \ 
  \sum_{H\in\GraphclassH}
  \underbrace{\Count{\ov{y}}{\big(\;\vartheta'(\ov{y}')\und\vartheta''(\ov{y}'')\und\delta^{\sigma}_{H,2r+1}(\ov{y})
      \;\big)}}_{\textstyle =: t_H}\,.
\]
By the induction hypothesis $(*)_c$, each of the terms $t'$, $t''$, and $t_H$
is equivalent to a cl-term of radius $r$.
Hence, also $g_G^{\psi,i}$ is equivalent to a cl-term of radius $r$.

To complete the proof, we need to show that also
$u_G^{\psi,i}(y_1)$ is equivalent to a cl-term (of radius $r$).
This can be done in a 
We proceed by a similar reasoning as above. Note that for every
$\sigma$-structure $\A$ and every $a_1\in A$,
\[
 \big(u_G^{\psi,i}\big)^{\A}[a_1]
 \quad = \quad
  |S^{\A,a_1}_{\psi'_i\und\psi''_i\und\delta^{\sigma}_{G,2r+1}}|
\]
where $S^{\A,a_1}_{\psi'_i\und\psi''_i\und\delta^{\sigma}_{G,2r+1}}$ is
defined as the set of all tuples $(a_2,\ldots,a_k)\in A^{k-1}$ such
that
\[
    \big(\A,(a_1,a_2,\ldots,a_k)\big)\ \models\
     \psi_i{}'(y_1,y_2,\ldots,y_\ell) \; \und\;
     \psi_i{}''(y_{\ell+1},\ldots,y_k)\;\und \;
     \delta^{\sigma}_{G,2r+1}(y_1,y_2,\ldots,y_k)\,.
\]
By \eqref{eq:theta-formulas} we know that $S^{\A,a_1}_{\psi'_i\und\psi''_i\und\delta^{\sigma}_{G,2r+1}}$ 
is the set of all tuples $(a_2,\ldots,a_k)\in A^{k-1}$ such
that
\[
    \big(\A,(a_1,a_2,\ldots,a_k)\big)\ \; \models\ \;
     \vartheta'(y_1,y_2,\ldots,y_\ell) \ \und\ 
     \vartheta''(y_{\ell+1},\ldots,y_k)\ \und \;
     \Und_{j'\leq \ell \atop j''\geq \ell+1}\dist^{\sigma}(y_{j'},y_{j''})>2r{+}1
     \,.
\]
Analogously as above we have
\[
 S^{\A,a_1}_{\psi'_i\und\psi''_i\und\delta^{\sigma}_{G,2r+1}}
 \quad = \quad
 \big( S^{\A,a_1}_{\vartheta'}\times S^{\A}_{\vartheta''} \big) \
 \setminus\ T^{\A,a_1}\,,
 \qquad\quad \text{where}
\]
\[
 S^{\A,a_1}_{\vartheta'}
 \quad \deff \quad
 \bigsetc{\
  (a_2,\ldots,a_\ell)\in A^{\ell-1}\ \ }{\ \
  \A\models\vartheta'[a_1,a_2,\ldots,a_\ell]\ }
\]
and where $T^{\A,a_1}$ is the set of all tuples
$(a_2,\ldots,a_k)\in A^{k-1}$ such that
\[
   \big(\A,(a_1,a_2,\ldots,a_k)\big)
   \ \ \models\ \ 
   \vartheta'(y_1,y_2,\ldots,y_\ell) \ \und \ 
   \vartheta''(y_{\ell+1},\ldots,y_k) \ \und \  
   \nicht \!\!\! \Und_{j'\leq \ell \atop j''\geq \ell+1}\!\!
   \dist^{\sigma}(y_{j'},y_{j''})>2r{+}1\,.
\]
The set $T^{\A,a_1}$ is the disjoint union of the sets
$T_H^{\A,a_1}$ for all $H\in\GraphclassH$, where
$T_H^{\A,a_1}$ is defined as the set of all tuples
$(a_2,\ldots,a_k)\in A^{k-1}$ for which
\[
    \big(\A,(a_1,a_2,\ldots,a_k)\big)\ \ \models\ \ 
     \vartheta'(y_1,y_2,\ldots,y_\ell) \ \und \ 
     \vartheta''(y_{\ell+1},\ldots,y_k)\ \und \ 
     \delta^{\sigma}_{H,2r+1}(y_1,y_2,\ldots,y_k)\,.
\]
Since $T^{\A,a_1}\subseteq (S^{\A,a_1}_{\vartheta'}\times S^\A_{\vartheta''})$, we
obtain that
\[
 \big(u_G^{\psi,i}\big)^{\A}[a_1]
 \quad = \quad
  |S^{\A,a_1}_{\psi'_i\und\psi''_i\und\delta^{\sigma}_{G,2r+1}}|
  \quad = \quad
  |S^{\A,a_1}_{\vartheta'}|\cdot |S^\A_{\vartheta''}| \ - \ \sum_{H\in
    \GraphclassH} |T^{\A,a_1}_H|\,;
\]
and this holds for every $\sigma$-structure $\A$ and every $a_1\in A$. Therefore,
\[
  u_G^{\psi,i}(y_1)
  \quad \equiv \quad
  t'(y_1)\cdot t'' \ - \ \sum_{H\in\GraphclassH} t_H(y_1)
\]
where
\begin{eqnarray*}
  t_H(y_1) 
& \ \deff \
&
\Count{(y_2,\ldots,y_k)}{\big(\;\vartheta'(y_1,y_2,\ldots,y_\ell)\und\vartheta''(y_{\ell+1},\ldots,y_k)\und\delta^{\sigma}_{H,2r+1}(y_1,y_2,\ldots,y_k)\;\big)}\,,
\\[2ex]
  t'' 
& \ \deff \
& \Count{(y_{\ell+1},\ldots,y_k)}{\vartheta''(y_{\ell+1},\ldots,y_k)}\,,
\\[2ex]
  t'(y_1)
& \ \deff \
& \left\{
   \begin{array}{lll}
     \Count{(y_2,\ldots,y_\ell)}{\vartheta'(y_1,y_2,\ldots,y_\ell)}
    & & \text{if $\ell\geq 2$}\,,
    \\[1ex]
     \Count{(y_2)}{\big(\vartheta'(y_1)\und y_2{=}y_1\big)}
    & & \text{if $\ell=1$}\,.
   \end{array}
  \right.
\end{eqnarray*}
By the induction hypothesis $(*)_c$, each of the terms $t'(y_1)$, $t''$, and $t_H(y_1)$
is equivalent to a cl-term of radius $r$.
Hence, also $u_G^{\psi,i}(y_1)$ is equivalent to a cl-term of radius $r$.
This completes the proof of Lemma~\ref{lem:normalform:terms}.
\end{proof}

As an easy consequence of Lemma~\ref{lem:normalform:terms} we obtain

\begin{lemma}\label{lem:normalform:terms2}
 Let $s\geq 0$ and let
  $\chi_1, \ldots, \chi_s$ be arbitrary
  sentences of signature $\sigma$.\footnote{We do not restrict
    attention to $\FO[\sigma]$-sentences here --- the $\chi_j$'s may
    be sentences of any logic, e.g., $\FOC(\Ps)[\sigma]$.} 
  Let $r\geq 0$, $k\geq 1$, and let 
  $\ov{y}=(y_1,\ldots,y_k)$ be a tuple of $k$ pairwise
  distinct variables. Let
  $\phi(\ov{y})$ be a Boolean combination of the
  sentences $\chi_1,\ldots,\chi_s$ and
  of $\FO[\sigma]$-formulas that are $r$-local around their free
  variables $\ov{y}$. 
  Consider the ground term
  \[
     g \ \ \deff \ \ \Count{(y_1,\ldots,y_k)}{\phi(y_1,\ldots,y_k)}
  \]
  and the unary term
  \[
     u(y_1) \ \ \deff \ \ \Count{(y_2,\ldots,y_k)}{\phi(y_1,y_2,\ldots,y_k)}\,.
  \]
  For every $J\subseteq [s]$ there is a ground cl-term $\hat{g}_J$ and
  a unary cl-term $\hat{u}_J(y_1)$ (both of radius $\leq r$ and width
  $\leq k$) such
  that for every $\sigma$-structure
  $\A$ there is exactly one set $J\subseteq [s]$ such that
  \[
    \A \ \ \models \ \
    \chi_J \ \ \deff \ \ \Und_{j\in J}\chi_j \ \und \Und_{j\in
      [s]\setminus J} \nicht\,\chi_j\,,
  \]
  and for this set $J$ we have
  \  $\hat{g}_J^\A = g^\A$ \ and \
    $\hat{u}_J^\A[a] = u^\A[a]$ \ for every $a\in A$.

  Furthermore, there is an algorithm
  which upon input of $r$, $\phi(\ov{y})$, and $J$ constructs $\hat{g}_J$ and
  $\hat{u}_{J}(y_1)$. 
\end{lemma}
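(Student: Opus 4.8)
The plan is to reduce the statement to Lemma~\ref{lem:normalform:terms} by eliminating the sentences $\chi_1,\ldots,\chi_s$ from $\phi(\ov{y})$ syntactically, separately for each truth assignment to them. View $\phi(\ov{y})$ as built by Boolean connectives from the sentences $\chi_1,\ldots,\chi_s$ and from the finitely many $\FO[\sigma]$-formulas $\alpha_1(\ov{y}),\ldots,\alpha_p(\ov{y})$ that occur in it, each of which is $r$-local around $\ov{y}$. For every $J\subseteq[s]$ let $\phi_J(\ov{y})$ be obtained from $\phi(\ov{y})$ by replacing each occurrence of $\chi_j$ with a fixed $\FO[\sigma]$-sentence that holds in every structure (e.g.\ $\nicht\exists z\,\nicht\, z{=}z$) if $j\in J$, and with a fixed $\FO[\sigma]$-sentence that holds in no structure (e.g.\ $\exists z\,\nicht\, z{=}z$) if $j\notin J$. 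Then $\phi_J(\ov{y})$ is a Boolean combination of $\alpha_1,\ldots,\alpha_p$ and of these two trivial sentences; hence it is an $\FO[\sigma]$-formula with free variables among $\set{y_1,\ldots,y_k}$, and it is $r$-local around $\ov{y}$, because $\NrA{\ov{a}}$ does not depend on the formula and Boolean connectives (as well as the two trivial sentences) are evaluated identically in $\A$ and in $\NrA{\ov{a}}$.

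Next I would check that this substitution commutes with evaluation over the structures that ``match'' $J$. Fix a $\sigma$-structure $\A$ and set $J_\A\deff\setc{j\in[s]}{\A\models\chi_j}$. Then $\A\models\chi_{J_\A}$, and $J_\A$ is plainly the only $J\subseteq[s]$ with $\A\models\chi_J$. For every $\ov{a}\in A^k$ the truth value in $\A$ of each leaf $\chi_j$ agrees with that of the sentence substituted for it, while the leaves $\alpha_i(\ov{y})$ are untouched; evaluating the Boolean combination therefore yields $\A\models\phi[\ov{a}]\iff\A\models\phi_{J_\A}[\ov{a}]$. Consequently $g^\A=\big(\Count{\ov{y}}{\phi_{J_\A}(\ov{y})}\big)^\A$ and $u^\A[a]=\big(\Count{(y_2,\ldots,y_k)}{\phi_{J_\A}(\ov{y})}\big)^\A[a]$ for every $a\in A$.

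Finally I would apply Lemma~\ref{lem:normalform:terms} to the $r$-local formula $\psi\deff\phi_J(\ov{y})$, once for each $J\subseteq[s]$: this produces a ground cl-term $\hat{g}_J$ and a unary cl-term $\hat{u}_J(y_1)$, both of radius $\leq r$ and width $\leq k$, with $\hat{g}_J^\A=\big(\Count{\ov{y}}{\phi_J}\big)^\A$ and $\hat{u}_J^\A[a]=\big(\Count{(y_2,\ldots,y_k)}{\phi_J}\big)^\A[a]$ for every $\sigma$-structure $\A$ and every $a\in A$. For the unique $J=J_\A$ with $\A\models\chi_J$, combining this with the previous paragraph gives $\hat{g}_J^\A=g^\A$ and $\hat{u}_J^\A[a]=u^\A[a]$, which is exactly what the lemma asserts; for the other values of $J$ nothing is claimed, and indeed the terms $\hat{g}_J,\hat{u}_J$ need not be related to $g,u$ in those cases. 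The promised algorithm computes $\phi_J$ from $r$, $\phi$, and $J$ by the purely syntactic substitution above, and then invokes the algorithm of Lemma~\ref{lem:normalform:terms} on $r$ and $\phi_J$. There is no real obstacle here --- the result is, as advertised, an easy corollary; the only points that deserve a word of care are that substituting a sentence for a sentence-leaf preserves $r$-locality of the resulting formula, and that the construction must output a (possibly meaningless) pair of cl-terms for \emph{every} $J\subseteq[s]$, with correctness guaranteed only for the single $J$ determined by $\A$.
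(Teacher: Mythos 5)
Your proof is correct and follows essentially the same route as the paper: the paper's ``w.l.o.g.\ $\phi$ has the form $\Oder_{J\subseteq[s]}(\chi_J\und\psi_J(\ov{y}))$ with each $\psi_J$ an $r$-local $\FO[\sigma]$-formula'' is realized precisely by your syntactic substitution of trivially true/false $\FO[\sigma]$-sentences for the leaves $\chi_j$, and both proofs then feed the resulting $r$-local formula into Lemma~\ref{lem:normalform:terms} to obtain $\hat{g}_J$ and $\hat{u}_J(y_1)$. The only difference is that you spell out explicitly why the substitution produces an $r$-local $\FO[\sigma]$-formula, a point the paper folds into its ``w.l.o.g.''.
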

\begin{proof}
We can assumme w.l.o.g.\ that $\phi(\ov{y})$ is of the form
\[
  \Oder_{J\subseteq [s]} \big(\;
   \chi_J \ \und \ \psi_J(\ov{y})
  \;\big)
\]
where, for each $J\subseteq[s]$, $\psi_J(\ov{y})$ is an
$\FO[\sigma]$-formula that is $r$-local around its free variables
$\ov{y}$.

For every $J\subseteq [s]$ let $\hat{g}_J$ and $\hat{u}_J(y_1)$ be the
cl-terms obtained by Lemma~\ref{lem:normalform:terms} for the terms
$g_J\deff\Count{\ov{y}}{\psi_J(\ov{y})}$ and
$u_J(y_1)\deff\Count{(y_2,\ldots,y_k)}{\psi_J(\ov{y})}$.

Now consider an arbitrary $J\subseteq [s]$ and a $\sigma$-structure
$\A$ with $\A\models\chi_J$. Clearly, 
\[
\begin{array}{rcccll}
    g^\A 
 &  = 
 &  \big(\Count{\ov{y}}{\psi_J(\ov{y}})\big)^\A
 &  = 
 &  \hat{g}_J^\A\,,
 &
\end{array}
\] 
and
\[
\begin{array}{rcccll}
    u^\A[a] 
 &  = 
 &  \big(\Count{(y_2,\ldots,y_k)}{\psi_J(\ov{y}})\big)^\A[a]
 &  = 
 &  \hat{u}_J^\A[a]\,,
 &  \text{\quad for every $a\in A$.}
\end{array}
\] 
Hence, the proof of Lemma~\ref{lem:normalform:terms2} is complete.
\end{proof}

\subsection{A connected local normalform for \mbox{$\FO$}}

\begin{definition}
A formula in \emph{Gaifman normal form} is a 
Boolean combination of
$\FO[\sigma]$-formulas $\psi(\ov{x})$ that are local around their free
variables $\ov{x}$, and of
\emph{basic local sentences}, i.e., 
$\FO[\sigma]$-sentences $\chi$ of the form
\[
  \exists y_1\cdots \exists y_k\;\Big(\;
   \!\!\Und_{1\leq i<j\leq k}\!\!\dist^{\sigma}(y_i,y_j)>2r 
   \ \und \ 
   \!\!\Und_{1\leq i\leq k}\psi(y_i)\!\!
  \;\Big)\,,
\]
where $k\geq 1$, $r\geq 0$, and $\psi(y)$ is an $\FO[\sigma]$-formula that is $r$-local around its
unique free variable $y$. The number $r$ is called the \emph{radius}
of $\chi$.
\end{definition}

\begin{theorem}[Gaifman \cite{GaifmanPaper}]\label{thm:Gaifman}
Every $\FO[\sigma]$-formula $\phi(\ov{x})$ is equivalent to a 
formula in Gaifman normal form.

Furthermore, there is an algorithm which transforms an input formula $\phi(\ov{x})$
into an equivalent formula $\phi'(\ov{x})$ in Gaifman normal form. The algorithm also
outputs the radius of each basic local sentence of $\phi'$, and a
number $r$ such that every local formula $\psi(\ov{x})$ in
$\phi'$ is $r$-local around $\ov{x}$.
\end{theorem}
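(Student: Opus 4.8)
The plan is to prove Theorem~\ref{thm:Gaifman} by induction on the structure of $\phi(\ov{x})$, producing at each step not merely an equivalent formula in Gaifman normal form but also the associated radii, so that the effective version follows at no extra cost. The base case is immediate: $x_i{=}x_j$ is $0$-local around its free variables and $R(x_1,\dots,x_{\ar(R)})$ is $1$-local, so atomic formulas already are local formulas. Boolean connectives are trivial as well, since the class of formulas in Gaifman normal form is, by definition, closed under Boolean combinations; one only has to take the maximum of the radii occurring in the subformulas to keep a single uniform radius for the local parts.

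The only real work is the existential step: assume $\phi(\ov{x})=\exists y\,\phi'(\ov{x},y)$, where by the induction hypothesis $\phi'$ is already in Gaifman normal form. Bringing $\phi'$ into disjunctive normal form over its building blocks, each disjunct is a conjunction of basic local sentences, negations thereof, and $\FO$-formulas that are $r$-local (for a common $r$) around subtuples of $(\ov{x},y)$. The (negated) basic local sentences do not mention $y$, so they can be moved outside the scope of $\exists y$; what remains, per disjunct, is a formula $\exists y\,\psi(\ov{x},y)$ with $\psi$ a conjunction of $r$-local formulas and their negations, hence itself $t$-local around $(\ov{x},y)$ for $t\deff r$. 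I would then split according to the distance of the witness from $\ov{x}$:
\[
  \exists y\,\psi(\ov{x},y)
  \ \ \equiv\ \
  \exists y\big(\dist^\sigma(y,\ov{x})\,{\le}\, 2t{+}1 \;\und\; \psi(\ov{x},y)\big)
  \ \ \oder\ \
  \exists y\big(\dist^\sigma(y,\ov{x})\,{>}\, 2t{+}1 \;\und\; \psi(\ov{x},y)\big).
\]
The first disjunct is $(3t{+}1)$-local around $\ov{x}$, since every admissible witness $b$ lies in $N_{2t+1}^{\cA}(\ov{a})$ and the truth of $\psi$ at such $b$ depends only on $N_t^{\cA}(\ov{a},b)\subseteq N_{3t+1}^{\cA}(\ov{a})$; so this disjunct is itself a local formula. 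For the second disjunct, observe that whenever $\dist^\sigma(y,\ov{x})>2t{+}1$ the neighbourhood $N_t^{\cA}(\ov{a},b)$ is the \emph{disjoint union} of $N_t^{\cA}(\ov{a})$ and $N_t^{\cA}(b)$; hence, under that distance assumption, the Feferman--Vaught Theorem lets me rewrite $\psi$ as $\Oder_{i\in I}\big(\alpha_i(\ov{x})\und\beta_i(y)\big)$ with each $\alpha_i$ $t$-local around $\ov{x}$ and each $\beta_i$ $t$-local around $y$. This reduces the second disjunct to a disjunction of formulas $\alpha_i(\ov{x})\und\gamma_i(\ov{x})$, where $\gamma_i(\ov{x})\deff\exists y\big(\dist^\sigma(y,\ov{x})\,{>}\,2t{+}1\,\und\,\beta_i(y)\big)$.

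The heart of the argument is therefore a sub-lemma: for a $t$-local formula $\beta(y)$ with a single free variable and $s\deff 2t{+}1$, the formula $\exists y\big(\Und_{i=1}^{n}\dist^\sigma(y,x_i)\,{>}\,s\;\und\;\beta(y)\big)$ is equivalent to a Boolean combination of basic local sentences and $O(t)$-local formulas around $(x_1,\dots,x_n)$, which I would prove by induction on $n$. For $n=0$ the formula $\exists y\,\beta(y)$ is itself a basic local sentence (width $1$, radius $t$). For the step, the key combinatorial fact is a scattering/pigeonhole argument: if the structure contains $n{+}1$ elements satisfying $\beta$ that are pairwise more than $2s$ apart --- a property expressed by a basic local sentence --- then, since each $x_i$ is $s$-close to at most one of them, some such element is $s$-far from all of $\ov{x}$, so the existential statement holds; otherwise the $\beta$-witnesses are confined to boundedly many clusters of bounded radius, and deciding whether one of them lies far from $\ov{x}$ can be done by inspecting a bounded neighbourhood of $\ov{x}$ together with the inductive hypothesis for fewer free variables. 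I expect this last step to be the main obstacle: organising the cluster case analysis cleanly and, above all, bookkeeping the radii (the separation needed in the basic local sentences, the radius of the local formulas around $\ov{x}$, and how these grow through the two nested inductions) so that in the end the single uniform radius $r$ promised in the theorem can still be extracted. The effectiveness claim then needs no new idea: Gaifman's construction, the Feferman--Vaught decomposition, and every case of the induction are constructive, and the radii are computed along the way, so the whole transformation is performed by a terminating algorithm as stated.
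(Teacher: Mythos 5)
This theorem is cited from \cite{GaifmanPaper} and the paper offers no proof of its own, so there is no internal proof to compare against. Your outline follows the standard textbook route (structural induction; moving the basic local sentences outside the quantifier; near/far split for $\exists y$; Feferman--Vaught for the far case), and all of the steps you actually carry out are correct, including the observation that the near disjunct is $(3t{+}1)$-local and that the far disjunct decomposes into $\bigvee_i\big(\alpha_i(\ov{x})\wedge\gamma_i(\ov{x})\big)$ with $\gamma_i(\ov x)=\exists y\big(\dist(y,\ov x)>2t{+}1\wedge\beta_i(y)\big)$.

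The genuine gap, which you yourself flag, is in the ``otherwise'' branch of the scattering sub-lemma, and it is more than a bookkeeping issue. Suppose $\neg\sigma_{n+1}$ holds, so a maximal pairwise $>2s$-far set of $\beta$-witnesses $c_1,\ldots,c_m$ has $m\le n$ elements and the $\beta$-witnesses are covered by the balls $N_{2s}(c_j)$. These centres $c_j$ are arbitrary elements of the structure; they need bear no relation to $\ov{x}$. So ``inspect a bounded neighbourhood of $\ov{x}$'' does not determine whether one of the balls lies far from $\ov{x}$, and the natural attempt to split off the far case yields $\exists y\big(\dist(y,\ov x)>R\wedge\beta(y)\big)$ for a larger $R$ --- a formula of exactly the same shape with the same number $n$ of free variables, so the induction on $n$ does not close. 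The standard way through is not an induction on the number of free variables but a direct argument (usually phrased as an Ehrenfeucht--Fra\"iss\'e game, or as a careful comparison of the maximal scattered-set size globally against the maximal scattered-set size inside a fixed large ball around $\ov{a}$); the basic local sentences encode the former, local formulas around $\ov{x}$ encode the latter, and the equivalence is established by showing that matching these two data determines the truth of $\gamma$. Until that comparison argument is made precise, with explicit radii, the sub-lemma --- and hence the theorem --- is not proved.
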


By combining Lemma~\ref{lem:normalform:terms} with Gaifman's locality
theorem, we obtain the following normal form for
$\FO$, which may be of independent interest.

\begin{theorem}[cl-Normalform]
\label{cor:normalformFO}
Every $\FO[\sigma]$-formu\-la $\phi(\ov{x})$ is equivalent to a Boolean
combination of
 $\FO[\sigma]$-formulas $\psi(\ov{x})$ that are local
 around their free variables $\ov{x}$, and of
 statements of the form ``$g\geq 1$'', for a ground cl-term $g$.

Furthermore, there is an algorithm which transforms an input
$\FO[\sigma]$-formula $\phi(\ov{x})$ into an equivalent such formula
$\phi'(\ov{x})$. The algorithm also outputs the radius of
each ground cl-term in $\phi'$, and a number $r$ such that every local
formula $\psi(\ov{x})$ in $\phi'$ is $r$-local around $\ov{x}$.
\end{theorem}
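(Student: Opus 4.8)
**

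The plan is to combine Gaifman's normal form theorem (Theorem~\ref{thm:Gaifman}) with Lemma~\ref{lem:normalform:terms} in the most direct way possible. Given an $\FO[\sigma]$-formula $\phi(\ov{x})$, first apply the algorithm of Theorem~\ref{thm:Gaifman} to obtain an equivalent formula $\phi'(\ov{x})$ in Gaifman normal form, together with the promised radius data: a number $r_0$ such that every local formula appearing in $\phi'$ is $r_0$-local around its free variables, and, for each basic local sentence $\chi$ in $\phi'$, its radius $r_\chi$. Thus $\phi'$ is a Boolean combination of $\FO[\sigma]$-formulas local around $\ov{x}$ and of basic local sentences
\[
  \chi \ \ = \ \ \exists y_1\cdots\exists y_k\;\Big(\;
   \!\!\Und_{1\leq i<j\leq k}\!\!\dist^{\sigma}(y_i,y_j)>2r_\chi
   \ \und \
   \!\!\Und_{1\leq i\leq k}\psi(y_i)\!\!\;\Big)\,.
\]
The only thing standing between $\phi'$ and the desired normal form is the presence of the basic local sentences; everything else is already of the required shape.

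The key step is therefore to rewrite each basic local sentence $\chi$ as a statement ``$g\geq 1$'' for a ground cl-term $g$. For a fixed $\chi$ of radius $r=r_\chi$ and width $k$ with local formula $\psi(y)$, I would introduce the graph $\widehat{G}\in\Graphclass_k$ with vertex set $[k]$ and \emph{no} edges, and consider the $\FO[\sigma]$-formula
\[
  \eta(\ov{y}) \ \ \deff \ \ \Und_{1\leq i\leq k}\psi(y_i)\,,
\]
which is $r$-local around $\ov{y}=(y_1,\dots,y_k)$ since each $\psi(y_i)$ is. The conjunct $\Und_{i<j}\dist^{\sigma}(y_i,y_j)>2r$ inside $\chi$ is \emph{almost} the formula $\delta^\sigma_{\widehat G,2r}(\ov y)$ from the excerpt; note the slight mismatch in the threshold ($2r$ versus $2r{+}1$, and Gaifman's convention uses strict inequality $>2r$ rather than $>2r{+}1$). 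Rather than fight with this, I would observe that $\chi$ asserts precisely that the set of $k$-tuples $\ov a$ with $\A\models\eta[\ov a]$ and all pairwise distances $>2r$ is nonempty, i.e.\ that the corresponding $\Count$-term is $\geq 1$. Concretely, let
\[
  g_\chi \ \ \deff \ \ \Count{(y_1,\dots,y_k)}{\big(\,\eta(\ov y)\;\und\;\Und_{1\leq i<j\leq k}\dist^\sigma(y_i,y_j)>2r\,\big)}\,;
\]
then $\A\models\chi$ iff $g_\chi^{\A}\geq 1$. It remains to show that $g_\chi$ is equivalent to a ground cl-term. This is not immediate from the definition of cl-terms, because cl-terms are built only from \emph{connected} graphs $G$, whereas here the distance condition forces the tuple to be ``spread out'' (the relevant graph is edgeless, which for $k\geq 2$ is far from connected). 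But this is exactly the situation handled by Lemma~\ref{lem:normalform:terms}: the formula $\eta'(\ov y)\deff\eta(\ov y)\und\Und_{i<j}\dist^\sigma(y_i,y_j)>2r$ is still $r$-local around $\ov y$ (being a conjunction of an $r$-local formula with a Boolean combination of distance-$\leq 2r$ formulas, each of which is $r$-local around the pair it mentions), so applying Lemma~\ref{lem:normalform:terms} to $\eta'$ with radius $r$ and width $k$ yields a ground cl-term $\hat g_\chi$ with $\hat g_\chi^{\A}=g_\chi^{\A}$ for every $\A$, hence $\A\models\chi$ iff $\hat g_\chi^{\A}\geq 1$.

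Having replaced every basic local sentence $\chi$ of $\phi'$ by the equivalent statement ``$\hat g_\chi\geq 1$'' for a ground cl-term $\hat g_\chi$, the resulting formula $\phi''(\ov x)$ is a Boolean combination of $\FO[\sigma]$-formulas local around $\ov x$ and of statements ``$g\geq 1$'' for ground cl-terms $g$, and $\phi''\equiv\phi'\equiv\phi$, which is the desired normal form. For the algorithmic addendum: the transformation $\phi\mapsto\phi'$ is algorithmic by Theorem~\ref{thm:Gaifman}, the construction $\psi\mapsto\eta'$ and the threshold bookkeeping are trivially computable, and the cl-term $\hat g_\chi$ is produced by the algorithm guaranteed in the ``Furthermore'' part of Lemma~\ref{lem:normalform:terms} upon input $r_\chi$ and $\eta'$. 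The output radius bound is $r\deff\max\{r_0,\ \max_\chi r_\chi\}$ (each ground cl-term $\hat g_\chi$ has radius $\leq r_\chi\leq r$, and every local formula surviving in $\phi''$ is among the $r_0$-local formulas of $\phi'$, hence $r$-local), and the radius of each individual ground cl-term is output directly by the Lemma~\ref{lem:normalform:terms} algorithm. The main obstacle I anticipate is purely cosmetic: reconciling Gaifman's threshold convention ($>2r$ in the basic local sentence) with the $\delta^\sigma_{G,2r+1}$ convention used to define cl-terms; this is handled cleanly by routing through Lemma~\ref{lem:normalform:terms}, which accepts an \emph{arbitrary} $r$-local $\FO$-formula and does not care how the distance constraints are phrased, rather than by trying to match $\delta^\sigma_{\widehat G,2r+1}$ on the nose.
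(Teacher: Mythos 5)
Your proposal is correct and matches the paper's proof essentially step for step: apply Theorem~\ref{thm:Gaifman}, observe that for each basic local sentence $\chi$ the inner formula $\vartheta(\ov y)$ (your $\eta'$) is $r$-local around $\ov y$, feed the resulting ground term $g_\chi = \Count{\ov y}{\vartheta(\ov y)}$ into Lemma~\ref{lem:normalform:terms} to obtain a ground cl-term $\hat g_\chi$, and replace $\chi$ by ``$\hat g_\chi\geq 1$''. The detour through the edgeless graph $\widehat G$ and the worry about the $2r$ vs.\ $2r{+}1$ thresholds are unnecessary (as you yourself conclude), since Lemma~\ref{lem:normalform:terms} accepts an arbitrary $r$-local formula, which is exactly how the paper proceeds.
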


\begin{proof}
By Theorem~\ref{thm:Gaifman} it suffices to translate a basic local
sentence into a statement of the form ``$g\geq 1$'' for a ground
cl-term $g$. 

For a basic local sentence
\ $
 \chi \deff 
  \exists y_1\cdots \exists y_k\;
  \vartheta(y_1,\ldots,y_k)
$ \ with \
$  \vartheta(y_1,\ldots,y_k) \deff$
\[
   \Und_{1\leq i<j\leq k}\dist^{\sigma}(y_i,y_j)>2r 
   \ \ \und \ \ 
   \Und_{1\leq i\leq k}\psi(y_i)
\]
let $g_\chi$ be the ground term
\[
  g_\chi \quad \deff \quad
  \Count{(y_1,\ldots,y_k)}{\vartheta(y_1,\ldots,y_k)}\,.
\]
Note that $\vartheta(y_1,\ldots,y_k)$ is $r$-local around its free
variables.
Hence, by Lemma~\ref{lem:normalform:terms} we obtain a ground cl-term
$\hat{g}_\chi$ such that $\hat{g}_{\chi}^{\A}=g_{\chi}^{\A}$ for every
$\sigma$-structure $\A$.
Furthermore, 
\ $
  \A\models\chi
   \iff
  g_{\chi}^{\A} \geq 1
   \iff
  \hat{g}_{\chi}^{\A}\geq 1\,.
$
This completes the proof of Theorem~\ref{cor:normalformFO}.
\end{proof}
 
We use the notion \emph{cl-normalform} to denote the 
formulas $\phi'(\ov{x})$
provided by Theorem~\ref{cor:normalformFO}.
Note that these cl-normalforms do
not necessarily belong to $\FO$, but can be viewed as 
formulas in $\FOunC(\set{\P_{\geq 1}})$ (recall that $\sem{\P_{\geq 1}}=\NNpos$), since statements of the form
``$g\geq 1$'' can be expressed via $\P_{\geq 1}(g)$.

\subsection{A decomposition of $\FOunC(\Ps)$-expressions}

Our decomposition of $\FOunC(\Ps)$ utilises
Theorem~\ref{cor:normalformFO} and is
based on an induction on the maximal nesting depth of constructs of
the form $\#\ov{y}$.
We call this nesting depth the 
\emph{$\#$-depth} $\countr(\xi)$ of a given formula or term
$\xi$. Formally, $\countr(\phi) $ is defined as follows:
\begin{enumerate}[(1)]
\item
  $\countr(\phi) \ \deff \ 0$, \ if $\phi$ is a formula of the form \
  $x_{1}{=}x_2$ \ or \
  $R(x_1,\ldots,x_{\ar(R)})$
\item
  $\countr(\nicht\phi)\ \deff \ \countr(\phi)$
  \ and \
  $\countr((\phi\oder\psi))\ \deff \ \max\set{\countr(\phi),\countr(\psi)}$
\item
  $\countr(\exists y\,\phi) \ \deff \ \countr(\phi)$
\item
  $\countr(\P(t_1,\ldots,t_m)) \ \deff \ \max\set{\countr(t_1),\ldots,\countr(t_m)}$,
\item
  $\countr(\Count{\ov{y}}{\phi}) \ \deff \ \countr(\phi)+1$,
\item
  $\countr(i)\ \deff \ 0$, \ for all terms $i\in\bZ$
\item
  $\countr((t_1+t_2)) \ \deff \ \countr((t_1\cdot t_2))
  \ \deff \ \max\set{\countr(t_1),\countr(t_2)}$, \ for all terms $t_1$ and $t_2$.
\end{enumerate}

The base case of our decomposition of $\FOunC(\Ps)$ is
provided by the following lemma; the lemma's proof utilises 
Theorem~\ref{cor:normalformFO}.

\begin{lemma}\label{lem:normalform:countrOne}
Let $\phi$ be an $\FOunC(\Ps)[\sigma]$-formula of the form
$\P(t_1,\ldots,t_m)$ with $\P\in \Ps$,
$m=\ar(\P)$, and where $t_1,\ldots,t_m$ are counting terms of
$\#$-depth at most $1$.
Then, $\phi$ is equivalent to a Boolean combination of
\begin{enumerate}[(i)]
\item
 formulas of the form $\P(t'_1,\ldots,t'_{m})$, for cl-terms
 $t'_1,\ldots,t'_{m}$ where $\free(t'_i)=\free(t_i)$ for all
 $i\in[m]$, 
\item
 statements of the form ``$g\geq 1$'' for ground cl-terms $g$, and
\item
 statements of the form $\P'(i_1,\ldots,i_{m'})$ for $\P'\in\Ps$,
 $m'=\ar(\P')$, and integers $i_1,\ldots,i_{m'}$.
\end{enumerate}
Furthermore, there is an algorithm which transforms an input formula
$\phi$ into such a Boolean combination $\phi'$, and which
also outputs the radius of each cl-term in $\phi'$.
\end{lemma}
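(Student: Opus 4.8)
The plan is to analyse each counting term $t_j$ separately and express it — up to the ambiguity coming from finitely many auxiliary sentences — as a polynomial in cl-terms, then reassemble. Since $\phi$ has $\#$-depth at most $1$, each $t_j$ is built by $+$ and $\cdot$ from integers and from counting terms of the form $\Count{\ov{y}}{\theta(\ov{y})}$ where $\theta$ is itself an $\FO[\sigma]$-formula (because the $\#$-depth of $\theta$ is $0$, so $\theta$ contains no counting constructs and hence no $\P$-atoms either, so it is plain first-order). Moreover, by the $\FOunC(\Ps)$ restriction on rule (\ref{item:Q}'), $\setsize{\free(t_1)\cup\cdots\cup\free(t_m)}\le 1$, so each $\theta$ has at most one free variable among the ``outer'' variables; writing $\ov{y}=(y_1,\dots,y_k)$ for the bound tuple, either $\theta=\theta(\ov{y})$ is a sentence-free $\FO$-formula all of whose free variables are bound by $\#$, or $\theta=\theta(y_1,\ov{y})$ has exactly one extra free variable $y_1$.

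First I would apply the cl-Normalform (Theorem~\ref{cor:normalformFO}) to each such $\FO[\sigma]$-formula $\theta$ occurring inside a counting construct in $t_j$, obtaining an equivalent Boolean combination of formulas local around the free variables and of statements ``$g\ge 1$'' for ground cl-terms $g$. Collect all the ground cl-terms $g$ arising this way across all the $\theta$'s (finitely many), and let $\chi_1,\dots,\chi_s$ be the sentences ``$g\ge 1$'', i.e.\ $\P_{\ge 1}(g)$. Then each $\theta$ has become a Boolean combination of the sentences $\chi_1,\dots,\chi_s$ and of $\FO$-formulas that are $r$-local around their free variables, for a uniform $r$ output by the algorithm. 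Now apply Lemma~\ref{lem:normalform:terms2} to each counting construct $\Count{\ov{y}}{\theta}$ (in the ground case) or $\Count{(y_2,\dots,y_k)}{\theta}$ (in the unary case): for each $J\subseteq[s]$ this produces a ground cl-term $\hat g_J$ or a unary cl-term $\hat u_J(y_1)$ such that on every structure $\A$ there is a unique $J$ with $\A\models\chi_J$, and for that $J$ the cl-term computes the value of the original counting construct.

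Next I would do a case split over $J\subseteq[s]$: conditioned on $\A\models\chi_J$ (which is a Boolean combination of the $\chi_j$'s, themselves of the required form (ii) together with the uniqueness that exactly one $J$ holds), every counting construct appearing in $t_j$ equals a fixed cl-term, so $t_j$ itself — being a polynomial in these constructs and integers — equals a polynomial in cl-terms and integers, which by rule (\ref{item:plustimesterm}) of Definition~\ref{def:cl-term} is again a cl-term $t'_{j,J}$; and since the free variable of $t_j$ (if any) is propagated to the surviving counting constructs, $\free(t'_{j,J})=\free(t_j)$. Thus, conditioned on $\chi_J$, the whole formula $\phi=\P(t_1,\dots,t_m)$ is equivalent to $\P(t'_{1,J},\dots,t'_{m,J})$, which is of form (i). Assembling over all $J$, we get that $\phi$ is equivalent to
\[
  \Oder_{J\subseteq[s]}\ \Big(\,\chi_J\ \und\ \P\big(t'_{1,J},\dots,t'_{m,J}\big)\,\Big),
\]
which is a Boolean combination of formulas of types (i) and (ii). (Type (iii) does not actually arise here, since the $\#$-depth-$0$ terms contain no $\P$-atoms; it is listed in the statement for uniformity with later, higher-$\#$-depth applications.) The algorithm is the obvious composition of the algorithms from Theorem~\ref{cor:normalformFO} and Lemma~\ref{lem:normalform:terms2}, and the radius output is the maximum $r$ over all invocations.

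The routine-but-careful part is the bookkeeping in the case split — one must be sure that (a) the rewriting of each inner $\theta$ uses a \emph{common} list $\chi_1,\dots,\chi_s$ and a \emph{common} locality radius $r$ (just take unions/maxima over the finitely many $\theta$'s), and (b) replacing a counting construct by its conditional cl-term value inside the polynomial $t_j$ is legitimate because, on a fixed structure, the construct and the cl-term denote the same integer. The only place that needs genuine thought is checking that the $\FOunC(\Ps)$ free-variable restriction is preserved: since $t'_{j,J}$ is built from cl-terms each of whose free variable set is that of the corresponding original counting construct, and all of those are $\subseteq\{y_1\}$, the union of free variables across $t'_{1,J},\dots,t'_{m,J}$ still has size $\le 1$, so $\P(t'_{1,J},\dots,t'_{m,J})$ is still a legal $\FOunC(\Ps)$-formula. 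I expect no real obstacle beyond this bookkeeping; the substantive work has already been done in Lemmas~\ref{lem:normalform:terms} and \ref{lem:normalform:terms2} and Theorem~\ref{cor:normalformFO}.
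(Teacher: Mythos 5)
Your proof plan takes essentially the same route as the paper — rewrite each inner formula $\theta$ into cl-Normalform via Theorem~\ref{cor:normalformFO}, collect the arising sentences $\chi_1,\dots,\chi_s$, apply Lemma~\ref{lem:normalform:terms2} to each counting construct, and reassemble via a case split over $J\subseteq[s]$. However, there is a genuine (if small) error at the very beginning, and you explicitly compound it at the end.

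You assert that because each inner $\theta$ has $\#$-depth $0$, it "contains no counting constructs and hence no $\P$-atoms either, so it is plain first-order", and later that "Type~(iii) does not actually arise here." This is false. The $\#$-depth of $\P'(t_1,\dots,t_{m'})$ is $\max_j\countr(t_j)$, and $\countr(i)=0$ for integers; so a formula such as $\P'(3,7)$ has $\#$-depth $0$ yet is a $\P$-atom. More generally, a $\#$-depth-$0$ formula can contain subformulas $\P'(t_1,\dots,t_{m'})$ whose arguments are arithmetic expressions over $\ZZ$ (equivalently, integers). These cannot be eliminated at construction time because their truth values are decided only via the $\Ps$-oracle at evaluation time. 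Consequently $\theta$ is not a pure $\FO[\sigma]$-formula, Theorem~\ref{cor:normalformFO} does not apply to it directly, and the statements $\P'(i_1,\dots,i_{m'})$ must be carried along as additional nullary "sentences" — this is exactly why type~(iii) is in the lemma statement. The paper handles this by viewing $\theta$ as an $\FO[\sigma]$-formula \emph{possibly enriched by atomic sentences of the form} $\P'(i_1,\dots,i_{m'})$, treating them as propositional black boxes throughout the cl-Normalform step, and then including them in the list $\chi_1,\dots,\chi_s$ (together with the ``$g\ge 1$'' statements) over which the case split is performed — a step that Lemma~\ref{lem:normalform:terms2} explicitly permits, since its sentences $\chi_j$ need not be first-order. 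The remainder of your argument (the conditional substitution of each counting construct by its cl-term, the preservation of free-variable sets, and the final disjunction over $J$) is correct and matches the paper.
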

\begin{proof}
Let $\phi$ be  of the form 
$\P(t_1,\ldots,t_m)$ with $\P\in \Ps$,
$m=\ar(\P)$, and where $t_1,\ldots,t_m$ are counting terms of
$\#$-depth at most $1$.
From Definition~\ref{def:FOunC} we know that either
$\free(\phi)=\emptyset$ or $\free(\phi)=\set{x}$ for a variable $x$.
Furthermore, we know that
for every $i\in [m]$ the
counting term $t_i$ is built by using addition and multiplication
based on integers and on
counting terms $\theta'$ of the form~$\Count{\ov{z}}{\theta}$, 
for a tuple of variables $\ov{z}=(z_1,\ldots,z_k)$,
such that $\free(\theta)\setminus\set{z_1,\ldots,z_k}\subseteq\set{x}$.
Let $\Theta'$ be the set of all these counting terms $\theta'$ and let
$\Theta$ be the set of all the according formulas $\theta$.  

By assumption we have $\countr(\phi)\leq 1$. Therefore, every
$\theta\in\Theta$ has $\#$-depth 0.
We can thus view each such $\theta$ as an
$\FO[\sigma]$-formula, possibly enriched by atomic sentences of the form
$\P'(i_1,\ldots,i_{m'})$ with $\P'\in\Ps$, $m'=\ar(\P')$, and integers
$i_1,\ldots,i_{m'}$.

By Theorem~\ref{cor:normalformFO}, for each $\theta$ in $\Theta$ we
obtain an equivalent formula $\phi^{(\theta)}$ in cl-normalform,
possibly enriched by atomic sentences of the form
$\P'(i_1,\ldots,i_{m'})$ with $\P'\in\Ps$, $m'=\ar(\P')$, and integers
$i_1,\ldots,i_{m'}$.
Let $\Phi$ be the set of all these $\phi^{(\theta)}$.  

For each $\theta$ in $\Theta$, the formula
$\phi^{(\theta)}$ is a Boolean
combination of (a) $\FO[\sigma]$-formulas that are local around the free variables
of $\theta$, and (b) statements of
the form ``$g\geq 1$'' for a ground cl-term $g$, and (c) statements of
the form $\P'(i_1,\ldots,i_{m'})$ with $\P'\in\Ps$, $m'=\ar(\P')$, and integers
$i_1,\ldots,i_{m'}$.

Let $\chi_1,\ldots,\chi_s$ be a
list of all statements of the forms (b) or (c), such that each formula
in $\Phi$ is a Boolean combination of statements in
 $\set{\chi_1,\ldots,\chi_s}$ and of $\FO[\sigma]$-formulas that are
 local around their free variables.
For every $J\subseteq [s]$ let \ $\chi_J \;:=\; \Und_{j\in J}\chi_j
\und \Und_{j\in [s]\setminus J}\nicht\chi_j$.

Let $r\in\NN$ be such that each of the local $\FO[\sigma]$-formulas that occur in
a formula in $\Phi$ is $r$-local around its free variables.
For each $\theta'$ in $\Theta'$ of the form $\Count{\ov{z}}{\theta}$, we
apply Lemma~\ref{lem:normalform:terms2} to the term
\[
  t^{(\theta')}
  \ \ \deff \ \ 
  \Count{\ov{z}}{\phi^{(\theta)}}
\]
and obtain for every $J\subseteq [s]$ a cl-term
$\hat{t}^{(\theta')}_J$
with the same free variables as $\theta'$,
for which the following is true:

\begin{itemize}
\item 
If $\free(\theta')=\emptyset$, then
\ $(\theta')^{\A} = (\hat{t}^{(\theta')}_J)^{\A}$ \ 
for every $\sigma$-structure $\A$ with $\A\models\chi_J$.

\item 
If $\free(\theta')=\set{x}$, then
\ $(\theta')^{\A}[a] = (\hat{t}^{(\theta')}_J)^{\A}[a]$ \ 
for every $\sigma$-structure $\A$ with $\A\models\chi_J$ and every
$a\in A$.
\end{itemize}

\noindent
Thus, for each $J\subseteq [s]$ we have
\[
  \big(\ \chi_J \ \und \ \P(t_1,\ldots,t_m) \ \big)
  \ \ \equiv\ \
  \big(\ \chi_J \ \und \ \P(t_{1,J},\ldots,t_{m,J})\ \big)
\]
where, for every $i\in [m]$, we let $t_{i,J}$ be the
cl-term obtained from $t_i$ by replacing
each occurrence of a term $\theta'\in\Theta'$ by the
term $\hat{t}^{(\theta')}_J$.  
In summary, we obtain the following:
\begin{align*}
  \phi
   & \ \ = \ \  \P(t_1,\ldots,t_m)\\
   & \ \  \equiv \ \
   \displaystyle\Oder_{J\subseteq [s]} \big(\
    \chi_J \ \und \ \P(t_1,\ldots,t_m) 
  \ \big)\\
   & \ \ \equiv \ \
  \displaystyle\Oder_{J\subseteq [s]} \big(\
    \chi_J \ \und \ \P(t_{1,J},\ldots,t_{m,J})
  \ \big)
 \quad =: \ \
 \phi'\,.
\end{align*}
The formula $\chi_J$ is a Boolean combination of statements of the
form ``$g\geq 1$'' for ground cl-terms $g$ and statements of the form
$\P'(i_1,\ldots,i_{m'})$ for $\P'\in\Ps$, $m'=\ar(\P')$, and integers $i_1,\ldots,i_{m'}$.
Furthermore, each of the terms $t_{i,J}$ is a cl-term with $\free(t_{i,J})=\free(t_i)$.
Thus, the proof of Lemma~\ref{lem:normalform:countrOne} is complete.
\end{proof}

\begin{theorem}[Decomposition of $\FOunC(\Ps)$]\label{thm:normalformFOunC}
Let $z$ be a fixed variable in $\VARS$. 
For every $d\in\NN$ and every $\FOunC(\Ps)[\sigma]$-expression $\xi$
which is either a formula
$\phi(\ov{x})$ or a ground term $t$ of $\#$-depth $\countr(\xi)=d$, there exists a sequence 
$(L_1\ldots,L_{d+1},\xi')$ with the following properties.
\begin{enumerate}[(I)]
\item
$L_i=(\tau_i,\iota_i)$, for every $i\in\set{1,\ldots,d{+}1}$, where
\begin{enumerate}[$\bullet$]
\item
 $\tau_i$ is a finite set of relation symbols of arity $\leq 1$ that do not belong
 to $\sigma_{i-1}\deff \sigma\cup\bigcup_{j<i}\tau_j$, and
\item
 $\iota_i$ is a mapping that associates with every symbol $R\in\tau_i$
 a formula $\iota_i(R)$  
 \begin{enumerate}[(i)]
  \item
   of the form
   $\P(t_1,\ldots,t_{m})$, where $\P\in\Ps$, $m=\ar(\P)$, and
   $t_1,\ldots,t_m$ are cl-terms
   of signature $\sigma_{i-1}$,
   such that $\free(t_j)\subseteq\set{z}$ for each $j\in[m]$, or
  \item
   of the form ``$g\geq 1$'' for ground cl-terms $g$ of
   signature $\sigma_{i-1}$.
  \end{enumerate}
  If $R$ has arity 0, then $\iota_i(R)$ has no free variable.
  If $R$ has arity 1, then $z$ is the unique free variable of
   $\iota_i(R)$ (thus, $\iota_i(R)$ is of the form (i)).
\end{enumerate}
\item
If $\xi$ is a ground term $t$, then 
$\xi'\deff t'$ is a ground cl-term of signature $\sigma_{d+1}$.

If $\xi$ is a formula $\phi(\ov{x})$, then
$\xi'\deff \phi'(\ov{x})$ is a Boolean combination of 
     \textup{(A)}~$\FO[\sigma_{d+1}]$-formulas $\psi(\ov{x})$ that are local around
     their free variables $\ov{x}$, where
     $\sigma_{d+1}\deff\sigma\cup\bigcup_{1\leq i\leq d{+}1}\tau_i$, and
     \textup{(B)}~statements of the form $R()$ where $R$ is a 0-ary relation
     symbol in $\sigma_{d+1}$.
In case that $\free(\phi)=\emptyset$, 
$\phi'$ only contains statements of 
the latter form.

\item
For every $\sigma$-interpretation $\I=(\A,\beta)$ we have
\[
  \sem{\xi}^{\I}
  \quad = \quad
  \sem{\xi'}^{\I_{d+1}}
\] 
(i.e.,
$t^\A=(t')^{\A_{d+1}}$ in case that $\xi$ is a ground term $t$, and
$\I\models\phi$ iff $\I_{d+1}\models\phi'$ in case that $\xi$ is a
formula $\phi$),
where $\I_{d+1}=(\A_{d+1},\beta)$, and $\A_{d+1}$ is the $\sigma_{d+1}$-expansion of $\A$ defined as follows: \
$\A_0\deff \A$, and for every $i\in[d{+}1]$, $\A_i$ is the
$\sigma_i$-expansion of $\A_{i-1}$, where
for every unary $R\in\tau_i$ we have
\[
  R^{\A_i} \ \ \deff \ \
     \setc{\ a\in A\ }{\ (\A_{i-1},a)\;\models\; \iota_i(R)\ }
\]
and for every 0-ary $R\in\tau_i$ we have
\[
  R^{\A_i} \ \ \deff \ \ 
  \left\{
   \begin{array}{cll}
     \set{\,\emptytuple\,} & & \text{if } \A_{i-1}\models\iota_i(R)\,,
   \\[1ex]
     \emptyset & & \text{if } \A_{i-1}\not\models\iota_i(R)\,.
   \end{array}
  \right.
\]
\end{enumerate}
Moreover, there is an algorithm which constructs such a sequence
$D=(L_1,\ldots,L_{d+1},\xi')$ for an input expression $\xi$.
The algorithm also outputs the radius of each cl-term in $D$, and a
number $r$ such that every formula of type \textup{(A)} in $\phi'$ is $r$-local
around its free variables.
\end{theorem}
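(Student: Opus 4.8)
The plan is to prove Theorem~\ref{thm:normalformFOunC} by induction on the $\#$-depth $d=\countr(\xi)$, using Lemma~\ref{lem:normalform:countrOne} as the engine that handles each ``layer'' of counting. For the base case $d=0$, an $\FOunC(\Ps)$-expression of $\#$-depth $0$ is just an $\FO[\sigma]$-formula possibly enriched by atomic statements $\P'(i_1,\ldots,i_{m'})$ over integers (which are constants, either true or false); so we simply apply Gaifman's theorem in the form of Theorem~\ref{cor:normalformFO} to obtain a Boolean combination of local formulas and statements ``$g\ge 1$'' for ground cl-terms $g$, and then introduce a single layer $L_1=(\tau_1,\iota_1)$ whose symbols $R$ name each such ``$g\ge 1$'' (arity $0$) via $\iota_1(R)=$``$g\ge1$''; ground terms of $\#$-depth $0$ are integers, handled trivially. (Strictly, a ground term of $\#$-depth $0$ is an integer-valued polynomial with no $\#$, which is already a cl-term of width/radius $0$.)

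For the inductive step, suppose the statement holds for all expressions of $\#$-depth $<d$ and let $\xi$ have $\#$-depth $d$. First I would isolate the outermost counting constructs: every subterm $\Count{\ov{y}}{\vartheta}$ occurring in $\xi$ at the top level of nesting has $\countr(\vartheta)\le d-1$, and since $\xi\in\FOunC(\Ps)$, wherever such a term feeds into a predicate $\P(t_1,\ldots,t_m)$ the combined free variables number at most one. The key move is: rename the unique possible free variable to the fixed variable $z$, and recursively apply the theorem to the $\#$-depth-$(d-1)$ pieces to get layers $L_1,\ldots,L_{d}$ and decomposed expressions; then we are left with an expression in which counting only happens ``at the top'', over formulas that (after substituting the named relation symbols) live in $\FO[\sigma_{d}]$ enriched by $0$-ary symbols. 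Now I would apply Lemma~\ref{lem:normalform:countrOne} (for the formula case, after also invoking Theorem~\ref{cor:normalformFO} / the $d=0$ treatment where $\xi$ itself is built from such predicates and $\FO$-connectives) and Lemma~\ref{lem:normalform:terms} / Lemma~\ref{lem:normalform:terms2} (for the ground-term case) over the extended signature $\sigma_{d}$: this converts the top-level counting over local $\FO[\sigma_{d}]$-formulas into cl-terms of $\sigma_{d}$, and packages the remaining predicate-applications $\P(t_1,\ldots,t_m)$ and statements ``$g\ge1$'' into the final layer $L_{d+1}=(\tau_{d+1},\iota_{d+1})$, with the decomposed $\xi'$ being either a ground cl-term of $\sigma_{d+1}$ (term case) or a Boolean combination of $\sigma_{d+1}$-local formulas and $0$-ary atoms (formula case). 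The semantic equation (III) then follows by a straightforward unwinding: the expansion $\A_i\mapsto\A_{i-1}$ interprets each new symbol exactly as its defining formula $\iota_i(R)$ dictates, so $\sem{\xi}^{\I}=\sem{\xi'}^{\I_{d+1}}$ by induction together with the correctness clauses of Lemmas~\ref{lem:normalform:terms2} and \ref{lem:normalform:countrOne}.

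Two bookkeeping points deserve care. First, arities: Lemma~\ref{lem:normalform:countrOne} produces cl-terms $t'_i$ with $\free(t'_i)=\free(t_i)$, and since $\xi\in\FOunC(\Ps)$ the union of these free-variable sets has size $\le1$; after renaming to $z$ this yields exactly the ``arity $\le1$'' property required of $\tau_{d+1}$, with arity-$1$ symbols defined by formulas of type (i) and arity-$0$ symbols by type (ii) (the ``$g\ge1$'' statements, which have no free variable). Second, the $\FOunC$-restriction must be preserved through the recursion: when we replace a subterm $\Count{\ov{y}}{\vartheta}$ by its decomposition, the new names $R\in\tau_i$ are unary or nullary relation symbols, so the rewritten outer formula still satisfies rule~(\ref{item:Q}') — the ``$\le1$ free variable'' condition is on the \emph{terms inside} $\P$, and substituting unary/nullary atoms cannot increase that count. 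The effective-construction clause is handled by observing that every ingredient — Gaifman normal form (Theorem~\ref{thm:Gaifman}), the Feferman--Vaught decomposition inside Lemma~\ref{lem:normalform:terms}, and the repackaging in Lemma~\ref{lem:normalform:countrOne} — comes with an algorithm that also reports radii, so the composite algorithm reports the radius of each cl-term in $D$ and a uniform locality radius $r$ for the type-(A) formulas of $\phi'$.

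The main obstacle, I expect, is \emph{not} any single hard argument but rather the careful orchestration of the induction: making sure that when we peel off the outermost $\#$-layer and recurse on the inner $\#$-depth-$(d-1)$ subexpressions, (a) the variable $z$ is consistently used as ``the'' free variable across all layers, (b) the signatures nest correctly as $\sigma=\sigma_0\subseteq\sigma_1\subseteq\cdots\subseteq\sigma_{d+1}$ with each $\tau_i$ genuinely fresh, and (c) the cl-terms appearing in $\iota_i(R)$ for layer $i$ are over $\sigma_{i-1}$ exactly — i.e.\ they may refer to names introduced in earlier layers but not later ones. Getting the ordering of layers right (innermost counting $\leftrightarrow$ layer $L_1$, outermost $\leftrightarrow$ layer $L_{d+1}$) and verifying that the expansion procedure in (III) respects this ordering is the delicate part; once the indexing discipline is fixed, each individual step is a direct appeal to an already-proved lemma.
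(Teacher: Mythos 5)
Your proposal identifies the right ingredients — Lemma~\ref{lem:normalform:countrOne} to digest one ``$\#$-layer'' at a time, Theorem~\ref{cor:normalformFO} to Gaifman-normalise the residual first-order part, and the idea of naming intermediate $\FOunC$-subformulas by fresh unary/nullary relation symbols — and the final sanity checks on arities and the $\FOunC$-restriction are exactly the right things to verify. But the way you organise the induction (outside-in, recursing on the subformula $\vartheta$ inside the outermost $\Count{\ov{y}}{\vartheta}$) produces one layer too many, and I think this is a genuine gap rather than a cosmetic hiccup. The problem: the recursive call on $\vartheta$ (of $\#$-depth $d-1$) already invokes Theorem~\ref{cor:normalformFO} on $\vartheta$'s $\#$-depth-$0$ residual and names the resulting ``$g\ge1$'' atoms in the recursion's last layer $\tau_d^\vartheta$. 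Afterwards you still have to (a) apply Lemma~\ref{lem:normalform:countrOne} to the outermost $\P(t_1,\ldots,t_m)$, introducing $\tau_{d+1}$ whose $\iota$-formulas use cl-terms over $\sigma_d$, and (b) Gaifman-normalise the \emph{outer} formula, whose atoms now include the $\tau_{d+1}$-symbols, so the ``$g\ge1$'' statements it produces are ground cl-terms over $\sigma_{d+1}$ and need a further layer $\tau_{d+2}$. You cannot merge $\tau_{d+1}$ and $\tau_{d+2}$ because the latter's defining cl-terms genuinely mention symbols in the former, violating the ``cl-terms of signature $\sigma_{i-1}$'' constraint in part (I). So your recursion yields $d{+}2$ layers, not $d{+}1$.

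The paper avoids this by iterating bottom-up: it sets $\phi_0:=\phi$ and, at step $i$, applies Lemma~\ref{lem:normalform:countrOne} to the \emph{innermost} $\#$-depth-$\le1$ subformulas $\P(t_1,\ldots,t_m)$ of $\phi_i$. Crucially, Lemma~\ref{lem:normalform:countrOne} already performs Gaifman normalisation \emph{inside} its own proof, on the inner formulas $\theta$ appearing in $\Count{\ov z}{\theta}$, and emits both the resulting ``$g\ge1$'' sentences (as the $\chi_j$'s of Lemma~\ref{lem:normalform:terms2}'s case analysis) and the cl-terms of $\P(\ldots)$ \emph{over the same signature} $\sigma_i$, so they all fit into a single new layer $\tau_{i+1}$. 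After $d$ steps one lands on a $\#$-depth-$0$ formula $\phi_d$ over $\sigma_d$, and the single application of Theorem~\ref{cor:normalformFO} at that point fills the one remaining layer $\tau_{d+1}$. To salvage your top-down account you would either have to prove a weaker inductive invariant (returning a $\#$-depth-$0$ formula that is \emph{not yet} Gaifman-normalised, so the caller can do it), or recurse instead on the formula obtained after peeling the \emph{innermost} layer — at which point you have essentially reconstructed the paper's iteration. Your treatment of the ground-term case is fine as written (recursing on the inner formulas works there precisely because ground terms do not require a closing Gaifman layer; indeed, the paper itself does a top-down recursion for that case).
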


\begin{proof}
We first prove the theorem's statement for the case that the input
expression $\xi$ is a formula $\phi(\ov{x})$.

We proceed by induction on $i$ to construct for all
$i\in\set{0,1,\ldots,d}$
a tuple $L_i=(\tau_i,\iota_i)$ and an $\FOunC[\sigma_i]$-formula
$\phi_i(\ov{x})$ of $\#$-depth $(d{-}i)$, such that
for every $\sigma$-interpretation $\I=(\A,\beta)$ and the
interpretation $\I_i\deff(\A_i,\beta)$
we have \ $\I\models\phi \iff \I_i\models\phi_i$.

For $i=0$ we are done by letting $\tau_0\deff\emptyset$, $\sigma_0\deff\sigma$, $\phi_0\deff\phi$,
and letting $\iota_0$ be the mapping with empty domain.
Now assume that for some $i<d$, we have already constructed
$L_{i}=(\tau_{i},\iota_{i})$ and 
$\phi_{i}$.
To construct $L_{i+1}=(\tau_{i+1},\iota_{i+1})$ and $\phi_{i+1}$, we proceed as follows. 

Let $\Pi$ be the set of all $\FOunC(\Ps)[\sigma_{i}]$-formulas of
$\#$-depth $\leq 1$ of the form $\P(t_1,\ldots,t_m)$, for $\P\in\Ps$ and $m=\ar(\P)$, 
that occur in $\phi_{i}$.

Now consider an arbitrary formula $\pi$ in $\Pi$ of the
form $\P(t_1,\ldots,t_m)$. 
From Definition~\ref{def:FOunC} we know that there is a variable $y$
such that $\free(t_j)\subseteq\set{y}$ for every $j\in[m]$.
By Lemma~\ref{lem:normalform:countrOne}, $\pi$ is
equivalent to a Boolean combination $\pi'$ of 
\begin{enumerate}[(a)]
\item
 formulas of the form $\P(t'_1,\ldots,t'_{m})$, for cl-terms
 $t'_1,\ldots,t'_{m}$ of signature $\sigma_{i}$, where $\free(t'_j)=\free(t_j)\subseteq\set{y}$ for each
 $j\in[m]$, 
\item
 statements of the form ``$g\geq 1$'' for ground cl-terms $g$ of
 signature $\sigma_i$, and
\item
 statements of the form $\P'(i_1,\ldots,i_{m'})$ for $\P'\in\Ps$,
 $m'=\ar(\P')$, and integers $i_1,\ldots,i_{m'}$.
\end{enumerate}
For each statement $\chi$ of the form (b) or (c), we include into
$\tau_{i+1}$ a 0-ary relation symbol $R_\chi$, 
we replace each occurrence of $\chi$
in $\pi'$ with the new atomic formula $R_\chi()$,
and we let
$\iota_{i+1}(R_\chi)\deff \chi$.
For each statement $\chi$ in $\pi$ of the form (a) we proceed as follows.
If $\free(\chi)=\emptyset$, then we include into  
$\tau_{i+1}$ a $0$-ary relation symbol $R_\chi$,
we replace each occurrence of $\chi$
in $\pi'$ with the new atomic formula $R_\chi()$, and
we let $\iota_{i+1}(R_\chi)\deff \chi$.
If $\free(\chi)=\set{y}$, then we include into $\tau_{i+1}$ a unary
relation symbol $R_\chi$, 
we replace each
occurrence of $\chi$ in $\pi'$ by the new atomic formula $R_\chi(y)$,
and we let $\iota_{i+1}(R_\chi)$ be the formula
obtained from $\chi$ by consistently replacing every free occurrence
of the variable $y$ by the variable $z$.
We write $\pi''$ for the resulting formula $\pi'$.

Clearly, $\pi''$ is a quantifier-free $\FO[\sigma_{i+1}]$-formula, for
$\sigma_{i+1}\deff \sigma_i\cup\tau_i$; in particular, it has $\#$-depth
0. It is straightforward to see
that for every $\sigma$-interpretation $\I=(\A,\beta)$ we have
\[
  \I_i\models\pi \quad \iff \quad \I_{i+1}\models\pi''\,,
\]
for $\I_i\deff (\A_i,\beta)$ and $\I_{i+1}\deff (\A_{i+1},\beta)$.

The induction step is completed by letting $\phi_{i+1}$ be
the formula obtained from $\phi_i$ by replacing every occurrence of a
formula $\pi\in\Pi$ by the formula $\pi''$.
It can easily be verified that $\phi_{i+1}$ is an $\FOunC(\Ps)[\sigma_{i+1}]$-formula of
$\#$-depth $\countr(\phi_i)-1 = ((d{-}i){-}1)= (d{-}(i{+}1))$, and that
\ $\I_i\models\phi_i\iff \I_{i+1}\models\phi_{i+1}$.

By the above induction we have constructed $L_1,\ldots,L_d$ and an
$\FOunC(\Ps)[\sigma_d]$-formula $\phi_d$ of $\#$-depth 0.
Since $\countr(\phi_d)=0$, we can view $\phi_d$ as an
$\FO[\sigma_d]$-formula, possibly enriched by atomic sentences of
the form $\P(i_1,\ldots,i_m)$ with $\P\in\Ps$, $m=\ar(\P)$, and
integers $i_1,\ldots,i_m$.
By Theorem~\ref{cor:normalformFO} we obtain an equivalent formula
$\tilde{\phi}$ of signature $\sigma_d$ in cl-normalform, possibly enriched by
atomic sentences of the form $\P(i_1,\ldots,i_m)$ with $\P\in\Ps$,
$m=\ar(\P)$, and integers $i_1,\ldots,i_m$.
I.e., $\tilde{\phi}$ is a Boolean combination of 
\begin{enumerate}[(A)]
 \item $\FO[\sigma_d]$-formulas that are local around their free
   variables $\ov{x}$,
 \item statements of the form ``$g\geq 1$'', for a ground cl-term $g$
   of signature $\sigma_d$, and
 \item statements of the form $\P(i_1,\ldots,i_m)$ with $\P\in\Ps$,
$m=\ar(\P)$, and integers $i_1,\ldots,i_m$.
\end{enumerate}
For each statement $\chi$ of the form (B) or (C) we include into
$\tau_{d+1}$ a new relation symbol $R_\chi$ of arity 0,
we replace each occurrence of $\chi$ in $\tilde{\phi}$ by the new
atomic formula $R_\chi()$, and we let 
$\iota_{d+1}(R_\chi)\deff \chi$.
Letting $\xi'\deff \phi'$ be the resulting formula $\tilde{\phi}$ completes the proof of
Theorem~\ref{thm:normalformFOunC} for the case that the input
expression $\xi$ is a formula.
\medskip

Let us now turn to the case where the input expression $\xi$ is a
ground term $t$. Then, $t$ is built using $+$ and $\cdot$ from
integers and from ground terms $g$ of the form
$\Count{\ov{y}}{\theta(\ov{y})}$.
Let $S$ be the set of all these ground terms $g$, and let
$\Theta$ be the set of all according formulas $\theta(\ov{y})$.
We have already proven the theorem's statement for the case where the
input expression is a formula.
For each $\theta\in\Theta$, we therefore obtain a sequence 
$D_\theta=(L_1^\theta,\ldots,L_{d_\theta+1}^\theta,\theta')$ with
$d_{\theta}=\countr(\theta)$ and
$L_i^\theta=(\tau_i^\theta,\iota_i^\theta)$.

Each term $g\in S$ is of the form $\Count{\ov{y}}{\theta(\ov{y})}$ for
some $\theta\in\Theta$. 
Clearly, $g^\A=(g')^{\A_{d_{\theta}+1}}$ for $g'\deff \Count{\ov{y}}{\theta'(\ov{y})}$.
Note that $\theta'$ is local around its free variables $\ov{y}$.
Therefore, from Lemma~\ref{lem:normalform:terms}(a) we obtain a ground
cl-term $\hat{g}'$ that is equivalent to $g'$.
We let $\xi'\deff t'$ be the ground cl-term obtained from $t$ by
replacing every term $g\in S$ by $\hat{g}'$. We are done by letting 
$L_i\deff(\tau_i,\iota_i)$ for each $i\in\set{1,\ldots,d{+}1}$, where
$\tau_i$ (and $\iota_i$) is the union of $\tau_i^\theta$ (and
$\iota_i^\theta$, respectively) for all $\theta\in \Theta$.
This finally completes the proof of Theorem~\ref{thm:normalformFOunC}.
\end{proof}

We call the sequence $(L_1,\ldots,L_{\countr(\xi)+1},\xi')$ obtained from
Theorem~\ref{thm:normalformFOunC} for an $\FOunC(\Ps)$-formula or
ground term $\xi$
a \emph{cl-decomposition} of $\xi$.

\medskip

Assume, we have available an efficient algorithm $\mathbb{A}$ for computing the
value  $u^\B[b_1]$ of a unary basic cl-term $u(y_1)$ in a structure
$\B$ for all values $b_1\in B$. 
This algorithm can also be used to compute the value of a 
ground basic cl-term
$g\deff\Count{(y_1,\ldots,y_k)}{\psi(y_1,\ldots,y_k)}$ in $\B$, since
$g^\B = \sum_{b_1\in B}u^{\B}[b_1]$ for the unary basic cl-term
$u(y_1) \deff \Count{(y_2,\ldots,y_k)}{\psi(y_1,y_2,\ldots,y_k)}$.

We argue that by Theorem~\ref{thm:normalformFOunC}, the algorithm
$\mathbb{A}$ can also be used to evaluate 
an $\FOunC(\Ps)$-expression $\xi$ that is either a ground
term $t$ or a sentence $\phi$ in a $\sigma$-structure $\A$.
To evaluate $\xi$ in a $\A$ 
we can proceed as follows.
\begin{enumerate}[(1)]
\item
Use Theorem~\ref{thm:normalformFOunC} to compute a cl-decomposition
$D=(L_1,\ldots,L_{d+1},\xi')$ of $\xi$, for $d\deff \countr(\xi)$.
\item
Let $\A_0\deff \A$. 
\item
For each $i\in[d{+}1]$, compute the $\sigma_i$-expansion of $\A_{i-1}$.
To achieve this, consider for each $R\in \tau_i$ the formula
$\iota_i(R)$. This formula is a very simple statement concerning one
or several cl-terms (each of which is a polynomial built from integers
and basic cl-terms). Let $t_1,\ldots,t_s$ be the list of all basic
cl-terms that appear in $\iota_i(R)$.
For each $j\in [s]$ use algorithm $\mathbb{A}$ to compute the values
$t^\A_j[a]$ for all $a\in A$ (resp., the value $t^\A_j$, if $t_j$ is
ground). Then, combine the values and use a $\Ps$-oracle to check for each $a\in A$ whether $\iota_i(R)$ is
satisfied by $(\A_{i-1},a)$, and store the new relation $R^{\A_i}$ accordingly. 
\item
If $\xi$ is a sentence $\phi$, then $\phi'$ is a Boolean combination
of statements of the form $R()$, for 0-ary relation symbols
$R\in\sigma_{d+1}$.
Thus, checking whether $\A_{d+1}\models\phi'$ boils down to evaluating a
propositional formula, and hence is easy.

If $\xi$ is a ground term $t$, then $t'$ is a ground cl-term. I.e.,
$t'$ is a polynomial built from integers and ground basic cl-terms
$t'_1,\ldots,t'_s$ for some $s\geq 1$. 
For each $j\in[s]$ we use algorithm $\mathbb{A}$ 
to compute the value of $t'_j$ in $\A_{d+1}$.
Afterwards, we combine these values to compute the value of $t'$ in
$\A_{d+1}$.
\end{enumerate}

From \cite{fri04,kazseg13,DBLP:conf/icdt/SegoufinV17} we obtain fixed-parameter almost linear
algorithms for counting the number of solutions of $\FO$-queries on
planar graphs, classes of bounded local tree-width,
classes of bounded expansion,
and---most generally---classes of locally bounded expansion.
By the above approach, this immediately provides us with an 
$\FPT$ algorithm for $\FOunCP$ on these classes.
For nowhere dense classes, though, it is not so
easy to generalise the $\FO$ model-checking
algorithm of \cite{grokresie14} to compute the values of unary
cl-terms. The remainder of the paper is dedicated to this task.

\section{Neighbourhood covers and local evaluation}
\label{sec:Locality}

The techniques of the previous section enable us to reduce the
evaluation of $\FOunC(\Ps)$-sentences and ground terms to the 
evaluation of unary basic cl-terms. To obtain an efficient algorithm
for evaluating the latter on structures $\A$ from a nowhere dense
class of structures, we need to provide a variant of basic cl-terms
(along with techniques to decompose such terms)
that are based on so-called neighbourhood covers.

An \emph{$r$-neighbourhood cover} of a structure $\cA$ is a mapping
$\cX:A\to 2^A$ such that for every $a\in A$ the set $\cX(a)$ is
connected in the Gaifman graph $G_\A$ of $\cA$ and it holds that
$N_r^{\cA}(a)\subseteq \cX(a)$. The sets $\cX(a)$ (for $a\in A$), and depending on the
context also the induced
substructures $\cA[\cX(a)]$, are called the \emph{clusters}
of the cover. Usually, we want the clusters to have
small radius, where the \emph{radius} of a connected set
$X\subseteq A$ is the least $s$ such that there is a $c\in X$ such
that $X\subseteq N_s^{\cA[X]}(c)$. Moreover, we want a
neighbourhood cover to be \emph{sparse}, which means that no
$b\in A$ appears in too many of the sets $\cX(a)$. We will see later
(Theorem~\ref{thm:alg-covers})
that in structures from a nowhere dense class of structures we can
efficiently construct sparse $r$-neighbourhood covers of radius at
most $2r$. In this section, we do not have to worry about the radius or
sparsity of neighbourhood covers.

We need some additional terminology
for neighbourhood covers. We write $X\in\cX$ to express
that $X$ is a cluster of $\cX$, i.e., $X=\cX(a)$ for some $a\in A$.
We say that a cluster $X\in\cX$ \emph{$s$-covers} a tuple $\bar a\in A^k$
if $N_s^\A(\bar a)\subseteq X$. Note that $\cX(a)$ $r$-covers $a$, but there
may be other clusters $X\in\cX$ that $r$-cover $a$ as well.

$\FO^+$ is the extension of first-order logic by adding new atomic
formulas $\dist(x,y)\,{\le}\, d$, with the obvious meaning. Note that
$\FO^+$ is only a syntactic extension and not more expressive than
$\FO$, because the ``distance atoms'' $\dist(x,y)\,{\le}\, d$ can be
replaced by first-order formulas. However, a first-order formula
expressing $\dist(x,y)\,{\le}\, d$ has quantifier rank $\log d$.

\subsection{Rank-preserving locality}
\label{subsec:rploc}
The main result of this subsection, Corollary~\ref{cor:locality2}, allows us
to reduce the ``global'' evaluation of a first-order formula in a
structure $\cA$ to the ``local'' evaluation of formulas in the
clusters of a neighbourhood cover of $\cA$.
\NewText{To formulate Corollary~\ref{cor:locality2}, we need to 
introduce
the  sets $\FOplus[\myp,q]$ of \cite{GS2026} and
state a result of \cite{GS2026}, namely, \cite[Theorem~7.1]{GS2026}.
For this, we need some more notation.
}

\begin{NewTextBlock}
For $\myp,q\in\NN$ we define $f_q(\myp)\deff (4q)^{q+\myp}$ and
we let $\myrho(\myp,q)\deff f_q(\myp)$.\footnote{\NewText{Let us note that instead of working with the particular function
$f_q(\ell)$, we could work with any function $\myrho$ that satisfies
the constraints (1)--(3) formulated in \cite[Section~4]{GS2026}.
}}
Using this function $\myrho$, we
define sets $\FOplus[\myp,q]\subseteq\FOplus$ for all $\myp,q\in\NN$ with
$\myp\leq q$ as follows.
\begin{itemize}
\item $\FOplus[0,q]$ is the set of all formulas $\phi$ such that
  $|\free(\phi)|\le q$
  and $\phi$ is a Boolean combination of atomic $\FO$ formulas and
  distance atoms $\dist(x,y)\le d$ with $d\le \myrho(0,q)$.
\item For $\myp>0$, $\FOplus[\myp,q]$ is the set of all formulas $\phi$ with
  $|\free(\phi)|\le q{-}\myp$
   such that
   $\phi$ is a Boolean combination of formulas in $\FOplus[\myp{-}1,q]$, distance atoms $\dist(x,y)\le d$ with
  $d\le \myrho(\myp,q)$, formulas $\exists y\, \psi$
  where $\psi\in\FOplus[\myp{-}1,q]$,
  and formulas $\exists y\, \big(\dist(x,y)\le d\, \wedge \psi\big)$,
  where $\psi\in\FOplus[\myp{-}1,q]$
  and $d\le \myrho(\myp,q)-\myrho(\myp{-}1,q)$.
\end{itemize}
Note that every formula $\phi\in\FOplus[\myp,q]$ has quantifier-rank at most $\myp$.
Clearly, for every $\FOplus$-formula $\phi$ there are $\myp,q\in\NN$
with $\myp\leq q$ such that $\phi\in\FOplus[\myp,q]$.

\begin{remark}\label{remark:FOpluspq}
It is easy to see (cf.\ \cite[Remark~4.3]{GS2026}) that
$\FOplus[\myp,q]\subseteq \FOplus[\myp,q']$ for all $\myp,q,q'$ with $\myp\leq
q\leq q'$. On the other hand, if $\myp<\myp'\leq q$, then
$\FOplus[\myp,q]\not\subseteq\FOplus[\myp',q]$ because $\FOplus[\myp,q]$
contains formulas with exactly $q{-}\myp$ free variables, and these
formulas do not belong to $\FOplus[\myp',q]$. But every formula
$\phi\in\FOplus[\myp,q]$ that satisfies $|\free(\phi)|\leq q-\myp'$ also
belongs to $\FOplus[\myp',q]$. 
\end{remark}

For $\myp,q,r,s\geq 0$ with $\myp\leq q$, a \emph{basic local
  $(\myp,q,r,s)$-sentence} is an $\FOplus$ sentence of the form 
\[
  \exists x_1\cdots\exists x_{s'}\;\Bigl(\bigwedge_{1\le i<j\le s'}\dist(x_i,x_j)>2r'\;\wedge\;\bigwedge_{1\le i\le s'}\lambda(x_i)\Bigr),
\]  
where $s'\leq s$, $r'\leq r$, and $\lambda(x)$ is an $r'$-local formula in $\FOplus[\myp,q]$.

We also need the following notation.
Let $\A$ be a $\sigma$-structure, let $k\geq 1$, let
$\ov{a}=(a_1,\ldots,a_k)\in A^k$
and let $r\geq 0$.
By $G^{\A}_{\ov{a},r}$ we denote the graph with vertex
set $[k]$ where there is an edge between nodes $i$ and $j$ iff
$i\neq j$ and $\dist^{\A}(a_i,a_j)\leq r$.
We will often omit the superscript $\A$ and simply write $G_{\ov{a},r}$.

We say that $\ov{a}$ is \emph{$r$-connected} if
the graph $G_{\bar a,r}$ is
connected.  An \emph{$r$-component} of $\bar a$ in $\cA$ is the vertex
set of a connected component of the graph $G_{\bar a,r}$. 

For an arbitrary set $J\subseteq[k]$, by
$\bar a_J$ we denote the projection of $\bar a$ to the positions in
$J$.

We are now ready to state the ``rank-preserving locality theorem for $\FOplus[p,q]$''
of \cite{GS2026} (namely, \cite[Theorem~7.1]{GS2026}).

\begin{theorem}[\cite{GS2026}]\label{thm:locality1}
There is an algorithm which, upon input of $\myp,q\in\NN$ with $\myp< q$ and
a formula $\phi(x_1,\ldots,x_k)\in\FOplus[\myp,q]$ with $k= |\free(\phi)|\geq 1$,
lets $r\deff\myrho(\myp,q)$ and computes
for each $G\in \cG_k$ a number $m_G\in\NN$ and for each $i\in[m_G]$
\begin{itemize}
\item a Boolean combination $\xi^i_G$ of basic local
  $(\myp{-}1,q,\rHalbe,k{+}\myp)$-sentences, and
\item for each connected component $I$ of $G$ an $r$-local formula
  $\psi^i_{G,I}(\bar x_I)\in\FOplus[\myp,q]$
\end{itemize}
such that the following holds.
\begin{enumerate}[(1)]
\item\label{item:localitythm:one}
  For all $\sigma$-structures $\cA$ and all $\bar a\in A^k$ we have
  $\cA\models\phi[\bar a]$ if and only if for $G\deff G_{\bar
    a,r}^{\cA}$ there is an $i\in[m_G]$ such that
  $\cA\models\xi_G^i$ and $\cA\models
  \psi_{G,I}^i[\bar{a}_I]$ for every connected component $I$
  of $G$.
\item\label{item:localitythm:two} 
  For all $\sigma$-structures $\cA$ and all $\bar a\in A^k$, there is
  at most one $i\in [m_G]$ for $G\deff G^{\cA}_{\bar a,r}$ such that  
  $\cA\models\xi_G^i$ and $\cA\models
  \psi_{G,I}^i[\bar{a}_I]$ for every connected component $I$
  of $G$.
\end{enumerate}  
\end{theorem}

Note that if $I\subseteq[k]$ and $\psi(\bar x_I)$ 
is an $r$-local formula, then for every $\sigma$-structure $\cA$,
every neighbourhood cover $\cX$ of $\cA$, every tuple $\bar a\in A^k$,
and all $X,X'\in\cX$ that both $r$-cover $\bar a_I$, we have
\[
  \cA[X]\models\psi[\bar a_I]
  \ \ \iff \ \ 
  \cA\models\psi[\bar a_I]
  \ \ \iff \ \ 
  \cA[X']\models\psi[\bar a_I].
\]
Furthermore, if $I$ is a connected component of $G^{\cA}_{\bar a,r}$
and $\cX$ is a $kr$-neighbourhood cover, then for any entry $a$ of the tuple $\bar a_I$, the cluster $X\deff
\cX(a)$ contains the $kr$-neighbourhood of $a$ in $\A$, and hence $X$
$r$-covers $\bar a_I$.
Using this, as an immediate consequence of Theorem~\ref{thm:locality1}
we obtain the following Corollary~\ref{cor:locality2}. In the
remainder of this paper, Corollary~\ref{cor:locality2} will play the
same role that \cite[Theorem~7.1]{GSarxiv2017,GS2018} played in \cite{GSarxiv2017,GS2018}.

\begin{corollary}[Rank-Preserving Normal Form]
\label{cor:locality2}
  Let $q,k,\ell\in\NN$ such that $1\leq k\le q$ and $\ell\leq q{-}k$, and let $r:=f_q(\ell)$.
  Let
  $\phi(\bar x)$, where $\bar x=(x_1,\ldots,x_k)$, be an
  $\FO^+[\sigma]$-formula in $\FOplus[\ell,q]$.

  Then for each graph $G\in\cG_k$ there are an $m_G\in\NN$ and
  for each $i\in[m_G]$
  \begin{itemize}
  \item a Boolean combination $\xi^i_G$ of basic local $(\ell{-}1,q,\rHalbe,q)$-sentences of signature $\sigma$ and
  \item for each connected component $I$ of $G$ an $r$-local
  $\FO^+[\sigma]$-formula $\psi^i_{G,I}(\bar x_I)\in\FOplus[\ell,q]$
  \end{itemize}
  such that the following holds.
  \begin{enumerate}[(1)]
  \item For all $\sigma$-structures $\cA$, all $kr$-neighbourhood
    covers $\cX$ of $\cA$, and all $\bar a\in A^k$ we have
    $
    \cA\models\phi[\bar a]
    $ 
    if and only if  for
    $G:=G_{\bar a,r}$ there is an $i\in[m_G]$ such
    that $\cA\,\models\,\xi^i_G$ and  for every
     connected component $I$ of $G$ there is an $X\in\cX$ that $r$-covers
    $\bar a_I$ and 
    $
    \cA[X]\ \models\ \psi^i_{G,I}[\bar a_I].
    $
  \item For all $\sigma$-structures $\cA$, all $kr$-neighbourhood
    covers $\cX$ of $\cA$, and all $\bar a\in A^k$ 
    there is at most one $i\in[m_G]$ for $G:=G_{\bar a,r}$ such that 
    the conditions of (1) hold.
  \item For all $\sigma$-structures $\cA$, all $kr$-neighbourhood
    covers $\cX$ of $\cA$, all $\bar a\in A^k$,
    all connected components $I$ of $G:=G_{\bar a,r}$, all
    $X,X'\in\cX$ that
    both $r$-cover $\bar a_I$, and all $i\in[m_G]$,
    \ \(
    \cA[X]\models\psi^i_{G,I}[\bar a_I]
     \iff 
    \cA[X']\models\psi^i_{G,I}[\bar a_I].
    \)
  \item The $\xi^i_G$ and $\psi^i_{G,I}$ can be computed
  from $q,k,\ell,G,\phi$.
  \end{enumerate}
\end{corollary}

\end{NewTextBlock}

\subsection{Rank-preserving term localisation}
Let $\delta_{G,r}(\ov{y})$ be the
$\FO^+[\sigma]$-formula obtained from the $\FO[\sigma]$-formula
$\delta^{\sigma}_{G,r}(\ov{y})$ of Section~\ref{sec:normalform} by
replacing every subformula of the form $\dist^{\sigma}(y_i,y_j)\,{\leq}\,
r$ (resp., ${>}\,r$) by the ``distance atom'' $\dist(y_i,y_j)\,{\leq}\, r$
(resp., its negation).

Next, we define a variant of the cl-terms of Section~\ref{sec:normalform}
that is based on neighbourhood covers. These
``cover-cl-terms'' are no counting terms of the logic
$\FOunC(\Ps)$; they are abstract objects that come with their own
semantics.

\begin{definition}[Cover-cl-Term]\label{def:ccl-term3}
Let $r,m\geq 0$, $k\geq 1$.
 A \emph{basic cover-cl-term} with parameters $(r,k,m)$ and of
 signature $\sigma$ is 
 an object $g$ of the form
 \[
   \Count{(y_1,\ldots,y_k)}{
     \big(\,
      \delta_{G,r}(y_1,\ldots,y_k)
      \;\und\;
      \psi(y_1,\ldots,y_k)
    \,\big) }
 \]
 or an object $u(y_1)$ of the form
 \[
   \Count{(y_2,\ldots,y_k)}{\big(\,
     \delta_{G,r}(y_1,\ldots,y_k)
     \;\und\;
     \psi(y_1,\ldots,y_k)
   \,\big)}
 \]
 where 
 $\ov{y}=(y_1,\ldots,y_k)$ is a tuple of $k$ pairwise distinct
 variables,
 $G$ is a \emph{connected} graph in $\Graphclass_k$, and
 $\psi(y_1,\ldots,y_k)$ is an $\FO^+[\sigma]$-formula 
 such that the following is true for all $\sigma$-structures $\A$,
 all $\ov{a}\in A^k$ with $G^{\A}_{\ov{a},r}=G$, 
 all 
 $m$-neighbourhood covers $\cX$ of $\A$, and all
 clusters $X$ and $X'$ of $\cX$ that $r$-cover $\ov{a}$:
 \[
   \A[X] \ \models \ \psi[\ov{a}]
   \quad\iff \quad
   \A[X'] \ \models \ \psi[\ov{a}]\,.
 \]
 We say that $g$ and $u(y_1)$ 
\NewText{\emph{belong to $\FOplus[\ell,q]$}
  iff $\psi\in\FOplus[\ell,q]$.
}

 \emph{Semantics:}
 For a $\sigma$-structure $\A$ and an $m$-neighbour\-hood cover $\cX$ of
 $\A$ we let
 \ \(
   g^{\A,\cX}
 \) \ 
 be the number of tuples $\ov{a}\in A^k$ such that 
 $G^{\A}_{\ov{a},r}=G$
 (i.e., $\A\models\delta_{G,r}[\ov{a}]$) and
 $\A[X]\models \psi[\ov{a}]$ for some (and hence, all)
 clusters $X$ of $\cX$ that $r$-cover $\ov{a}$.
 Similarly, for $a_1\in A$ we let
 \ \(
   u^{\A,\cX}[a_1]
 \) \ 
 be the number of tuples $(a_2,\ldots,a_k)\in A^{k-1}$ such that for
 $\ov{a}\deff (a_1,a_2,\ldots,a_k)$ we have 
 $G^{\A}_{\ov{a},r}=G$ and 
 $\A[X]\models \psi[\ov{a}]$ for some (hence, all)
 clusters $X$ of $\cX$ that $r$-cover $\ov{a}$.
\smallskip

 A \emph{cover-cl-term with parameters $(r,k,m)$} is built from
 integers and 
 basic cover-cl-terms with parameters $(r',k',\allowbreak m')$ with $r'\leq r$,
 $k'\leq k$, $m'\leq m$ by 
 using rule 
 (\ref{item:plustimesterm}) of Definition~\ref{def:FOC}.
\NewText{We say that the cover-cl-term belongs to $\FOplus[\ell,q]$ iff all 
 its basic cover-cl-terms belong to $\FOplus[\ell,q]$.
}\end{definition}

We generalise the notion to
graphs $G\in\Graphclass_k$ that are not connected.

\begin{definition}[Cover-Term]
Let $r,m\geq 0$, $k\geq 1$.
A \emph{cover-term} with parameters $(r,k,m)$ and of signature
$\sigma$ is of the form
\begin{eqnarray*}
     g & \deff & \Count{(y_1,\ldots,y_k)}{\big(
      \delta_{G,r}(\ov{y})\und \Und_{I\in C} \psi_I(\ov{y}_I)}
     \big) \text{ \ \ or}
\\
     u(y_1) & \deff & \Count{(y_2,\ldots,y_k)}{\big(\,
      \delta_{G,r}(\ov{y})\und \Und_{I\in C} \psi_I(\ov{y}_I)
     \,\big)}
\end{eqnarray*}
where $k\geq 1$, $\ov{y}=(y_1,\ldots,y_k)$ is a tuple of $k$ pairwise
distinct variables, $G\in\Graphclass_k$, $C$ is the set consisting of
all connected components $I$ of $G$, and for every $I\in C$,
$\psi_I(\ov{y}_I)$ is an $\FO^+[\sigma]$-formula 
such that for all $\sigma$-structures $\A$, all
$\ov{a}=(a_1,\ldots,a_k)\in A^{k}$ with
$G^{\A}_{\ov{a}_I,r}=G[I]$,  
all $m$-neighbourhood covers $\cX$ of $\A$, and all clusters $X$ and $X'$ of $\cX$ that
$r$-cover $\ov{a}_I$ we have
\begin{equation}\label{eq:normalform:terms-Version3}
  \A[X] \ \models \ \psi_I[\ov{a}_I]
  \quad\iff \quad
  \A[X'] \ \models \ \psi_I[\ov{a}_I]\,.
  \tag{$**$}
\end{equation}
\emph{Semantics:} For a $\sigma$-structure $\A$ and an
$m$-neighbour\-hood cover $\cX$ of $\A$ we let
$g^{\A,\cX}$ be the number of tuples $\ov{a}=(a_1,\ldots,a_k)\in A^k$ such
that $G_{\ov{a},r}=G$ 
and for all $I\in C$, $\A[X]\models\psi_I[\ov{a}_I]$ for some
(and hence, all) clusters $X$ of $\cX$ that $r$-cover $\ov{a}_I$. 
Furthermore, for every $a_1\in A$ we let 
$u^{\A,\cX}[a_1]$ be the number of tuples $(a_2,\ldots,a_k)\in A^{k-1}$ such
that for $\ov{a}\deff (a_1,a_2,\ldots,a_k)$ we have
$G_{\ov{a},r}=G$ 
and for all $I\in C$, $\A[X]\models\psi_I[\ov{a}_I]$ for some
(and hence, all) clusters $X$ of $\cX$ that $r$-cover $\ov{a}_I$. 
\end{definition}

\begin{lemma}\label{lem:normalform:terms-Version3}
Let $\sigma$ be a relational signature, let $r\geq 0$,
$k\geq 1$, $m\geq 0$, and consider cover-terms
\begin{eqnarray*}
     g & \deff & \Count{(y_1,\ldots,y_k)}{\big(\,
      \delta_{G,r}(\ov{y})\,\und\, \Und_{I\in C} \psi_I(\ov{y}_I)}
     \,\big)
\\
     u(y_1) & \deff & \Count{(y_2,\ldots,y_k)}{\big(\,
      \delta_{G,r}(\ov{y})\,\und\, \Und_{I\in C} \psi_I(\ov{y}_I)
     \,\big)}
\end{eqnarray*}
with parameters $(r,k,m)$.

There exists a ground cover-cl-term $\hat{g}$ and
a unary cover-cl-term $\hat{u}(y_1)$, both 
with parameters $(r,k,m)$, 
such that $ \hat{g}^{\A,\cX} = g^{\A,\cX}$ and
$ \hat{u}^{\A,\cX}[a_1] = u^{\A,\cX}[a_1]$,
for every $\sigma$-structure $\A$, every $m$-neighbourhood cover
  $\cX$ of $\A$, and every $a_1\in A$.

\NewText{For $\ell,q\in\NN$ with $\ell\leq q{-}k$, if $\psi_I(\ov{y}_I)\in\FOplus[\ell,q]$ for each $I\in C$, then 
also $\hat{g}$ and $\hat{u}(y_1)$ belong to $\FOplus[\ell,q]$.
}

Furthermore, there is an algorithm which upon input of $(r,k,m)$, $G$,
and $(\psi_I)_{I\in C}$ constructs $\hat{g}$ and $\hat{u}(y_1)$.
\end{lemma}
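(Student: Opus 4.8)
The plan is to follow the proof of Lemma~\ref{lem:normalform:terms} closely, but \emph{without} the Feferman--Vaught step: a cover-term already comes equipped with the per-component decomposition $\bigwedge_{I\in C}\psi_I(\bar y_I)$, so the only thing left to do is to replace the disjointness requirement encoded by $\delta_{G,r}$ by an inclusion--exclusion over graphs with fewer connected components. I will argue by induction on the number $c$ of connected components of $G$, proving the statement for all cover-terms (and their unary companions) whose graph has at most $c$ connected components. In the base case $c=1$ the graph $G$ is connected, $C=\set{[k]}$, and $g$ (resp.\ $u(y_1)$) \emph{is} already a basic cover-cl-term with parameters $(r,k,m)$ --- the cover-invariance condition demanded of its $\psi$ is exactly the one guaranteed for $\psi_{[k]}$ by the definition of a cover-term --- so I set $\hat g\deff g$ and $\hat u(y_1)\deff u(y_1)$.

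For the inductive step, let $G$ have connected components $I_1,\dots,I_{c+1}$ with $c{+}1\ge 2$, ordered so that $1\in I_1$. For each $j$, the object $g_{I_j}\deff\Count{\bar y_{I_j}}{\big(\delta_{G[I_j],r}(\bar y_{I_j})\und\psi_{I_j}(\bar y_{I_j})\big)}$ is, after the harmless renaming of the vertices of $I_j$ to $[\,|I_j|\,]$ (as in the proof of Lemma~\ref{lem:normalform:terms}), a basic cover-cl-term with parameters $(r,|I_j|,m)$; likewise $u_{I_1}(y_1)$, obtained from $g_{I_1}$ by binding only the variables $\bar y_{I_1}$ other than $y_1$, is a unary basic cover-cl-term. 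In both cases the cover-invariance required of $\psi_{I_j}$ is precisely what the definition of a cover-term provides. Let $\GraphclassH$ be the set of all $H\in\Graphclass_k$ with $H\ne G$ but $H[I_j]=G[I_j]$ for every $j$; each such $H$ contains $G$ and thus has strictly fewer connected components, namely at most $c$. For each connected component $J$ of such an $H$ the formula $\chi_J(\bar y_J)\deff\bigwedge_{I\in C,\ I\subseteq J}\psi_I(\bar y_I)$ is an $\FO^+[\sigma]$-formula of $q$-rank at most $\ell$ (conjunction raises neither the quantifier rank nor the scope-depth of the distance atoms) that satisfies the cover-invariance condition of a cover-term with graph $H$: if a cluster $r$-covers $\bar a_J$ then it $r$-covers each $\bar a_I$ with $I\subseteq J$, and $G^{\A}_{\bar a_I,r}=H[I]=G[I]$, so the invariance of each $\psi_I$ applies. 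Hence $t_H\deff\Count{\bar y}{\big(\delta_{H,r}(\bar y)\und\bigwedge_{J}\chi_J(\bar y_J)\big)}$ and its unary companion $u_H(y_1)$ are genuine cover-terms with parameters $(r,k,m)$ whose graph $H$ has at most $c$ components (and whose body equals $\delta_{H,r}(\bar y)\und\bigwedge_{I\in C}\psi_I(\bar y_I)$ up to reassociation of the conjunction).

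The crux is then the identity
\[
  \prod_{j=1}^{c+1}g_{I_j}^{\A,\cX}\ =\ g^{\A,\cX}\ +\ \sum_{H\in\GraphclassH}t_H^{\A,\cX},
\]
valid for every $\sigma$-structure $\A$ and every $m$-neighbourhood cover $\cX$ of $\A$, together with the analogous identity for the unary versions (fixing $a_1$, replacing $g_{I_1}$ by $u_{I_1}$, $g$ by $u$, and each $t_H$ by $u_H$). Its proof is the usual counting argument: the left-hand side counts the tuples $\bar a\in A^k$ (with the fixed first coordinate $a_1$ in the unary case) such that $G^{\A}_{\bar a_{I_j},r}=G[I_j]$ and $\psi_{I_j}$ holds on the clusters $r$-covering $\bar a_{I_j}$, for every $j$; grouping these tuples by their actual $r$-distance graph $H\deff G^{\A}_{\bar a,r}$, which necessarily satisfies $H[I_j]=G[I_j]$ and hence lies in $\set{G}\cup\GraphclassH$, and using cover-invariance to see that the evaluation of each $\psi_I$ is the same whether one uses a cluster $r$-covering $\bar a$, one $r$-covering the relevant component $\bar a_J$ of $H$, or one $r$-covering $\bar a_I$, produces exactly the displayed partition. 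By the induction hypothesis each $t_H$ and $u_H(y_1)$ has an equivalent ground, resp.\ unary, cover-cl-term $\hat t_H$, resp.\ $\hat u_H(y_1)$, with parameters $(r,k,m)$ and of $q$-rank at most $\ell$ whenever all $\psi_I$ are; setting $\hat g\deff\prod_j g_{I_j}-\sum_H\hat t_H$ and $\hat u(y_1)\deff u_{I_1}(y_1)\cdot\prod_{j\ge 2}g_{I_j}-\sum_H\hat u_H(y_1)$ completes the induction. The whole construction is effective (enumerate $\Graphclass_k$; recurse on the strictly smaller $\GraphclassH$), which yields the claimed algorithm.

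The main obstacle I anticipate is not the inclusion--exclusion itself --- that is essentially the argument already used for Lemma~\ref{lem:normalform:terms} --- but the careful handling of the ``for some (hence all) clusters that $r$-cover $\ldots$'' semantics: one must be scrupulous that restricting from a cluster $r$-covering $\bar a$ (or $\bar a_J$) to one $r$-covering a sub-component $\bar a_I$ does not change the truth of $\psi_I$, and that the coarser cover-terms $t_H$, whose component formulas $\chi_J$ range over components $J\supseteq I$, really do inherit cover-invariance from the $\psi_I$. Once this monotonicity of ``$r$-covering'' under passing to subtuples is pinned down --- it is immediate from $N_r^{\A}(\bar a_I)\subseteq N_r^{\A}(\bar a_J)$ --- the remaining steps are routine bookkeeping.
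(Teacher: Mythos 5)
Your proof is correct, and it follows the same inductive strategy as the paper's proof (induction on the number of connected components, with an inclusion--exclusion over distance graphs $H$ that refine $G$), but with one genuine organizational difference in the inductive step. The paper splits $G$ into the single connected component $V'$ containing vertex~$1$ and the remaining $V''=[k]\setminus V'$, forms the product $S_1^{\A,\cX}\times S_2^{\A,\cX}$ of the (one basic) term for $V'$ with the cover-term for $G[V'']$ (still $c$ components, handled by the induction hypothesis), and subtracts terms $g_H$ over the class of graphs $H$ that agree with $G$ on $V'$ and on $V''$ but add at least one cross-edge between $V'$ and $V''$. You instead factor out all $c{+}1$ components simultaneously: the left-hand side is the product $\prod_j g_{I_j}$ of \emph{basic} cover-cl-terms (so no recursion is needed on a ``rest'' factor), and your $\GraphclassH$ is the larger set of all $H\supsetneq G$ agreeing with $G$ on each $I_j$ --- cross-edges are allowed between \emph{any} pair of components, not just between $V'$ and its complement. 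Both variants are sound because each such $H$ still has at most $c$ components; the recursion then runs on the $t_H$ alone. Your formulation is arguably a little more symmetric and avoids the auxiliary cover-term $g_2$ of the paper (whose product decomposition the paper's proof delegates to the next level of induction anyway). The remaining verifications you perform --- that $\chi_J=\bigwedge_{I\subseteq J}\psi_I$ inherits both the $q$-rank bound and the cover-invariance condition, using the monotonicity of ``$r$-covering'' under passing to subtuples and the equalities $G^\A_{\bar a_I,r}=H[I]=G[I]$ --- are exactly the checks needed to see that the $t_H$ and $u_H(y_1)$ are legitimate cover-terms, and they are carried out correctly.
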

\begin{proof}
We proceed by induction on the number $c\deff |C|$ of connected
components of $G$.
Precisely, we show that the following statement $(*)_c$ is true for
every $c\in\NNpos$.
\begin{enumerate}[$(*)_c$:]
\item[$(*)_c$:]
  Let $k\geq c$. 
  Let $G\in\Graphclass_k$ consist of at most $c$ connected components,
  and let $C$ be the set consisting of all connected components $I$ of $G$.
  Let $\ov{y}=(y_1,\ldots,y_k)$ be a tuple of
  $k$ pairwise distinct variables.  Let $r,m\geq 0$.
  For every $I\in C$ let $\psi_I(\ov{y}_I)$ be an
  $\FO^+[\sigma]$-formula such that for all $\sigma$-structures $\A$,
  all $\ov{a}=(a_1,\ldots,a_k)\in A^k$ with
  $G^{\A}_{\ov{a}_I,r}=G[I]$, all $m$-neighbourhood covers $\cX$ of
  $\A$, and all clusters $X$ and $X'$ of $\cX$ that $r$-cover $\ov{a}_I$
  we have
  \[
    \A[X]\ \models\ \psi_I[\ov{a}_I]
    \quad \iff \quad
    \A[X']\ \models\ \psi_I[\ov{a}_I]\,.
  \]
  Then, for the cover-terms $g$ and $u(y_1)$ (as stated in the lemma)
  there are cover-cl-terms $\hat{g}$ and $\hat{u}(y_1)$ with
  parameters $(r,k,m)$ such that
  $g^{\A,\cX}=\hat{g}^{\A,\cX}$ and
  $u^{\A,\cX}[a_1]=\hat{u}^{\A,\cX}[a_1]$ holds for every
  $\sigma$-structure $\A$, every $m$-neighbourhood cover $\cX$ of $\A$
  and all $a_1\in A$.

  \NewText{Furthermore if, for $\ell,q\in\NN$ with $\ell\leq q{-}k$, $\psi_I(\ov{y}_I)\in\FOplus[\ell,q]$ for each $I\in C$, then 
  also $\hat{g}$ and $\hat{u}(y_1)$ belong to $\FOplus[\ell,q]$.
  }\end{enumerate}

\noindent
The induction base for $c=1$ is trivial, since $g$ and $u(y_1)$ are
basic cover-cl-terms.

For the induction step from $c$ to $c{+}1$, consider some $k\geq
c{+}1$ and a graph $G=(V,E)\in\Graphclass_k$ that has $c{+}1$ connected
components. 
Let $V'$ be the connected component of $G$ that contains the node $1$ and
let $V''\deff V\setminus V'$. 

Let $G'\deff\inducedSubStr{G}{V'}$ and $G''\deff\inducedSubStr{G}{V''}$
be the induced subgraphs of $G$ on $V'$ and $V''$, respectively.
Clearly, $G$ is the disjoint union of $G'$ and $G''$, $G'$ is
connected, $G''$ has $c$ connected components, and $C''\deff
C\setminus\set{V'}$ is the set of all connected components of $G''$.

To keep notation simple, we assume (without loss of generality)
that $V'=\set{1,\ldots,k'}$ and $V''=\set{k'{+}1,\ldots,k}$ for
some $k'$ with $1\leq k'<k$.
For any tuple $\ov{z}=(z_1,\ldots,z_k)$ we 
let $\ov{z}{}'\deff (z_1,\ldots,z_{k'})$ and $\ov{z}{}''\deff
(z_{k'+1},\ldots,z_{k})$.

Now consider numbers $r,m\geq 0$ and formulas $\psi_I(\ov{y}_I)$, for
each $I\in C$, as in $(*)_{c+1}$'s assumption.

For every $\sigma$-structure $\A$ and every $m$-neighbourhood cover
$\cX$ of $\A$ we let
\[
  S^{\A,\cX}
\]
be the set of all tuples $\ov{a}=(a_1,\ldots,a_k)\in A^k$ such that
$G^{\A}_{\ov{a},r}=G$ and where for each $I\in C$ we have
$\A[X]\models\psi_I[\ov{a}_I]$ for some (and hence, every) cluster $X$
of $\cX$ that $r$-covers $\ov{a}_I$.
Clearly, $g^{\A,\cX}=|S^{\A,\cX}|$ for the ground cover-term
\[
  g\quad\deff\quad 
  \Count{(y_1,\ldots,y_k)}{\big(\delta_{G,r}(\ov{y})\und \Und_{I\in
      C}\psi_I(\ov{y}_I)\big)}\,.
\]
Similarly, we let
\[
  S_1^{\A,\cX}
\]
be the set of all tuples $\ov{a}'=(a_1,\ldots,a_{k'})\in A^{k'}$ such
that
$G^{\A}_{\ov{a}',r}=G'$ and where $\A[X]\models\psi_{V'}[\ov{a}']$ for
some (and hence, every) cluster $X$ of $\cX$ that $r$-covers $\ov{a}'$.
Clearly, $g_1^{\A,\cX}=|S_1^{\A,\cX}|$ for the \emph{basic cover-cl-term}
\[
  \hat{g}_1\quad\deff\quad 
  \Count{(y_1,\ldots,y_{k'})}{\big(\delta_{G',r}(y_1,\ldots,y_{k'})\und \psi_{V'}(y_1,\ldots,y_{k'})\big)}\,.
\]
\NewText{Note that if, for $\ell,q\in\NN$ with $\ell\leq q{-}k$, $\psi_I(\ov{y}_I)\in\FOplus[\ell,q]$ for each $I\in C$, then 
  also $\hat{g}_1$ belongs to $\FOplus[\ell,q]$.
}

Furthermore, we let
\[
  S_2^{\A,\cX}
\]
be the set of all tuples $\ov{a}''=(a_{k'+1},\ldots,a_k)\in A^{k-k'}$ such that
$G^{\A}_{\ov{a}'',r}=G''$ and where for each $I\in C''$ we have
$\A[X]\models\psi_I[\ov{a}''_I]$ for some (and hence, every) cluster $X$
of $\cX$ that $r$-covers $\ov{a}''_I$.
Clearly, $g_2^{\A,\cX}=|S_2^{\A,\cX}|$ for the ground cover-term
\[
  g_2\quad\deff\quad 
  \Count{(y_{k'+1},\ldots,y_k)}{\big(\delta_{G'',r}(y_{k'+1},\ldots,y_k)\und \Und_{I\in
      C''}\psi_I(\ov{y}_I)\big)}\,.
\]
By the induction hypothesis we know that $(*)_c$ holds. Hence, there
is a \emph{cover-cl-term} $\hat{g}_2$ such that
$\hat{g}_2^{\A,\cX}=g_2^{\A,\cX}=|S_2^{\A,\cX}|$ is true for all
$\sigma$-structures $\A$ and all $m$-neighbourhood covers $\cX$ of
$\A$.
\NewText{Furthermore if, for $\ell,q\in\NN$ with $\ell\leq q{-}k$, $\psi_I(\ov{y}_I)\in\FOplus[\ell,q]$ for each $I\in C$, then 
  also $\hat{g}_2$ belongs to $\FOplus[\ell,q]$.
}

Note that for every $\sigma$-structure $\A$ and every
$m$-neighbourhood cover $\cX$ of $\A$ we have
\[
 S^{\A,\cX}
 \quad = \quad
 \big( S_1^{\A,\cX} \times S_2^{\A,\cX} \big) \
 \setminus\ T^{\A,\cX}\,,
\]
where
\[
  T^{\A,\cX}
\]
is the set of all tuples $\ov{a}=(a_1,\ldots,a_k)\in A^k$ such that 
$\ov{a}'\in S_1^{\A,\cX}$, $\ov{a}''\in S_2^{\A,\cX}$, and there are
$j'\in\set{1,\ldots,k'}$ and $j''\in\set{k'{+}1,\ldots,k}$ such that
$\dist^{\A}(a_{j'},a_{j''})\leq r$.

Let $\GraphclassH$ be the set of all graphs
$H\in\Graphclass_k$ with $H\neq G$, but
$\inducedSubStr{H}{V'}=G'$ and $\inducedSubStr{H}{V''}=G''$.
Clearly, every $H\in\GraphclassH$
has at most $c$ connected components. Furthermore, 
it is straightforward to see that 
for every $\sigma$-structure $\A$ and every $m$-neighbourhood cover
$\cX$ of $\A$, the set $T^{\A,\cX}$ is the disjoint
union of the sets
\[
  T^{\A,\cX}_{H}
  \quad\deff\quad
  \bigsetc{\ 
   \ov{a}\in A^k \ \;}{\ \;
   \ov{a}'\in S_1^{\A,\cX} \ \text{ and } \
   \ov{a}''\in S_2^{\A,\cX} \ \text{ and } \
   G^{\A}_{\ov{a},r}=H
  \ }
\]
for all $H\in\GraphclassH$.

Clearly,
\[
 g^{\A,\cX}
 \quad = \quad
 |S^{\A,\cX}|
  \quad = \quad
  |S_1^{\A,\cX}|\cdot |S_2^{\A,\cX}| \ - \ \sum_{H\in
    \GraphclassH} |T^{\A,\cX}_H|\,;
\]
and this holds for every $\sigma$-structure $\A$ and every
$m$-neighbourhood cover $\cX$ of $\A$. 

To finish the proof of the lemma's statement concerning $g$, it therefore suffices to construct for
each $H\in\GraphclassH$ a cover-cl-term $\hat{g}_H$ such that
$\hat{g}_H^{\A,\cX} = |T^{\A,\cX}_H|$ for every $\sigma$-structure
$\A$ and every $m$-neighbourhood cover $\cX$ of $\A$ --- afterwards,
we are done by choosing
\[
  \hat{g} 
  \quad\deff\quad
  \hat{g}_1 \cdot \hat{g}_2 \ - \ \sum_{H\in\GraphclassH} \hat{g}_H\,.
\]
\NewText{We will ensure that the following is true for $\hat{g}_H$, for each
$H\in\GraphclassH$:
if, for $\ell,q\in\NN$ with $\ell\leq q{-}k$, $\psi_I(\ov{y}_I)\in\FOplus[\ell,q]$ for each $I\in C$, then 
also $\hat{g}_H$ belongs to $\FOplus[\ell,q]$.
Since we already know that $\hat{g}_1$ and $\hat{g}_2$ belong to $\FOplus[\ell,q]$, we
will then obtain that also $\hat{g}$ belongs to $\FOplus[\ell,q]$.
}

Let us consider a fixed $H\in \GraphclassH$. Note that every connected
component of $H$ is a union of one or more connected components of $G$.
Let $I_1,\ldots,I_s$ be the connected components of $H$ (for
$s\leq c$). For each $j\in[s]$ let $C_j$ be the subset of $C$ such
that $I_j=\bigcup_{I\in C_j}I$. W.l.o.g.\ let $V'\in C_1$.

For each $j\in[s]$ let 
\[
  \psi^H_{I_j}(\ov{y}_{I_j})
  \quad\deff\quad
  \Und_{I\in C_j} \psi_I(\ov{y}_I)\,.
\]
It is not difficult to verify that for all $j\in[s]$, 
all $\sigma$-structures $\A$, 
all $\ov{a}=(a_1,\ldots,a_k)\in A^k$ with
$G^{\A}_{\ov{a}_{I_j},r} = H[I_j]$, all $m$-neighbourhood covers $\cX$
of $\A$, and all clusters $X$ and $X'$ of $\cX$ that $r$-cover
$\ov{a}_{I_j}$ we have
\[
  \A[X] \ \models \ \psi^H_{I_j}[\ov{a}_{I_j}]
  \quad\iff \quad
  \A[X'] \ \models \ \psi^H_{I_j}[\ov{a}_{I_j}]\,.
\]
Hence, we can build the cover-term
\[
  g_H 
  \quad \deff \quad
  \Count{(y_1,\ldots,y_k)}{\big(\,
    \delta_{H,r}(\ov{y})
    \,\und\, \Und_{j\in [s]}\psi^H_{I_j}(\ov{y}_{I_j}) 
   \,\big)}
\]
and obtain by the induction hypothesis $(*)_c$ a ground cover-cl-term
$\hat{g}_H$ such that $\hat{g}_H^{\A,\cX}=g_H^{\A,\cX}$ for all
$\sigma$-structures $\A$ and all $m$-neighbourhood covers $\cX$ of $\A$.

\NewText{Note that if for $\ell,q\in\NN$ with $\ell\leq q{-}k$, $\psi_I(\ov{y}_I)\in\FOplus[\ell,q]$ for each $I\in C$, then 
also $\psi^H_{I_j}(\ov{y}_{I_j})\in\FOplus[\ell,q]$ for each $j\in[s]$
(here we use that
$|\free(\psi^H_{I_j}(\ov{y}_{I_j}))|\leq |I_j|\leq k\leq q{-}\ell$, cf.\ Remark~\ref{remark:FOpluspq}). Consequently, also
$\hat{g}_H$ belongs to $\FOplus[\ell,q]$.
}

To finish the proof of the lemma's statement concerning $g$ it remains to show that 
$g_H^{\A,\cX} = |T_H^{\A,\cX}|$.

By definition, $g_H^{\A,\cX}=|U_H^{\A,\cX}|$, for the set
\[
  U_H^{\A,\cX}
\]
of all tuples $\ov{a}=(a_1,\ldots,a_k)\in A^k$ such that
$G^{\A}_{\ov{a},r}=H$ and for all $j\in[s]$,
$\A[X]\models\psi^H_{I_j}[\ov{a}_{I_j}]$ for some (and hence, all)
clusters $X$ of $\cX$ that $r$-cover $\ov{a}_{I_j}$.
It is straightforward to verify that $U_H^{\A,\cX}=T_H^{\A,\cX}$.
This completes the proof of the lemma's statement concerning $g$.

The proof of the lemma's statement concerning $u(y_1)$ follows by an analogous reasoning.
\end{proof}
 
By combining this lemma with
Corollary~\ref{cor:locality2} we obtain the following lemma.

\begin{lemma}[Localisation Lemma]\label{lem:LocalityAndTermsNormalform}
\NewText{Let $q,k,\ell\in\NN$ such that $1\leq k\le q$ and $\ell\leq q{-}k$, and let $r:=f_q(\ell)$.
Let $\phi(\bar x)$, where $\bar x=(x_1,\ldots,x_k)$ for $k$ pairwise distinct
variables $x_1,\ldots,x_k$, be an
$\FO^+[\sigma]$-formula in $\FOplus[\ell,q]$.
}Consider the terms 
\begin{eqnarray*}
  g&\deff&
  \Count{(x_1,\ldots,x_k)}{\phi(x_1,\ldots,x_k)}
\\
 u(x_1)&\deff&
 \Count{(x_2,\ldots,x_k)}{\phi(x_1,\ldots,x_k)}\,.
\end{eqnarray*}
There exists an $s\geq 0$ and
\NewText{basic local $(\ell{-}1,q,\rHalbe,q)$-sentences $\chi_1,\ldots,\chi_s$ of signature $\sigma$
}such that for every
$J\subseteq[s]$ there are a 
ground cover-cl-term $\hat{g}_J$ and a unary cover-cl-term
$\hat{u}_J(x_1)$, both with parameters $(r,k,kr)$,
\NewText{of signature $\sigma$, and belonging to $\FOplus[\ell,q]$,
}such that for every $\sigma$-structure $\A$ and
every $kr$-neighbourhood cover $\cX$ of $\A$
there is exactly one $J\subseteq[s]$ 
with
\[
\A\quad \models \quad 
\chi_J \quad\deff \quad
\Und_{j\in J}\chi_j \ \und \ \Und_{j\in [s]\setminus J} \nicht \chi_j,
\]
and for this $J$ we have
$\hat{g}_J^{\A,\cX} = g^\A$ and
$\hat{u}_J^{\A,\cX}[a] = u^\A[a]$ for every $a\in A$.

Furthermore, there is an algorithm which computes
$\chi_1,\ldots,\chi_s$ and
$\big(\hat{g}_J,\,\hat{u}_J(x_1)\big)_{J\subseteq[s]}$ upon input of
\NewText{$q,k,\ell,\phi(\ov{x})$.
}\end{lemma}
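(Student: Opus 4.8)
The plan is to obtain the lemma by repackaging the Rank-Preserving Normal Form (Theorem~\ref{thm:locality2}) together with Lemma~\ref{lem:normalform:terms-Version3}; no genuinely new argument is required. First I would apply Theorem~\ref{thm:locality2} to $\phi(\bar x)$: for every graph $G\in\cG_k$ this yields a number $m_G$ and, for each $i\in[m_G]$, a Boolean combination $\xi^i_G$ of $(r,q)$-independence sentences of signature $\sigma^\star$ together with, for each connected component $I$ of $G$, an $\FO^+[\sigma^\star]$-formula $\psi^i_{G,I}(\bar x_I)$ of $q$-rank at most $\ell$, satisfying (1)--(4) of that theorem. Since $\cG_k$ is finite, I would let $\chi_1,\ldots,\chi_s$ be a list of \emph{all} $(r,q)$-independence sentences of signature $\sigma^\star$ occurring in any of the finitely many $\xi^i_G$; then each $\xi^i_G$ is a Boolean combination of $\chi_1,\ldots,\chi_s$, and every $\sigma^\star$-structure satisfies exactly one $\chi_J$, $J\subseteq[s]$, which already gives the ``exactly one $J$'' clause. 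For $J\subseteq[s]$ let $\mathrm{Good}(J)$ be the set of pairs $(G,i)$, $G\in\cG_k$, $i\in[m_G]$, for which $\xi^i_G$ evaluates to true under the truth assignment putting $\chi_j$ true iff $j\in J$; for $J$ with $\chi_J$ unsatisfiable set $\hat g_J\deff 0$ and $\hat u_J(x_1)\deff 0$.

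Next, for $J\subseteq[s]$ and $(G,i)\in\mathrm{Good}(J)$, with $C$ the set of connected components of $G$, I would form the cover-term
\[
  g_{G,i}\ \deff\ \Count{(x_1,\ldots,x_k)}{\big(\delta_{G,r}(\bar x)\und\Und_{I\in C}\psi^i_{G,I}(\bar x_I)\big)}
\]
of signature $\sigma^\star$ with parameters $(r,k,kr)$, and the analogous unary cover-term $u_{G,i}(x_1)$. The one point needing verification is that these are \emph{legitimate} cover-terms, i.e.\ that each $\psi^i_{G,I}$ satisfies the cluster-invariance condition~\eqref{eq:normalform:terms-Version3} for $kr$-neighbourhood covers; this is exactly Theorem~\ref{thm:locality2}(3), using in addition that $\sigma^\star$ adjoins only unary relation symbols, so $\A^\star$ and $\A$ have the same Gaifman graph and hence the same distances, neighbourhoods and $kr$-neighbourhood covers, and that for a tuple $\bar a$ with $G_{\bar a_I,r}=G[I]$ there always \emph{is} a cluster $r$-covering $\bar a_I$ (namely $\cX(a_p)$ for any $p\in I$, since $G[I]$ is connected on at most $k$ vertices and $\cX$ is a $kr$-cover). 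Applying Lemma~\ref{lem:normalform:terms-Version3} to $g_{G,i}$ yields a ground cover-cl-term $\hat g_{G,i}$ and a unary cover-cl-term $\hat u_{G,i}(x_1)$, both with parameters $(r,k,kr)$ and of $q$-rank at most $\ell$ (that lemma preserves the $q$-rank bound, and the conjunctions formed in its proof do not increase $q$-rank). I then set $\hat g_J\deff\sum_{(G,i)\in\mathrm{Good}(J)}\hat g_{G,i}$ and $\hat u_J(x_1)\deff\sum_{(G,i)\in\mathrm{Good}(J)}\hat u_{G,i}(x_1)$, which are again cover-cl-terms with the required parameters and $q$-rank.

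For correctness, fix a $\sigma$-structure $\A$, a $kr$-neighbourhood cover $\cX$, put $\A^\star\deff\A\star_\cX^\ell(q,\ell)$, and let $J$ be the unique set with $\A^\star\models\chi_J$. Grouping the tuples $\bar a\in A^k$ by $G\deff G_{\bar a,r}\in\cG_k$ and invoking Theorem~\ref{thm:locality2}(1), I would rewrite $g^\A=|\{\bar a\in A^k:\A\models\phi[\bar a]\}|$ as the sum over $G\in\cG_k$ and $i\in[m_G]$ of the number of $\bar a$ with $G_{\bar a,r}=G$, $\A^\star\models\xi^i_G$, and $\A^\star[X]\models\psi^i_{G,I}[\bar a_I]$ for every component $I$ of $G$ and some (equivalently, by part~(3), every) $X\in\cX$ that $r$-covers $\bar a_I$; the summands for distinct $i$ are disjoint by part~(2), and $\A^\star\models\xi^i_G$ holds precisely when $(G,i)\in\mathrm{Good}(J)$. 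By the definition of the cover-term semantics this sum equals $\sum_{(G,i)\in\mathrm{Good}(J)}g_{G,i}^{\A^\star,\cX}=\sum_{(G,i)\in\mathrm{Good}(J)}\hat g_{G,i}^{\A^\star,\cX}=\hat g_J^{\A^\star,\cX}$, so $\hat g_J^{\A^\star,\cX}=g^\A$, and the same computation with the unary versions gives $\hat u_J^{\A^\star,\cX}[a]=u^\A[a]$ for all $a\in A$. Effectivity is immediate, since $\cG_k$ is finite and computable, the $\xi^i_G$ and $\psi^i_{G,I}$ are computable by Theorem~\ref{thm:locality2}(4), extracting $\chi_1,\ldots,\chi_s$ and computing $\mathrm{Good}(J)$ are finite propositional tasks, and Lemma~\ref{lem:normalform:terms-Version3} is effective.

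The conceptual content is entirely inside Theorem~\ref{thm:locality2}; the only delicate step in the present lemma is checking that the $g_{G,i}$ are well-formed cover-terms, i.e.\ the cluster-invariance~\eqref{eq:normalform:terms-Version3}, which is supplied by Theorem~\ref{thm:locality2}(3) once one notes that passing from $\A$ to $\A^\star=\A\star_\cX^\ell(q,\ell)$ leaves the Gaifman graph, and hence all distances, neighbourhoods and neighbourhood covers, unchanged. The rest is bookkeeping: organising the output of the normal-form theorem by the independence type $J$ and by the connectivity pattern $G$, and pushing it through Lemma~\ref{lem:normalform:terms-Version3} to turn the cover-terms (which may involve disconnected $G$) into cover-cl-terms.
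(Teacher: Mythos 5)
Your proposal is correct and follows essentially the same route as the paper's proof: apply Theorem~\ref{thm:locality2}, collect the $(r,q)$-independence sentences into $\chi_1,\ldots,\chi_s$, for each $J$ form the set of good $(G,i)$ pairs, package the $\psi^i_{G,I}$ into cover-terms, run them through Lemma~\ref{lem:normalform:terms-Version3}, and sum. The added remarks (that $\sigma^\star$ adds only unary symbols so the Gaifman graph and hence the neighbourhood covers are unchanged, and that connectedness of $G[I]$ ensures an $r$-covering cluster always exists) are the right justifications for the claim that the $g_{G,i}$ are legitimate cover-terms, which the paper states without elaboration.
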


\begin{proof}
We apply Corollary~\ref{cor:locality2} to the formula $\phi(\ov{x})$ and
let $\chi_1,\ldots,\chi_s$ be the list of all
\NewText{basic local $(\ell{-}1,q,\rHalbe,q)$-sentences of signature $\sigma$
}that occur in one of the
$\xi_G^{i}$ for some $G\in\Graphclass_k$ and $i\in[m_G]$.
For every $J\subseteq[s]$ let $S(J)$ be the set of all
$(G,i)$ with
$G\in\Graphclass_k$ and $i\in[m_G]$ for which the propositional
formula obtained from $\xi_G^{i}$ by replacing every occurrence of
$\chi_j$ by $\true$
if $j\in J$ and by $\false$ if $j\not\in J$,
evaluates to $\true$.

For every $G\in\Graphclass_k$ we write $C(G)$ for the set of all
connected components of $G$. 
For every $(G,i)$ with $G\in \Graphclass_k$ and $i\in[m_G]$, consider the
objects 
\begin{align*}
 g_{(G,i)} \ \ \deff \quad &
 \Count{(x_1,\ldots,x_k)}{\big(\,
   \delta_{G,r}(\ov{x})\,\und\, \Und_{I\in C(G)}\psi_{G,I}^{i}(\ov{x}_I)
  \,\big)}
  \shortintertext{and}
  u_{(G,i)}(x_1) \ \ \deff\quad&
 \Count{(x_2,\ldots,x_k)}{\big(\,
   \delta_{G,r}(\ov{x})\,\und\, \Und_{I\in C(G)}\psi_{G,I}^{i}(\ov{x}_I)
  \,\big)}\,.
\end{align*}
From the statement of
Corollary~\ref{cor:locality2} we know that these objects are
cover-terms of signature $\sigma$
\NewText{and with parameters $(r,k,kr)$};
and using Lemma~\ref{lem:normalform:terms-Version3}, we can translate these into
cover-cl-terms $\hat{g}_{(G,i)}$ and $\hat{u}_{(G,i)}(x_1)$ with
parameters $(r,k,kr)$ and
\NewText{belonging to $\FOplus[\ell,q]$.
}

For every $\sigma$-structure $\A$ and every
$kr$-neighbourhood cover $\cX$ of $\A$, there is a unique set
$J\subseteq [s]$ such that
$\A\models\chi_J$.
From the statement of
Corollary~\ref{cor:locality2} we obtain for
\[
  \hat{g}_J
  \quad\deff\quad
  \sum_{(G,i)\in S(J)} \hat{g}_{(G,i)}
\]
and
\[
  \hat{u}_J(x_1)
  \quad\deff\quad
  \sum_{(G,i)\in S(J)}\hat{u}_{(G,i)}(x_1)
\]
that $\hat{g}_J^{\A,\cX}=g^\A$ and
$\hat{u}_J^{\A,\cX}[a]=u^\A[a]$ for every $a\in A$.
\end{proof}

\subsection{The Removal Lemma}\label{subsec:RemovalLemma}

Recall that by $\bar z_I$ we denote the projection of a tuple $\bar z=(z_1,\ldots,z_k)$
to the coordinates in $I\subseteq[k]$. We extend the notation by
letting $\bar z_{\setminus I}:=\bar z_{[k]\setminus I}$.

Let $\sigma$ be a signature and let $r\in\NN$.
\NewText{For every relation symbol $R\in\sigma$
we let $\tilde{R}_\emptyset\deff R$, and for $k\deff \ar(R)$
and for every
set $I\subseteq[k]$
with $I\neq\emptyset$
we introduce
a fresh $(k{-}|I|)$-ary relation symbol $\tilde R_I$.
We 
let $\tilde\sigma$ be
union of $\sigma$ and 
the set of all these new
relation symbols.
}We let $\tilde\sigma_r$ be the extension of
$\tilde\sigma$ by fresh unary relation symbols $S_i$ for all $i\in[r]$.
For every $\sigma$-structure
$\cA$ of order $|A|\ge 2$ and every $d\in A$, we let $\cA\lbag d$ be the
$\tilde\sigma$-structure with universe $A\setminus\{d\}$ and relations
\[
\tilde R_I^{\cA\lbag d}\deff
\big\{\,
  \ov{a}_{\setminus I} \ : \
\ov{a}\in R^{\cA}\text{ and }I=\setc{i\in[k]}{a_i=d}
\,\big\}
\]
for every
\NewText{$R\in\sigma$ and every $I\subseteq[\ar(R)]$.
}Furthermore, we let $\cA\lbag_r d$ be the
$\tilde\sigma_r$-expansion of $\cA\lbag d$ in which each $S_i$ is
interpreted by the set of all $b\in A\setminus\{d\}$ such that
$\dist^{\cA}(d,b)\le i$.
 Note that (for fixed $\sigma$ and $r$), we can compute
$\cA\lbag_r d$ from $\cA$ and $d$ in linear time.

\begin{lemma}[Removal Lemma for Formulas]\label{lem:fo-removal}
\NewText{Let $q,k,\ell\in\NN$ with $\ell\leq q{-}k$, and let
}$r:=f_q(\ell)$.  
Then for every $\FO^+[\sigma]$-formula $\phi(\bar x)$
\NewText{in $\FOplus[\ell,q]$,
}where $\bar
  x=(x_1,\ldots,x_k)$, and for every set $I\subseteq[k]$ there is an
  $\FO^+[\tilde{\sigma}_r]$-formula $\tilde \phi_I(\bar x_{\setminus
    I})$
\NewText{in $\FOplus[\ell,q]$
}such that for all
  $\sigma$-structures $\cA$ of order $|A|\geq 2$, all $d\in A$, and all $\bar
    a=(a_1,\ldots,a_k)\in A^k$ such that 
    $I=\setc{i\in[k]}{a_i=d}$, we
    have
    \begin{equation}\label{eq:RemovalLemma}
    \cA\ \models\ \phi[\bar a]
    \quad\iff\quad
    \cA\lbag_r d\ \models\ \tilde\phi_I[\bar a_{\setminus I}].
    \end{equation}
    Furthermore, there is an algorithm that computes $\tilde\phi_I$
    from
    \NewText{$\phi(\bar x)$ and $I$.}\end{lemma}

\begin{NewTextBlock}
  \begin{proof}
Note that for all distance atoms $\dist(y,z)\leq d'$ that occur in formulas
in $\FOplus[\ell,q]$ it holds that $d'\leq r$.

We proceed by induction on the construction of $\phi$ and show that if
$\phi\in\FOplus[\ell',q]$ for some $\ell'\leq \ell$, then also $\tilde\phi_I\in\FOplus[\ell',q]$.

For the base step, we consider atomic $\FO[\sigma]$-formulas and distance atoms. 
 
If $\phi(x_1,\ldots,x_k)$ is of the form $R(x_{j_1},\ldots,x_{j_s})$
with $s=\ar(R)$ and $j_1,\ldots,j_s\in[k]$, then we let 
  $\tilde\phi_I(\bar x_{\setminus I})\coloneqq \tilde R_J(\bar y)$, where 
$J=\{\nu\in[s] : j_\nu\in I\}$ and
$\bar y$ is the tuple obtained from $(x_{j_1},\ldots,x_{j_s})$ by deleting the entries $x_{j_\nu}$ for all $\nu\in J$.

If $\phi(x_1,\ldots,x_k)$ is of the form $x_{j_1}{=}x_{j_2}$ with $j_1,j_2\in[k]$, then we proceed as follows.
If $j_1,j_2\in I$ then $\tilde{\phi}_I(\bar x_{\setminus I})\coloneqq
\True$\footnote{\NewText{Henceforth, we assume w.l.o.g.\ that the signature contains a
  relation symbol $R_0$ of arity 0, and we let $\true\deff
  R_0()\oder\nicht R_0()$ and $\false\deff\nicht\,\true$. Clearly, $\true$
is valid, $\false$ is unsatisfiable, and both formulas are sentences
that belong to $\FOplus[\ell,q]$ for all $\ell,q$ with $\ell\leq q$.}} (note that in this case $\bar x_{\setminus I}$ does not contain any of the variables $x_{j_1},x_{j_2}$).
If $j_1,j_2\not\in I$ then 
$\tilde{\phi}_I(\bar x_{\setminus I})\coloneqq x_{j_1}{=}x_{j_2}$.
Finally, if $j_1\in I\iff j_2\not\in I$ then $\tilde{\phi}_I(\bar x_{\setminus I})\coloneqq \False$. 

If $\phi(x_1,\ldots,x_k)$ is of the form $\dist(x_{j_1},x_{j_2})\le d'$ with $j_1,j_2\in[k]$, then we proceed as follows.
If $j_1,j_2\in I$ we let $\tilde\phi(\bar x_{\setminus I})\coloneqq\True$.  If $j_1\in I, j_2\not\in I$, we let 
$\tilde\phi_I(\bar x_{\setminus I})\coloneqq S_{d'}(x_{j_2})$;
and analogously if $j_2\in I, j_1\not\in I$, we let
$\tilde\phi_I(\bar x_{\setminus I})\coloneqq S_{d'}(x_{j_1})$. 
Finally, if $j_1,j_2\not\in I$, we let
\[
  \tilde\phi_I(\bar x_{\setminus I}) \ \ \coloneqq\ \ \dist(x_{j_1},x_{j_2})\le {d'}\ \vee\bigvee_{\substack{1\le d_1,d_2\le d'-1,\\d_1+d_2=d'}}\big(S_{d_1}(x_{j_1})\wedge S_{d_2}(x_{j_2})\big).
\]

This completes the construction of $\tilde\phi_I$ for the induction
base. Note that in each case, equation \eqref{eq:RemovalLemma} holds;
and if $\phi(\bar x)\in\FOplus[\ell',q]$ for some $\ell'\leq \ell$, then also $\tilde\phi_I\in\FOplus[\ell',q]$.
\medskip

For the induction step, Boolean combinations are handled in the
obvious way: if $\phi=\nicht\psi$ then $\tilde\phi_I\deff
\nicht\,\tilde\psi_I$, and if $\phi=(\psi\oder\chi)$ then
$\tilde\phi_I\deff (\tilde{\psi}_I\oder\tilde{\chi}_I)$.
Obviously, equation \eqref{eq:RemovalLemma} holds;
and if $\phi(\bar x)\in\FOplus[\ell',q]$ for some $\ell'\leq\ell$, then also $\tilde\phi_I\in\FOplus[\ell',q]$.

Now, all that remains to be done to finish the induction step is to
consider formulas $\phi$ that start with an existential quantifier. We
distinguish between two cases.

\emph{Case 1:} 
$\phi(\bar x)$ is of the form $\exists
x_{k+1}\,\big(\dist(x_i,x_{k+1})\le d'\,\und\,\psi(\bar x')\big)$, where
$i\in[k]$ and $\bar x'=(x_1,\ldots,x_k,x_{k+1})$.
If $i\in I$, we let
   \[ 
    \tilde\phi_I(\bar x_{\setminus I})\ \ \coloneqq \ \ 
    \tilde\psi_{I\cup\{k+1\}}(\bar x'_{\setminus (I\cup\{k+1\})})
    \ \vee \ 
   \exists x_{k+1}\big(S_{d'}(x_{k+1})\wedge\tilde\psi_{I}(\bar x'_{\setminus I})\big).
 \]
 If $i\not\in I$, we let
    \begin{align*}
    \tilde\phi_I(\bar x_{\setminus I})\ \ \coloneqq \ \ &
    \hphantom{\vee \ \ }\big(\, S_d(x_i)\ \wedge \ \tilde\psi_{I\cup\{k+1\}}(\bar x'_{\setminus (I\cup\{k+1\})})\,\big)\\
   & \vee\ \exists x_{k+1}\,\big(\dist(x_i,x_{k+1})\le d\ \wedge \ \tilde\psi_{I}(\bar x'_{\setminus I})\big)\\
    &\vee\bigvee_{\substack{1\le d_1,d_2\le
      d-1\\d_1+d_2=d}}\Big(S_{d_1}(x_i)\ \wedge\ \exists
      x_{k+1}\,\big(S_{d_2}(x_{k+1})\ \wedge\ \tilde\psi_{I}(\bar
      x'_{\setminus I})\big)\Big). 
  \end{align*}
In both cases it can easily be verified that equation
\eqref{eq:RemovalLemma} holds; and if $\phi(\bar x)\in\FOplus[\ell',q]$
for some $\ell'\leq \ell$, then also $\tilde\phi_I\in
\FOplus[\ell',q]$.

\emph{Case 2:}
Case~1 does not apply, and $\phi(\bar x)$ is of the form $\exists
x_{k+1}\,\psi(\bar x')$, with $\bar x'=(x_1,\ldots,x_k,x_{k+1})$.
We let
\[
    \tilde\phi_I(\bar x_{\setminus I})\ \ \coloneqq \ \ 
    \tilde\psi_{I\cup\{k+1\}}(\bar x'_{\setminus(I\cup\{k+1\})})\ \vee \ \exists x_{k+1} \, \tilde\psi_{I}(\bar x'_{\setminus I}).
\]
It is easy to see that equation \eqref{eq:RemovalLemma} holds;  and if
$\phi(\bar x)\in\FOplus[\ell',q]$ for some $\ell'\leq\ell$,
then also $\tilde \phi_I\in\FOplus[\ell',q]$.
This completes the proof of Lemma~\ref{lem:fo-removal}.
\end{proof}
\end{NewTextBlock}

A \emph{basic term} is a term $t(\bar x)$ of the form
$\Count{\bar y}{\phi(\bar x,\bar y)}$ for an $\FO^+$-formula
$\phi(\bar x,\bar y)$.
\NewText{The \emph{width} of $t(\bar x)$ is
$|\bar x|+|\bar y|$.
We say that $t(\bar x)$ \emph{belongs to $\FOplus[\ell,q]$} iff
$\phi(\bar x,\bar y)\in\FOplus[\ell,q]$.
}Usually, we are only interested in ground basic 
terms, where $|\bar x|=0$ and unary basic terms, where
$|\bar x|=1$.

\begin{lemma}[Removal Lemma for Terms]\label{lem:removal}
  Let $\sigma$ be a signature.
\NewText{Let $q,k,\ell\in\NN$ with $\ell\leq q{-}k$, and let
}$r:=f_q(\ell)$.
\begin{enumerate}[(a)]
\item For every ground basic term $g$ of signature $\sigma$, width
  $k$, and
\NewText{belonging to $\FOplus[\ell,q]$,
}there is a list $\hat g_1,\ldots,\hat g_m$ of ground basic
  terms of signature $\tilde\sigma_r$, width at most $k$, and
\NewText{belonging to $\FOplus[\ell,q]$
}such that for all
  $\sigma$-structures $\cA$ of order $|A|\geq 2$ and all $d\in A$,
  \[
  g^{\cA} \quad = \quad \sum_{i=1}^m\hat g_i^{\cA\lbag_r d}
  \]
  Furthermore, there is an algorithm that, given $g$, computes $\hat g_1,\ldots,\hat g_m$.
\item For every unary basic term $u(x)$ of signature $\sigma$, width
  $k$, and
\NewText{belonging to $\FOplus[\ell,q]$,
}there are a list $\hat g_1,\ldots,\hat g_m$ of ground basic
  terms and a list $\hat u_1(x),\ldots,\hat u_n(x)$ of unary basic
  terms, all of signature $\tilde\sigma_r$, width at most $k$, and
\NewText{belonging to $\FOplus[\ell,q]$,
}such that for all
  $\sigma$-structures $\cA$ of order $|A|\geq 2$ and all $a,d\in A$,
  \[
  u^{\cA}[a]\quad =\quad
  \begin{cases}
    \ \ \sum_{i=1}^m\hat g_i^{\cA\lbag_r d}&\text{if }a=d,\\[1ex]
    \ \ \sum_{i=1}^n\hat u_i^{\cA\lbag_r d}[a]&\text{if }a\neq d.
  \end{cases}
  \]
  Furthermore, there is an algorithm that, given $u(x)$, computes
  $\hat g_1,\ldots,\hat g_m, \hat u_1(x),\ldots,\hat u_n(x)$.
\end{enumerate}
\end{lemma}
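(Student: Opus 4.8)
The plan is to deduce both parts of the Removal Lemma for Terms directly from the Removal Lemma for Formulas (Lemma~\ref{lem:fo-removal}), by splitting each count according to which coordinates of the counted tuple take the value $d$. The whole argument is essentially bookkeeping on top of that lemma.

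For part (a), write $g=\Count{(y_1,\ldots,y_k)}{\phi(y_1,\ldots,y_k)}$ and fix a $\sigma$-structure $\cA$ with $|A|\ge 2$ and an element $d\in A$. First I would partition the set $S\deff\setc{\bar a\in A^k}{\cA\models\phi[\bar a]}$ into the pieces $S_I\deff\setc{\bar a\in S}{\setc{i\in[k]}{a_i=d}=I}$ for all $I\subseteq[k]$, so that $g^\cA=\sum_{I\subseteq[k]}|S_I|$. Applying Lemma~\ref{lem:fo-removal} to $\phi$ and each $I$ gives an $\FO^+[\tilde\sigma_r]$-formula $\tilde\phi_I(\bar x_{\setminus I})$ of $q$-rank at most $\ell$ with $\cA\models\phi[\bar a]\iff\cA\lbag_r d\models\tilde\phi_I[\bar a_{\setminus I}]$ whenever $\setc{i}{a_i=d}=I$. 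I would then check that the projection $\bar a\mapsto\bar a_{\setminus I}$ is a bijection from $S_I$ onto the set of tuples $\bar b$ over the universe $A\setminus\{d\}$ of $\cA\lbag_r d$ with $\cA\lbag_r d\models\tilde\phi_I[\bar b]$: it is injective by construction, and surjective because one reinserts $d$ into exactly the coordinates indexed by $I$ and the coordinates of $\bar b$ are automatically different from $d$. Hence $|S_I|=\hat g_I^{\cA\lbag_r d}$ for the ground basic term $\hat g_I\deff\Count{\bar y_{\setminus I}}{\tilde\phi_I(\bar y_{\setminus I})}$ obtained by turning the free variables of $\tilde\phi_I$ into bound ones; it has width $k-|I|\le k$ and $q$-rank at most $\ell$. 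The list $(\hat g_I)_{I\subseteq[k]}$ is then as required, and it is computable via the algorithm of Lemma~\ref{lem:fo-removal}.

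For part (b), write $u(x)=\Count{(y_1,\ldots,y_{k-1})}{\phi(x,y_1,\ldots,y_{k-1})}$, regard $\phi$ as a formula with free variables $x_1\deff x,x_2,\ldots,x_k$, and again fix $\cA$ and $d$. For an argument $a\in A$, I would partition the set of tuples $(a,\bar c)\in A^k$ satisfying $\phi$ by $I\deff\setc{i\in[k]}{(a,\bar c)_i=d}$, exactly as in part (a), and use the bijection above. The key distinction is whether $1\in I$: if $a=d$, then only the sets $I$ with $1\in I$ occur, and then $x_1\notin\bar x_{\setminus I}$, so $\tilde\phi_I$ is a $\tilde\sigma_r$-sentence and $\hat g_I\deff\Count{\bar y_{\setminus I}}{\tilde\phi_I}$ is a ground basic term; if $a\ne d$, then only the sets $I$ with $1\notin I$ occur, $x_1$ stays free in $\tilde\phi_I$, and binding all remaining variables yields a unary basic term $\hat u_I(x_1)$. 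Both families have width $\le k$ and $q$-rank $\le\ell$. The same counting then gives $u^\cA[d]=\sum_{I:\,1\in I}\hat g_I^{\cA\lbag_r d}$ and $u^\cA[a]=\sum_{I:\,1\notin I}\hat u_I^{\cA\lbag_r d}[a]$ for $a\ne d$, which is what the lemma claims.

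I do not expect a genuine obstacle: all the logical content is already packed into Lemma~\ref{lem:fo-removal}, and the proof is a routine consequence. The only points requiring care — and hence the most delicate step — are the bookkeeping around the bijection $\bar a\mapsto\bar a_{\setminus I}$ (one needs $|A|\ge 2$ so that $\cA\lbag_r d$ is a legal structure with non-empty universe, and one must verify that the coordinates of $\bar b$ never equal $d$), the observation that width and $q$-rank are inherited from $\tilde\phi_I$, and, in part (b), the case split $a=d$ versus $a\ne d$ according to whether $1\in I$, which is what determines whether the resulting basic term is ground or unary.
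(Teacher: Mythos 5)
Your argument is correct and matches the paper's proof: the paper likewise applies Lemma~\ref{lem:fo-removal} to obtain the formulas $\tilde\phi_I$ for all $I\subseteq[k]$, then forms the ground basic terms from those with $1\in I$ and the unary basic terms from those with $1\notin I$, with the same implicit partition-and-project bijection doing the counting. Your write-up just makes the bijection and the bookkeeping more explicit than the (rather terse) original.
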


\begin{proof}
  We only prove assertion (b); the proof of (a) is similar. Let
  \[
  u(x_1)\quad :=\quad \Count{(x_1,\ldots,x_k)}{\phi(x_1,\ldots,x_k)},
  \]
  where $\phi(\bar x)$ is an
  $\FO^+[\sigma]$-formula
\NewText{in $\FOplus[\ell,q]$.
}We apply
Lemma~\ref{lem:fo-removal} to $\phi(\bar x)$ and obtain formulas
$\tilde\phi_I(\bar x_{\setminus I})$ for all $I\subseteq [k]$. Let
$\psi_1(\bar x_1),\ldots,\psi_m(\bar x_m)$ be an enumeration of all
formulas $\tilde\phi_I(\bar x_{\setminus I})$ with $1\in I$, and let 
$\vartheta_1(x_1,\bar x'_1),\ldots,\allowbreak\vartheta_n(x_1,\bar x'_n)$ be an enumeration of all
formulas $\tilde\phi_I(\bar x_{\setminus I})$ with $1\not\in I$. We
let $\hat g_i:=\Count{\bar x_i}{\psi_i(\bar x_i)}$ and $\hat
u_j(x_1):=\Count{\bar x_j'}{\vartheta_j(x_1,\bar x_j')}$.
\end{proof}

\section{Nowhere dense structures} 
\label{sec:nd}

The concept of nowhere dense graph classes tries to capture the
intuitive meaning of ``sparse graphs'' in a fairly general, yet
still useful way. The original definition of nowhere dense classes
(see \cite{nesoss12}), which is relatively complicated, refers to the
edge densities of ``flat minors'' of the graphs in the class. The
definition has turned out to be very robust, and there are are several
seemingly unrelated characterisations of nowhere dense graph
classes. Most useful for us is a characterisation in terms of a the
so-called ``splitter game'' due to \cite{grokresie14}, which we use as
our definition.

Let $G$ be a graph and ${\lrounds},r>0$. 
The~$({\lrounds},r)$-\emph{splitter
  game} on~$G$ is played by two players called \emph{Connector} and
\emph{Splitter} as follows. We let~$G_0:=G$. In round~$i{+}1$ of the
game, Connector chooses an element~$a_{i+1}\in V(G_i)$. Then Splitter
chooses an element $b_{i+1}\in N_r^{G_i}(a_{i+1})$. If $N_r^{G_i}(a_{i+1})\setminus
\{b_{i+1}\}=\emptyset$, then Splitter wins the game. Otherwise, the
game continues with 
\[
  G_{i+1} \quad :=\quad G_i\big[N_r^{G_i}(a_{i+1})\setminus
\{b_{i+1}\}\big]\,.
\]
If Splitter has not won after~${\lrounds}$ rounds, 
Connector wins.

A \emph{strategy} for Splitter is a function~$f$ that associates to
every partial play $(a_1, b_1, \dots, a_i,b_i)$ with associated
sequence~$G_0, \dots, G_i$ of graphs and move~$a_{i+1}\in V(G_i)$ by
Connector a $b_{i+1}\in N_r^{G_i}(a_{i+1})$.  A
strategy~$f$ is a \emph{winning strategy} for Splitter in
the~$({\lrounds},r)$-splitter game on~$G$ if Splitter wins every play in
which she follows the strategy~$f$. If Splitter has a winning strategy,
we say that she \emph{wins} the~$({\lrounds},r)$-splitter game on~$G$.

For a class $\Class$ of graphs and a function $\lambda:\NN\to\NN$, we say that
Splitter wins the \emph{$\lambda$-splitter game on $\Class$} if for
every $r\in\NN$ and every $G\in\Class$ she wins the
$(\lambda(r),r)$-splitter game on $G$.
A class $\Class$ of graphs is \emph{nowhere dense} if there is a
function $\lambda:\NN\to\NN$ such that Splitter wins the $\lambda$-splitter game on $\Class$.
If $\lambda$ is
computable, the class
$\Class$ is \emph{effectively nowhere dense}. A class $\Class$ of 
structures is \emph{(effectively) nowhere dense} if the class of
Gaifman graphs of all structures in $\Class$ is (effectively) nowhere dense.

It follows from \cite{grokresie14} that a class $\Class$ of graphs is
nowhere dense (in the sense just defined) if and only if it is nowhere dense in the
sense of \cite{nesoss12}. 

It is easy to see  that if Splitter wins the
  $(\lrounds,r)$-splitter game on a graph $G$, then she also wins it on
  all subgraphs of $G$. Thus if we close a nowhere dense class of
  graphs under taking subgraphs, the class remains nowhere dense.

Finally, we mention  that for every nowhere dense
class $\Class$ of graphs there is a function $f$ such
that for every $\epsilon >0$ and every
graph $G\in\Class$, if $|V(G)|\ge
f(\epsilon)$
then $\|G\|\le |V(G)|^{1+\epsilon}$ (see
\cite{nesoss12}).

\subsection{Sparse neighbourhood covers}
Let us now turn to \emph{sparse} neighbourhood covers of nowhere dense
graphs. Let $\cX$ be an $r$-neighbourhood cover of a graph $G$ (or of
some structure $\cA$ with Gaifman graph $G$). The \emph{radius} of
$\cX$ is the least $s$ such that all clusters of $\cX$ have radius 
at most $s$, that is, for every $X\in\cX$ there is a $c\in X$ such that
$X\subseteq N_s^{G[X]}(c)$. We call each such $c$ an \emph{$s$-centre}
of $X$. In the following, an \emph{$(r,s)$-neighbourhood cover} of $G$
is an $r$-neighbourhood cover of radius at most $s$.

The \emph{degree} of a vertex~$a\in V(G)$ in a neighbourhood cover $\cX$ 
is the number of clusters $X\in\cX$ such that $a\in
  X$. The
  \emph{maximum degree} $\Delta(\cX)$ is the maximum of the degrees of all
  vertices $a\in V(G)$. Note that
\(
 \sum_{X\in \cX}|X| 
 \ \ \le\ \
 |V(G)|\cdot\Delta(\cX).
\)

\begin{theorem}[\cite{grokresie14}]\label{thm:alg-covers}
  Let~$\Class$ be a nowhere dense class of graphs. Then there is a
  function $f$ and
  an algorithm that, given an $\epsilon>0$, an $r\in\NN$, and a
  graph~$G\in\Class$ with  $n:=|V(G)|\geq f(r,\epsilon)$, computes
  an $(r,2r)$-neighbourhood cover of $G$ of maximum degree at
  most~$n^\epsilon$ in
  time~$f(r,\epsilon)\cdot n^{1+\epsilon}$. Furthermore, if~$\Class$ is
  effectively nowhere dense, then~$f$ is computable.
\end{theorem}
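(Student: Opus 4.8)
The proof is the technical core of \cite{grokresie14}, and the plan is to run the splitter game — adopted above as the definition of nowhere denseness — as the engine of a recursive construction of the cover. Fix $G\in\Class$, $r\in\NN$, $\epsilon>0$, put $n:=|V(G)|$, and choose a radius $R$ that is a suitable constant multiple of $r$ (one may take $R=4r$). Since $\Class$ is nowhere dense there is a $\lambda$ such that Splitter wins the $(\lambda(R),R)$-splitter game on every graph in $\Class$; set $p:=\lambda(R)$, which is a constant once $r$ is fixed, and recall that — as observed just before the theorem — Splitter then also wins on every subgraph of $G$, hence on every induced subgraph arising in the construction. Fix a threshold $t$ which is a small polynomial in $n$; its precise exponent, a function of $\epsilon$ and $p$, will be pinned down by the degree analysis.

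First I would describe a recursive procedure taking an induced subgraph $H$ of $G$ reached after $i\le p$ splitter rounds, together with the round counter $i$, and returning a family of clusters of radius $\le 2r$ that $r$-covers $V(H)$, along the following lines. Greedily compute a maximal set $D\subseteq V(H)$ whose elements have pairwise $H$-distance $>r$; by maximality the balls $N^{H}_{2r}(d)$, $d\in D$, have radius $\le 2r$ and together $r$-cover $V(H)$. Call $b$ \emph{overloaded} if it lies in more than $t$ of these balls; then $b$ has more than $t$ elements of $D$ within distance $2r$, all contained in $N^{H}_{R}(b)$. Finalise $N^{H}_{2r}(d)$ for every $d$ whose ball contains no overloaded vertex. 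If an overloaded vertex exists, let Connector play a most overloaded vertex $b$, let $b'$ be Splitter's response, recurse on $H\big[N^{H}_{R}(b)\setminus\{b'\}\big]$ with round counter $i{+}1$ to handle the $r$-balls sitting inside that region, additionally finalise the single cluster $N^{H}_{2r}(b')$ to repair the $r$-balls destroyed by deleting $b'$, and iterate on the still-uncovered part of $H$, whose vertices are no more overloaded than $b$ was. After $p$ rounds Splitter has won, so the ball Connector plays has at most one further vertex and the recursion halts with trivial clusters.

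Correctness follows by induction over the recursion: non-overloaded vertices are covered by the finalised light balls, while an $r$-ball contained in $N^{H}_{R}(b)$ is covered by the recursive call if it misses $b'$ and by the repair cluster $N^{H}_{2r}(b')$ otherwise, using that $R$ is large enough that the distances relevant to these balls agree in $H$ and in $G$. The radius bound $2r$ is built in. For sparsity one shows, by induction on the remaining budget $p-i$, that the family returned by the recursion has maximum degree bounded by a fixed function of $r$ times a fixed power of $t$; since $p$ is constant and $t$ is a small polynomial in $n$, fixing the exponent of $t$ in terms of $\epsilon$ and $p$ forces maximum degree $\le n^{\epsilon}$ once $n\ge f(r,\epsilon)$, the threshold $f(r,\epsilon)$ absorbing the hidden constants.

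For the running time, each level of the recursion performs only breadth-first searches and bookkeeping linear in the size of the current subgraph, and Splitter's moves are found by exhaustively searching the game tree, which has depth $\le p$ and which it is legitimate to search since we already know Splitter wins and $p$ is known from $\lambda$; as $\sum_{X\in\cX}|X|\le n\cdot\Delta(\cX)\le n^{1+\epsilon}$, the total work telescopes to $f(r,\epsilon)\cdot n^{1+\epsilon}$. If $\Class$ is effectively nowhere dense then $\lambda$, hence $p$ and $f(r,\epsilon)$, are computable, and Splitter's winning strategy is then computable via the bounded game-tree search, so $f$ is computable. The main obstacle is the sparsity bookkeeping: one must organise the iteration so that the repair clusters $N^{H}_{2r}(b')$ — one per recursion node, and the recursion tree may be wide — do not, accumulated over all branches, push any vertex's degree back above $n^{\epsilon}$, while still guaranteeing that deleting $b'$ leaves no $r$-ball uncovered; tuning the interplay between the scattered set $D$, the overload threshold $t$, and the game radius $R$ so that the recursion both terminates (which is exactly where nowhere denseness enters) and keeps every cluster within radius $2r$ is the delicate point, carried out in detail in \cite{grokresie14}.
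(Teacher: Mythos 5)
The paper does not actually prove Theorem~\ref{thm:alg-covers}: as the citation in the theorem header indicates, the result is imported from \cite{grokresie14} and used as a black box in Section~\ref{sec:nd}. So there is no proof in this paper to compare your sketch against; what follows is an assessment of the sketch on its own terms, as a reconstruction of \cite{grokresie14}.

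Your sketch does convey the right high-level shape of that argument---the splitter game bounds the depth of a recursion that iteratively deletes a vertex and recurses on a bounded-radius piece---but as written it has gaps beyond the one you flag. The most concrete one is in the covering invariant itself. When you recurse on $H' := H\big[N_R^{H}(b)\setminus\{b'\}\big]$, the recursive call produces a family whose clusters contain $N_r^{H'}(v)$ for every $v\in V(H')$; to cover $v$ in $H$ you instead need a cluster containing $N_r^{H}(v)$. Two things can fail: a vertex $v$ at $H$-distance up to $R$ from $b$ may have $N_r^{H}(v)$ reaching distance $R{+}r$ from $b$ and hence leaving $N_R^{H}(b)$; and even when $N_r^{H}(v)\subseteq N_R^{H}(b)$, deleting $b'$ can sever shortest paths, so that $N_r^{H'}(v)\subsetneq N_r^{H}(v)\setminus\{b'\}$. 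The repair cluster $N_{2r}^{H}(b')$ takes care of those $v$ with $\dist^{H}(v,b')\le r$, but says nothing about the second failure mode for the remaining $v$'s; the phrase ``using that $R$ is large enough that the distances\ldots agree in $H$ and in $G$'' does not address the $H$-versus-$H'$ discrepancy, which recurs at every recursion level and interacts with the radius bound. This is precisely the kind of issue the Removal Lemma in Section~\ref{subsec:RemovalLemma} has to engineer around with the auxiliary distance predicates $S_i$, and an analogous device is needed here. Separately, the termination and sparsity argument for the outer ``iterate on the still-uncovered part'' loop is asserted rather than shown: a bound on the overload of the surviving vertices does not by itself bound the number of outer iterations, and it is exactly the interaction of that loop with the per-node repair clusters and the recursion width that has to be controlled to get maximum degree $n^{\epsilon}$. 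You do flag this as the delicate point carried out in \cite{grokresie14}, which is honest, but it means the sketch as given does not yet constitute a proof.
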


We remark that the construction of \cite{grokresie14} also yields,
together with an $(r,2r)$-neighbourhood cover $\cX$ of $G$, a
function $\cen:\cX\to V(G)$ that associates with each cluster $X\in\cX$, a
$2r$-centre $\cen(X)$ for $X$. Moreover, it is easy to see that for a
given neighbourhood cover $\cX$ of $G$ we can compute in linear time a data
structure that associates with each $X\in\cX$ the list of all $a\in
V(G)$ with $\cX(a)=X$.

\subsection{The main algorithm}

In this section, we complete the proof of
Lemma~\ref{lem:main_FOunC-query-eval}. We fix a numerical predicate
collection $(\Ps,\ar,\sem{.})$ and a signature $\sigma$. 
Let $\Class$
be a  nowhere dense class of structures, and let
$\cG_{\Class}$ be the class of the Gaifman
graphs of all structures in $\Class$. Without loss of
generality we may assume that $\cG_{\Class}$ is closed under taking
subgraphs and that $\Class$ is the class of all
structures whose Gaifman graph is in $\cG_{\Class}$. Let
$\lambda:\NN\to\NN$ such that Splitter wins the $\lambda$-splitter
game on $\cG_{\Class}$.

{\itshape
We need to
design an algorithm with $\Ps$-oracle which receives as input an
$\epsilon>0$, a 
$\sigma$-structure $\A$ from $\Class$
and an $\FOunC(\Ps)[\sigma]$-expression $\xi$ which is
either a sentence $\phi$ or a ground term $t$.
The algorithm decides whether $\A\models\phi$ 
and computes $t^\A$, respectively. The algorithm's running time
is $f(p,\epsilon)n^{1+\epsilon}$, where $p:=\size{\xi}$ is the size of the input expression
and $n:=|A|$ is the order of the input structure.
}

Our algorithm is similar to the model-checking algorithm for
$\FO$-sentences on nowhere dense classes of graphs from
\cite{grokresie14}
\NewText{(also see \cite{GS2026} for a simplified algorithm)}.
The design and analysis of our algorithm relies
on subroutines and results from \cite{grokresie14}. However, we
present a high level outline of the algorithm that should be
accessible without knowledge of \cite{grokresie14}.

The Decomposition Theorem~\ref{thm:normalformFOunC} reduces the
evaluation of $\FOunC(\Ps)[\sigma]$-sentences and ground terms to the
evaluation of first-order sentences and
cl-terms over some signature $\tau\supseteq\sigma$ that extends
$\sigma$ by relation symbols of arity $\leq 1$. Note that every
$\tau$-expansion of a $\sigma$-structure $\cA$ has the same Gaifman
graph as $\cA$ and hence also belongs to $\Class$. The evaluation of
first-order sentences has been taken care of in \cite{grokresie14}.
The evaluation of
cl-terms can further be reduced to basic cl-terms. In fact, it is not
important that we have cl-terms; the important thing is that we have
basic terms with at most one free variable.

So all that remains
is the evaluation of basic terms, either ground terms $g$ or unary
terms $u(x_1)$. To simplify the
notation, we just assume that these terms are in our original
signature $\sigma$. Moreover, we focus on unary terms here; ground
terms can be dealt with similarly.

Hence, the input of our
algorithm is an $\epsilon>0$, a $\sigma$-structure $\cA$ and a
unary basic term 
$u(x_1)$
of width $k$ and
\NewText{belonging to $\FOplus[\ell,q]$ for some $\ell,q\in\NN$ with $\ell\leq q-k$.
}As usual, we let $r=f_q(\ell)$. 
Our algorithm is supposed to compute $u^{\cA}[a]$ for all $a\in A$.

The algorithm proceeds in the following steps.
\begin{enumerate}
\item Let 
 $\delta:=\frac{\epsilon}{2\lambda(2kr)}$. 

If $|A|<f(rk,\delta)$ for the function $f$ of
  Theorem~\ref{thm:alg-covers}, 
  evaluate $t$ by brute force and stop.

  Otherwise, compute a $(kr,2kr)$-neighbourhood co\-ver $\cX$ of
  $\cA$ of maximum degree at most $n^{\delta}$, where $n:=|A|$. In
  addition, compute for
  each $X\in\cX$ a $2kr$-centre $\cen(X)$ and the set of
  all elements $a\in A$ with $\cX(a)= X$.
\item Applying the Localisation Lemma (Lemma~\ref{lem:LocalityAndTermsNormalform}), 
  compute
\NewText{basic local $(\ell{-}1,q,\rHalbe,q)$-sentences
}$\chi_1,\ldots,\chi_s$ and
  cover-cl-terms
  $\big(\hat{g}_J,\,\hat{u}_J(x_1)\big)_{J\subseteq[s]}$ with
  parameters $(r,k,kr)$ and of
\NewText{signature $\sigma$ and belonging to $\FOplus[\ell,q]$
}such that the
  evaluation of $u(x_1)$ in $\cA$ reduces to the evaluation of these
  sentences and terms in $\cA,\cX$.

\item Evaluate the
\NewText{sentences
}$\chi_1,\ldots,\chi_s$ in $\cA$ using the algorithm of \cite{grokresie14}.

 Obviously, there is exactly one set $J\subseteq [s]$ such that
 $\cA\models\chi_J$ for $\chi_J\deff\Und_{j\in J}\chi_j \und\Und_{j\in
   [s]\setminus J}\nicht\chi_j$.
\item\label{item:line5} Compute 
  $\hat{u}_J^{\A,\cX}[a] = u^\A[a]$ for every $a\in A$.
\end{enumerate}
It remains to explain in detail how the last step is carried
out.
Consider a basic cover-cl-term $\hat{u}(x_1)$ that occurs in
$\hat{u}_J(x_1)$ and is of the form
\[
\hat{u}(x_1)
 \ \ := \ \ \Count{(x_2,\ldots,x_{k'})}{\big(\,}
     \delta_{G,{r'}}(x_1,\ldots,x_{k'})
     \;\und
     \psi(x_1,\ldots,x_{k'})
   \,\big)
\]
for a connected graph $G\in\Graphclass_{k'}$, a $k'\le k$, an $r'\leq r$, and an $\FO^+[\sigma]$-formula
$\psi(x_1,\ldots,x_{k'})$
\NewText{in $\FOplus[\ell,q]$.
}

Let $a\in A$ and $X:=\cX(a)$. As $\cX$ is a $kr$-neighbour\-hood cover
and $G$ is connected,
$X$ $r$-covers
every tuple $\bar a=(a_1,\ldots,a_{k'})$ such that $G_{\bar
  a,{r'}}=G$ and $a_1=a$. 
Recall from Definition~\ref{def:ccl-term3} that $\hat{u}^{\A,\cX}[a]$ is
the number of tuples $\ov{a}\in A^{k'}$ such that
$a_1=a$ and 
$G_{\ov{a},r'}=G$ and
$\A[X]\models \psi[\ov{a}]$. 

To be able to compute this number efficiently, we
 introduce a fresh unary relation symbol $Q$ and let $\cB_X$ be the
 $(\sigma\cup\{Q\})$-expansion of $\A[X]$ where $Q$ is
 interpreted by the set of all $a\in A$ such that $\cX(a)=X$. 
Let
 \[
 t(x_1)\ \ :=\ \ \Count{(x_2,\ldots,x_{k'})}{\big(\, }
     \delta_{G,r'}(x_1,\ldots,x_{k'})
     \;\und\;
     \psi(x_1,\ldots,x_{k'}) \;\und\;Q(x_1)
   \,\big)\,.
 \]
\NewText{Note that $\psi(x_1,\ldots,x_{k'})\und Q(x_1)$ is a formula in 
$\FOplus[\ell,q]$, and hence $t(x_1)$ belongs to $\FOplus[\ell,q]$  
(and is of signature $\sigma\cup\set{Q}$).
}What our algorithm needs to do now is evaluate $t(x_1)$ in the structures $\cB_X$,
for all $X\in\cX$. This is done in the following steps, which form the
expanded version of step~\ref{item:line5} of the algorithm.
\begin{enumerate}
\item[\ref{item:line5}.]
  For all $X\in\cX$
  \begin{enumerate}[a.]
  \item Compute $\cB_X$
  \item Let $c:=\cen(X)$, and let $d$ be  Splitter's answer if
    Connector plays $c$ in the 
    first round of the 
$(\lambda(2kr),2kr)$-splitter game on $G_{\cA}$. 
It is explained in
\cite{grokresie14}
\NewText{(cf.\ also \cite[Section~9]{GS2026})}
how $d$ can be computed efficiently.
\NewText{Actually, the efficient computation of Splitter's
  winning strategy in the game is more 
  complicated and requires considering the history of the game; it is
  not sufficient to just look at the first move. For a  detailed
  treatment of this issue we refer the reader to
  \cite{grokresie14,GS2026}.
}\item Compute $\cB':=\cB_X\lbag_rd$.
  \item Apply the Removal Lemma (Lemma~\ref{lem:removal}) to the unary
    basic term $t(x_1)$ and
    recursively evaluate the resulting basic terms in $\cB'$.
  \item 
    For each $a\in Q^{\cB_X}$, use the results of the recursive
    calls to compute $t^{\cB_X}[a]$ according to the Removal Lemma.
  \end{enumerate}
\end{enumerate}
The algorithm terminates with a recursion depth of at most
$\lambda(2kr)$, because in the recursive call we only need to consider
the 
$(\lambda(2kr){-}1,2kr)$-splitter game.

Let us analyse the running time of the algorithm. We express the
running time in terms of the order $n$ of the input structure and the
number $\lrounds$ of rounds of the Splitter game. Initially, we have
$\lrounds=\lambda(2kr)$.
 The dependence on the class $\Class$, the
 signature $\sigma$, and the parameters
\NewText{$k,q,\ell$
}goes into the constants; of course $\lambda(2kr)$ depends on
\NewText{$\Class,k,q,\ell$.
}If $n\le n_0$ for some constant
 $n_0$ (depending on $\Class,\sigma,k,q,\ell,\epsilon$) then the algorithm
terminates in constant time in line $1$. If 
$\lrounds=1$, then
Splitter wins the game in $1$ round, which means that every connected
component of $G_{\cA}$ only consists of a single vertex. Thus either
$|A|=1$ and
the algorithm terminates in 
line~1 in constant time or the algorithm
makes $n$ recursive calls and each of these recursive calls terminates in
constant time. Thus we have the two basic equations
$T(n,\lrounds)=O(1)$ if $n\le n_0$, and 
$T(n,1)=O(n)$ otherwise.

Suppose $n>n_0$ and $\lrounds> 1$. Lines~1--3 can be carried out in time
$O(n^{1+\delta})$ (by Theorem~\ref{thm:alg-covers} for line~1).
To analyse the time spent on line~\ref{item:line5}, let
$X\in\cX$ be of size $n_X:=|X|$. Lines \ref{item:line5}.a--e can be  carried out in time
$O(\|\cB_X\|)=O(\|\cA[X]\|)=O(n_X^{1+\delta})$, because $\cA[X]$ is
from the nowhere dense class $\Class$. The
recursive calls in line \ref{item:line5}.e require time $O(T(n_X,\lrounds{-}1))$. 
Thus the time spent on line~\ref{item:line5} is
\ \(
\sum_{X\in\cX}O\big(T(n_X,\lrounds{-}1)+n_X^{1+\delta}\big),
\) \
and, recalling that 
$\delta=\epsilon/2\lambda(2kr)$ with $\lrounds=\lambda(2kr)$, we obtain a recurrence equation
\[
T(n,\lrounds) \quad = \quad
\sum_{X\in\cX}O\big(T(n_X,\lrounds{-}1)+n_X^{1+\epsilon/2\lrounds}\big)\
+ \ O\big(n^{1+\epsilon/2\lrounds}\big). 
\]
The same recurrence was obtained in \cite{grokresie14}, and it was
shown there that it yields the desired running time
$O(n^{1+\epsilon})$.
This completes our description and
analysis of the algorithm and hence the proof of Lemma~\ref{lem:main_FOunC-query-eval}.

\section{Open questions}\label{sec:conclusion}

To conclude, let us point out some open questions.

(1)~Can our approach be generalised to an extension of $\FO$ which, apart from COUNT, also supports
further aggregate operations of SQL, such as SUM and AVG?

(2)~Can our approach be generalised to support database updates? 
In \cite{KS_LICS17} this was achieved for $\FOCP$ on bounded degree classes. But for other classes, 
e.g., planar graphs or classes of bounded local tree-width (let alone nowhere dense classes), 
this is open even for $\FO$.

\medskip

\NewText{The article \cite{GSarxiv2017} also posed the following question.
}\medskip

(3)~Can our approach be generalised to obtain an algorithm that enumerates the query result
with constant-delay? In \cite{DBLP:conf/icdt/SegoufinV17} such an algorithm was obtained for $\FO$-queries
on classes of locally bounded expansion. Can our machinery of Sections~\ref{sec:normalform} and \ref{sec:Locality}
help to generalise the result to nowhere dense classes?

\medskip

\NewText{This question was answered in \cite{SSV} by providing an algorithm that
enumerates the tuples in the result of an $\FO$-query with
constant-delay after an almost linear time preprocessing phase.
Their solution relied on the flawed ``Theorem~7.1'' of
\cite[Theorem~7.1]{GSarxiv2017,GS2018}; it can be fixed in a
straightforward way by using the new Corollary~\ref{cor:locality2}
instead of \cite[Theorem~7.1]{GSarxiv2017,GS2018}, and
by considering formulas in $\FOplus[\ell,q]$ rather than ``formulas of
$q$-rank at most $\ell$''.
}

\end{document}